 \providecommand{\F}{\mathbb{F}}
\date{}
\declaretheorem[name=Lemma]{lem}
\declaretheorem[name=Corollary]{cor}
\newtheorem{prop}[lem]{Proposition}
\newtheorem{defn}{Definition}
\newtheorem{rmk}{Remark}
\DeclareMathOperator*{\argmin}{arg\,min}
\def \mC {\mathcal{C}}
\def \mC {\mathcal{C}}
\def \Xi {{X^{[i]}}}
\title{Leakage-Resilient Secret Sharing with Constant Share Size}
\author{Ivan Tjuawinata\thanks{I. Tjuawinata is with the Strategic Centre for Research on Privacy-Preserving Technologies and Systems, Nanyang Technological University, Singapore 637553 (e-mail: ivan.tjuawinata@ntu.edu.sg).}%
,Chaoping Xing\thanks{C. Xing is with School of Electronic Information and Electric Engineering, Shanghai Jiao Tong University, Shanghai, 200240 China (e-mail: xingcp@sjtu.edu.cn).}%
}%
\institute{}
\begin{document}

\maketitle

\begin{abstract}
In this work, we consider the leakage resilience of algebraic-geometric (AG for short) codes based \textit{ramp} secret sharing schemes extending the analysis on the leakage resilience of linear threshold secret sharing schemes over prime fields that is done by Benhamouda et al. Since there does not exist any explicit efficient construction of AG codes over prime fields, we consider constructions over prime fields with the help of concatenation method and constructions of codes over field extensions. Extending the Fourier analysis done by Benhamouda et al., one can show that concatenated algebraic geometric codes over prime fields do produce some nice leakage-resilient secret sharing schemes.

One natural and curious question is whether AG codes over extension fields produce better leakage-resilient secret sharing schemes than the construction based on concatenated AG codes. Such construction provides another advantage compared to the construction over prime fields using concatenation method. It is clear that AG codes over extension fields give secret sharing schemes with smaller reconstruction for a fixed privacy parameter $t.$ In this work, it is also confirmed that indeed AG codes over extension fields have stronger leakage-resilience under some reasonable assumptions. These advantages strongly motivate the study of secret sharing schemes from AG codes over extension fields.

The current paper has two main contributions: (i) we obtain leakage-resilient secret sharing schemes with constant share sizes and unbounded numbers of players. (ii) via a sophisticated  Fourier Analysis, we analyze the leakage-resilience of secret sharing schemes from codes over extension fields. This is of its own theoretical interest independent of its application to secret sharing schemes from algebraic geometric codes over extension fields.
\keywords{Secret Sharing Scheme, Algebraic Geometric Code, Leakage Resilience}
\end{abstract}

\section{Introduction}
A secret sharing scheme, which enable a secret to be distributed among a group of players where each player receives a share, is a very important building block in the study of modern cryptography.
Intuitively, it allows for any authorized set of players to use their share to recover the original secret while the shares of any forbidden set of players contain no information of the original secret. It has found many applications in fields such as secure multiparty computation~\cite{BGW88,CCD88,SPDZ12}, distributed authorities~\cite{BF01,Ngu05}, fair exchange~\cite{AV04}, electronic voting~\cite{Ngu05,Sch99} and threshold cryptography~\cite{Des87,DF89,Ift06,Sho00,SG02}. There are also different variants of secret sharing schemes that provide different functionalities such as proactive secret sharing scheme~\cite{HJKY95,SW00,ZSV05}, verifiable secret sharing~\cite{Fel87,RB89,Ped92,Sta96} and computationally secure secret sharing scheme~\cite{Kra94,RP11,CLM17}.

All the variants we discussed so far have the same basic assumption, namely, any share is either fully corrupted or completely hidden from the adversary. However, such assumption may not be reasonable for all cases. Side-channel attacks may enable the adversary to get some partial information about all the shares. In this case, the previous investigations do not provide any privacy guarantee. Such leak can also be shown to have large risk towards the privacy of the secret. A simple example can be found from additive secret sharing scheme over the field $\mathbb{F}_{2^w}$ for some positive integer $w.$ In this scenario, it can be shown that leakage of just one bit from each share may leak a bit of the secret which can be used as a distinguisher between shares generated using the additive secret sharing scheme from uniformly random strings. Guruswami and Wootters~\cite{GW17} showed that when Shamir's secret sharing scheme is used in some settings, a full recovery of the whole secret is even possible from just one-bit leakage from each share. Due to the high usage of secret sharing schemes over $\F_{2^w}$ \cite{SPDZ2k,AFLNO16}, such vulnerability is important to be considered. To address this, recently, there has been a new research direction which considers the security of secret sharing scheme under such leakage~\cite{DP07,BGK14,GK18,GK182,ADN19,SV19,BDIR19,NS20}.

The existence of leakage resilient secret sharing schemes have been shown to be related to other fields. A leakage resilient secret sharing scheme can be used to build an MPC scheme that is secure against semi-honest adversary enjoying some local leakage of shares of uncorrupted players~\cite{BDIR19}. In another direction, having such leakage resilient secret sharing scheme shows that we cannot have the same secret sharing scheme or the related code to have the regenerating property with the same bandwidth since it is shown that any leak of such magnitude should not be sufficient to recover much information about the original secret.

\subsection{Existing Results}
Leakage resilient cryptography is a research topic that has attracted many attentions, (see for example \cite{DKL09,Dz06,GO96,ISW03,Koc96,KJJ99}). In particular, there have also been quite extensive studies on secret sharing schemes providing resilience against local leakage. Such research direction was first considered by Dziembowski and Pietrzak~\cite{DP07} and it can be mainly divided to two directions. The first direction is the study of non-linear secret sharing schemes specially constructed for their leakage resilience properties (see for example \cite{GK18,KMS18}). For this work, we are following another direction, which is to study a more general family of linear secret sharing schemes for their leakage resilience properties. We will discuss in more detail the works in this direction.

Recently, Benhamouda et al.~\cite{BDIR19} investigated the effect of leakage of partial information of all shares when threshold secret sharing schemes that are based on linear Maximum Distance Separable (MDS for short) codes over prime fields are considered. More specifically, given a secret element $s$ of a finite field $\F_q,$ the share for each party can be generated from linear combination(s) of $s$ along with some random field elements. Local leakage from each player can then be extracted as a function of his share. Such study was inspired by a result on regenerating codes by Guruswami and Wootters~\cite{GW17}. In their work, they discovered that in some settings, a secret that is being shared using Shamir's secret sharing schemes over any finite field of characteristic $2$ may be completely recovered just by using $1$ bit of leakage from each share. In order to investigate the extent of such attack, Benhamouda et al. considered the leakage resilience of general linear threshold secret sharing schemes over prime fields. The resilience of a scheme to the leakage can be measured by finding the statistical distance between the leak from different possible secrets. Intuitively, a secret sharing scheme is local leakage resilient against $\theta$ corruption and $\mu$-bit leakage if given the full shares of any $\theta$ players along with any $\mu$-bit information from each of the remaining shares, the adversary cannot learn much information regarding the original secret being secretly shared. By analyzing the leakage resilience of linear MDS codes over prime fields and using the close relation between linear threshold secret sharing schemes and linear MDS codes, Benhamouda et al. provided some leakage resilience measure for linear threshold secret sharing schemes over prime fields. Through this analysis, they discovered some families of additive secret sharing schemes that provide leakage resilience even when less than $1$ bit of randomness remains from each share. They have also identified some Shamir's secret sharing schemes that provide leakage resilience when a constant fraction of each share is leaked.

%

Such study on the leakage resilience of threshold secret sharing scheme over prime fields is then extended by Maji et al.~\cite{MPSW20}. In their work, they provide an improvement on the leakage resilience of threshold secret sharing schemes over prime fields. They then proved that with overwhelming probability, a secret sharing scheme based on a random linear MDS codes over prime fields is leakage resilient. 

Concurrently, Nielsen and Simkin~\cite{NS20} have also considered the leakage resilience of information theoretic threshold secret sharing schemes. Instead of considering the existence of threshold secret sharing schemes with strong leakage resilience capability, they provided a lower bound for the share length to ensure leakage resilience against unconditional adversary. Combined with the results in~\cite{BDIR19} and~\cite{MPSW20}, these works identify the range of parameters of a Shamir's secret sharing scheme that can provide some leakage resilience capability.

\subsection{Our Contribution}

Although Shamir's secret sharing scheme is widely used in various applications, it has some disadvantages. One of the disadvantages of using Shamir's secret sharing scheme is the requirement of the field size to be larger than the number of players. In consequence, the share size increases with the increase of the number of players. The aim of this paper is to construct an explicit family of leakage resilient \textit{ramp} secret sharing schemes with constant share size and unbounded number of players. In order to achieve this, we consider the use of algebraic geometric codes (AG codes for short). It is a well-known fact that codes over prime fields provide better leakage resilient secret sharing schemes than codes over extension fields. However, AG codes over prime fields cannot be explicitly constructed in polynomial time. So far, only AG codes over extension fields can be constructed in polynomial time \cite{GS96}. In order to overcome this challenge, we have two approaches: (i) consider leakage resilience secret sharing schemes from concatenated AG codes over prime fields (i.e., concatenate AG codes over extension fields with trivial codes over prime fields to get concatenated AG codes over prime fields); (ii) directly study the resilience of secret sharing schemes from AG codes over extension fields.

Note that although the first alternative of constructing leakage resilient secret sharing schemes with constant share size and unbounded number of players may be achieved more easily, in general, the parameter of the ramp secret sharing schemes constructed using the concatenation method can be less flexible. More specifically, ramp secret sharing schemes constructed using the concatenation method generally have larger reconstruction guarantee. Such restriction is not present when we consider a ramp secret sharing scheme arising from AG codes over an extension field. 

It is then natural to consider whether ramp secret sharing schemes constructed using concatenation method may provide a stronger leakage resilience. When considering the two types of ramp secret sharing schemes with similar number of players, number of corrupted players, privacy guarantee and leakage rate, we show that with some reasonable assumptions, the ramp secret sharing scheme defined over extension field can even provide a stronger leakage resilience. Such a comparison result can be found in Lemma~\ref{lem:mt1better}.

This shows that under some reasonable assumptions, in addition of having smaller reconstruction guarantee, an AG code based ramp secret sharing scheme defined over an extension field also provides a better leakage resilience compared to one obtained by concatenation method. This shows that consideration of secret sharing schemes defined over an extension field may provide us with an interesting family of leakage resilient secret sharing schemes over extension fields.

We proceed to state the main result about ramp secret sharing schemes from AG codes over extension fields. We note that for secret sharing schemes over extension fields of characteristic $p,$ to obtain a leakage resilient secret sharing scheme, the leakage rate from each share must be less than $\log p$ bits. This is due to the existence of a distinguishing attack utilizing $\log p$ bits leakage from each share. A more detailed discussion on this attack can be found in Section~\ref{subsec:LSSS}.

\begin{restatable}{mt}{mtone}\label{mt:1}
Let $q=p^2$ for some prime $p$ and $\F_q$ be a finite field of $q$ elements. Then there exists an infinite family of ramp secret sharing schemes that can be explicitly constructed over $\F_q$ in polynomial time with share size $O(1)$ bits for unbounded number of players $N,$ privacy $T$ and reconstruction $R=T+\frac{2N}{\sqrt{q}-1}+O(1)$ such that any of such secret sharing schemes is $(\theta,\mu, \epsilon)$-LL resilient for any $\theta<T$ and $\mu<\log p$ where

\begin{equation*}
\epsilon=\min\left(q^{\left(N-T-\frac{N}{\sqrt{q}-1}\right)}\cdot c_\mu^{T-\theta+1},2^{(N-T-1)(5\mu+1)+\mu}\cdot (c_\mu^\prime)^{2T-N-\theta}\right)
\end{equation*}
with $c_\mu=\frac{2^\mu \sin\left(\frac{\pi}{2^\mu}\right)}{p\sin\left(\frac{\pi}{p}\right)}$ and $c_\mu^\prime=\frac{2^\mu\sin\left(\frac{\pi}{2^\mu}+\frac{\pi}{2^{4\mu}}\right)}{p\sin\left(\frac{\pi}{p}\right)}.$
\end{restatable}

Apart from considering an extension field, an alternative way to reduce the share size requirement with respect to the number of players is by concatenating a secret sharing scheme over an extension field with the trivial code over a base field. Such method results in a secret sharing scheme with a larger number of players each holding a share of smaller size. Applying this method, we have the following main result.

\begin{restatable}{mt}{mttwo}~\label{mt:2}
Let $q$ be a prime and $\F_q$ be a finite field of $q$ elements. Then there exists an infinite family of ramp secret sharing schemes that can be explicitly constructed over $\F_q$ in polynomial time with share size $O(1)$ bits for unbounded number of players $N,$ privacy $T$ and reconstruction $R=\frac{N}{2}+T+\frac{2N}{q-1}+O(1)$ such that any of such secret sharing schemes is $(\theta, \mu,\epsilon)$- LL resilient for any $\theta<T$ and $\mu < \log q$ where

\begin{equation*}
\epsilon=\min\left(q^{\left(N-2T-\frac{N}{q-1}\right)}\cdot c_\mu^{T-\theta+1},2^{(N-\theta-T-1)(5\mu+1)+\mu}\cdot (c_\mu^\prime)^{2T-N}\right)
\end{equation*}
with $c_\mu=\frac{2^\mu \sin\left(\frac{\pi}{2^\mu}\right)}{q\sin\left(\frac{\pi}{q}\right)}$ and $c_\mu^\prime=\frac{2^\mu\sin\left(\frac{\pi}{2^\mu}+\frac{\pi}{2^{4\mu}}\right)}{q\sin\left(\frac{\pi}{q}\right)}.$
\end{restatable}

\begin{rmk}{\rm The reconstruction parameter $R$ in Main Theorem \ref{mt:2} is $\Omega(N)$ larger than that given in Main Theorem \ref{mt:1}. Furthermore, under some reasonable assumptions, the secret sharing schemes from algebraic geometry codes over extension fields given in Main Theorem \ref{mt:1} have stronger leakage resilience property than those given in Main Theorem \ref{mt:2} (The comparison is given in Lemma \ref{lem:mt1better}). This shows a strong motivation to study secret sharing schemes over extension fields.
}\end{rmk}

\subsection{Our Techniques}
The calculation of leakage resilience depends on the statistical distance between the outputs of leakage functions given that the inputs are either a secret sharing using a specific scheme or a set of uniformly and independently sampled strings of the same lengths. In order to facilitate such calculation, we closely follow the proof idea of~\cite{BDIR19}. More specifically, we utilize Fourier Analysis to transform the statistical distance formula to a sum of some Fourier coefficients. When the underlying field is a prime field $\F_p,$ such Fourier coefficients are always pairwise distinct $p$-th roots of unity. Because of this, bounding such sum can then be reduced to bounding the sum of some pairwise distinct $p$-th roots of unity. When we generalize the underlying field to an arbitrary finite field $\F_q$ for some prime power $q=p^w$ where $p$ is a prime and $w$ is a positive integer, the Fourier coefficients are now defined as $\omega_p^{Tr(\cdot)}$ where $Tr(\cdot)$ is the field trace of $\F_{p^w}$ over $\F_p.$ Since field trace function is not injective, although the Fourier coefficients are still $p$-th roots of unity, they may no longer be pairwise distinct. So, instead of bounding the sum of $s$ pairwise distinct $p$-th roots of unity, we need to derive a bound of integer combinations of $p$-th roots of unity with the sum of the coefficients being fixed to $s.$ Such upper bound can be found in Lemma~\ref{lem:newsumpthroot}.

Here we provide some intuition on how to establish such upper bound. Note that each $\omega_p^{Tr(i)}$ has the same length while having different directions. Sum of two of such vectors are maximized when they have the same direction while it decreases as the angle between the two vectors increases. Hence the strategy to maximize the sum is to have as many vectors with the same directions as possible to be summed up. Once such vectors are exhausted, to maximize the sum, we need to choose vectors with the smallest angle with the current sum. Once such vector is chosen, we can again maximize the sum by adding vectors with the same direction as the one we just chose until such vectors are exhausted. We can keep doing this until we have summed up $s$ of such vectors. Lemma~\ref{lem:newsumpthroot} confirmed that such strategy indeed leads to a tight upper bound of the sum.

Having such upper bound, we cannot apply it directly to the ramp secret sharing schemes we are interested in. This is because the adversary learns not only the leak from the shares, but he also learns the full share of some of the players he corrupted. Having such information, the remaining shares no longer follow the distribution of the original secret sharing scheme. Hence, instead of analysing the leakage resilience of the secret sharing scheme itself, we need to consider the leakage resilience for a more general linear or affine codes that we obtain after the corrupted shares are already considered.  In the following, we present the leakage resilience results that applies to any linear codes with leakage defined over its coordinates.

\noindent\textbf{Leakage Resilience of Linear Codes over Arbitrary Finite Fields.} Let $C\subseteq\F_{p^w}^n$ be an $[n,k,d\leq n-k+1]$ code with the dual code $C^\bot$ which is an $[n,n-k,d^\bot\leq k+1]$ code. Let ${\boldsymbol \tau}=(\tau^{(1)},\cdots, \tau^{(n)})$ be any family of leakage functions, each $\tau^{(i)}$ having $\mu(<\log p)$-bit output. Letting $c_\mu=\frac{2^\mu\sin\left(\frac{\pi}{2^\mu}\right)}{p\sin\left(\frac{\pi}{p}\right)}$ and $c_\mu^\prime =\frac{2^\mu\sin\left(\frac{\pi}{2^\mu}+\frac{\pi}{2^{4\mu}}\right)}{p\sin\left(\frac{\pi}{p}\right)},$ the statistical distance between the distribution of ${\boldsymbol \tau}(\mathbf{c})$ when $\mathbf{c}$ is uniformly sampled from $C$ and that when $\mathbf{c}$ is uniformly sampled from $\F_q^n$ is upper bounded by $\frac{1}{2}p^{w(n-k)}c_\mu^{d^\bot}$ and $\frac{1}{2} 2^{(5\mu+1)(n-d^\bot)+\mu}\cdot (c_\mu^\prime)^{2d^\bot-n-2}.$

Having such result, we can then obtain the leakage resilience of our secret sharing schemes by identifying the linear code $C$ such that the distribution of the leak given the knowledge of the corrupted share is equivalent to the distribution of the leakage obtained from the coordinates of codewords in $C.$

\subsection{Organization}
This paper is organized as follows. In Section~\ref{sec:prelim}, we define some notations that will be used throughout the paper and briefly discuss some basic concepts that are useful in our discussion. Section~\ref{sec:FourAn} provides some review on Fourier Analysis and discussion on some results that are essential in our investigation of leakage resilience for secret sharing schemes defined over finite fields. Section~\ref{sec:LRLinCod} discusses and proves the leakage resilience results for general linear codes defined over arbitrary finite fields. This result is then used to provide some leakage resilience property of additive and Shamir's secret sharing schemes that are defined over arbitrary finite fields. Lastly, such result is then used to investigate the local leakage resilience of ramp secret sharing schemes that are defined based on algebraic geometric codes, which can be found in Section~\ref{sec:AGSSS}. Due to page limit, we move some proofs to Supplementary Material.

\section{Notation and Preliminaries}\label{sec:prelim}
\subsection{Notation}\label{subsec:Not}
Throughout the paper, we use $\mathbb{C}$ to denote the field of complex numbers, $\mathbf{i}=\sqrt{-1}\in \mathbb{C}$ the imaginary number and $\mathbb{U}_1=\{x\in \mathbb{C}:\|x\|=1\}.$ For a prime $p,$ let $q$ be a power of $p,$ i.e. $q=p^{w}$ for some positive integer $w.$ For any positive integers $a$ and $b,$ we denote by $\F_q^a$ and $\F_q^{a\times b}$ the set of vectors over $\F_q$ of length $a$ and the set of matrices over $\F_q$ with $a$ rows and $b$ columns respectively.

Let $S$ be any finite set. We use $x\leftarrow S$ to denote that $x$ is sampled with uniform distribution over the elements of $S.$ For any positive integer $n,$  we denote by $\mathcal{U}_n,$ the uniform distribution over $\F_q^n.$ Furthermore, consider two distributions $f$ and $g$ over $S.$ We define the statistical distance between the two distributions $SD(f,g)=\frac{1}{2}\sum_{s\in S} |f(s)-g(s)|.$ We write $f\equiv g$ if $SD(f,g)=0.$ Furthermore, for any $\epsilon>0,$ we say $f\approx_\epsilon g$ if $SD(f,g)\leq \epsilon.$

For any function $f,$ we denote by $\mathbb{E}_{x\leftarrow S}(f(x))$ the expectation of $f(x)$ when $x$ is sampled uniformly at random from the finite set $S.$ In other words, $\mathbb{E}_{x\leftarrow S}(f(x))$ $=\frac{1}{|S|}\sum_{x\in S}f(x).$ We use $[n]$ to represent the set of positive integers of value at most $n.$ That is, $[n]=\{1,\cdots, n\}.$

For any polynomials $p(x)$ and $f(x),$ we define $p(x)^a||f(x)$ if $p(x)^a| f(x)$ but $p(x)^{a+1}\not|f(x).$

Given a set $T$ of size $n$ where its elements are indexed by integers from $1$ to $n, T=\{t_1,\cdots, t_n\}$ and $S_T\subseteq T,$ we denote by $\mathbf{v}^{(S_T)}$ the vector of length $|S_T|$ containing $v_i$ for any $t_i\in S_T.$ In particular, this also applies when $T=[n]$ and $t_i=i.$

\subsection{Linear Secret Sharing Scheme}\label{subsec:LSSS}

Let $\mathtt{U}=\{U_1,\cdots, U_n\}$ be a finite set of players. A forbidden set $\mathcal{F}$ is a family of subsets of $\mathtt{U}$ such that for any $A\in \mathcal{F}$ and $A'\subseteq A,$ we must have $A'\in \mathcal{F}.$ For any $t<n,$ we define $\mathcal{F}_{t,n}$ to be the forbidden set containing all subsets of $\mathtt{U}$ of size at most $t.$ On the other hand, a qualified set $\Gamma$ is a family of subsets of $\mathtt{U}$ such that for any $B\in \Gamma$ and $B\subseteq B',$ we must have $B'\in \Gamma.$ For any $r\leq n,$ we define $\Gamma_{r,n}$ to be the qualified set containing all subsets of $\mathtt{U}$ of size at least $r.$ For any forbidden set $\mathcal{F}$ and a qualified set $\Gamma$ over $\mathtt{U}$ such that $\mathcal{F}\cap \Gamma=\emptyset,$ the pair $(\mathcal{F},\Gamma)$ is called an access structure.

A secret sharing scheme with access structure $(\mathcal{F},\Gamma)$ over $\F_q$ on $\mathtt{U}$ is a pair of functions $(Share,Rec)$ where $Share$ is a probabilistic function that calculates the random shares for the $n$ players given the secret. For any secret $s\in \F_q,$ if $(\mathbf{s}_1,\cdots, \mathbf{s}_n)=Share(s),$ for any $A\subseteq \mathtt{U},$ we denote by $\mathbf{s}^{A}=(\mathbf{s}_i)_{U_i\in A},$ the vector containing the shares of all players $U_i\in A.$ On the other hand, $Rec$ accepts shares from a set of players in $\mathtt{U}$ and attempt to recover the original secret that satisfies the following requirements:
\begin{enumerate}
\item For any $B\in \Gamma,$ given the shares of $U_i\in B, Rec$ returns the original secret.
\item For any $A\in \mathcal{F},$ the shares of $U_i\in A$ does not give any information regarding the secret. That is, the probability that the function $Share$ outputs the given shares to $U_i$ in $B$ is independent of the value of the secret.
\end{enumerate}

A secret sharing scheme with access structure $(\mathcal{F},\Gamma)$ such that $\mathcal{F}_{t,n}\subseteq \mathcal{F}$ and $\Gamma_{r,n}\subseteq \Gamma$ for some $0<t<r<n$ is called a ramp secret sharing scheme providing $t$ privacy and $r$ reconstruction. If $r=t+1,$ we call it a threshold secret sharing scheme.

A linear secret sharing scheme (LSSS) with access structure $(\mathcal{F},\Gamma)$ over $\F_q$ on $\mathtt{U}$ is defined as follows. Fix a positive integer $\mathtt{m}$ and $V_1,\cdots, V_n$ subspaces of $\F_q^{\mathtt{m}}.$ We also fix $\mathbf{u}\in \F_q^{\mathtt{m}}\setminus\{0\}$ which can be set to be $(1,0,0,\cdots,0)$ without loss of generality. For any $A\subseteq [n],$ define $V_A=\sum_{i\in A}V_i$ which is the smallest subspace containing $V_i$ for all $i\in A.$ We further fix $V_{i}^\ast$ a basis of $V_i.$ An LSSS $(Share,Rec)$ is defined as follows. Let $s\in \F_q$ be the secret. Then the $Share$ function starts by choosing a random linear map $\phi:\F_q^{\mathtt{m}}\rightarrow \F_q$ such that $\phi(\mathbf{u})=s.$ The share for player $U_i$ is then defined as $\mathbf{s}_i=\phi(V_i^\ast)=\{\phi(\mathbf{x}):\mathbf{x}\in V_i^\ast\}.$

Now we define the function $Rec.$ Let $B\subseteq \mathtt{U}$ be such that $\mathbf{u}\in V_B.$ Due to the linearity of $\phi$ and the fact that $\mathbf{u}\in V_B,$ there exists a vector $\mathbf{w}\in \F_q^{(\sum_{U_i\in B}|V_i^\ast|)}$ such that $s=\phi(\mathbf{u})=\mathbf{w}\cdot (\mathbf{s}_i)_{U_i\in B}.$ This further proves that the qualified set $\Gamma$ of this LSSS is $\Gamma=\{B\subseteq\mathtt{U}:\mathbf{u}\in V_B\}.$ A simple algebraic manipulation tells us that $\mathcal{F}=\{A\subseteq\mathtt{U}:~\exists\kappa:\F_q^{\mathtt{m}}\rightarrow\F_q, \kappa(\mathbf{x})=0~\forall \mathbf{x}\in V_A$ and $\kappa(\mathbf{u})=1\}.$ For the remainder of the paper, we only consider linear secret sharing schemes.

To model leakage-resilient secret sharing, first we discuss the local leakage model as proposed in~\cite{GK18} and~\cite{BDIR19}. In short, the adversary can provide an arbitrary independent leakage functions for each party with a fixed output length that will provide the leakage of information from each of the share to the adversary. We note that this information is provided in addition to the capability of the adversary to control some number of players. The following definitions formalize the concept of leakage function and leakage resilience.

\begin{defn}[Leakage Function]\label{def:LF} Let ${\boldsymbol\tau}=(\tau_1,\cdots, \tau_n)$ be a vector of functions where for each $i=1,\cdots, n, \tau_i:(\F_q)^{|V_i^\ast|}\rightarrow \F_2^{\mu}.$ Then for a secret sharing $(\mathbf{s}_1,\cdots,\mathbf{s}_n)=Share(s)$ of a secret $s\in \F_q,$ define $(b_1,\cdots, b_n)={\boldsymbol\tau}(\mathbf{s}_1,\cdots, \mathbf{s}_n)=(\tau_1(\mathbf{s}_1),\cdots, \tau_n(\mathbf{s}_n)).$ Given a set of players $\Theta\subseteq \mathtt{U}$ and $\mu$ bits output leakage function ${\boldsymbol\tau}.$ We define the information learned by the adversary on a secret sharing $\mathbf{s}=(\mathbf{s}_1,\cdots, \mathbf{s}_n)=Share(s)$ as
\[Leak_{\Theta,{\boldsymbol\tau}}(\mathbf{s})\triangleq\left(\mathbf{s}^{(\Theta)},(\tau_i(\mathbf{s}_i))_{U_i\in\mathtt{U}\setminus\Theta}\right).\]
\end{defn}

Next, we define the concept of local leakage resilient or $LL$ resilience for short.

\begin{defn}[Local Leakage Resilience]\label{def:LLR}
Let $\Theta\subseteq\mathtt{U}$ be a set of players. A secret sharing scheme $(Share,Rec)$ is said to be $(\Theta,\mu,\epsilon)$-local leakage resilient (or $(\Theta,\mu,\epsilon)$-LL resilient) if for any leakage function family ${\boldsymbol \tau}=(\tau_1,\cdots, \tau_n)$ where each $\tau_j$ has a $\mu$-bit output and for every pair of secrets $s_0,s_1\in \F_q,$ we have
\[ SD\left(\left\{Leak_{\Theta,{\boldsymbol\tau}}(\mathbf{s}): \mathbf{s}=Share(s_0)\right\},\left\{Leak_{\Theta,{\boldsymbol\tau}}(\mathbf{s}): \mathbf{s}=Share(s_1)\right\}\right)\leq \epsilon.\]

For a positive integer $\theta\leq n,$ we say a secret sharing scheme $(Share,Rec)$ is $(\theta,\mu,\epsilon)$-LL resilient if for any $\Theta\subseteq \mathtt{U},|\Theta|\leq \theta,$ it is $(\Theta,\mu,\epsilon)$-LL resilient.
\end{defn}

We conclude this subsection by providing an attack that shows that any linear secret sharing scheme over $\F_{p^{w}}$ does not provide any leakage resilience if the leak is beyond $\log p$ bits. The attack provided below is a generalization on the attacks on additive and Shamir's secret sharing scheme over fields of characteristic two. Recall that for any $\alpha\in \F_q,$ by fixing an $\F_p$-basis of $\F_q,~\{\lambda_1,\cdots, \lambda_{w}\},$ we can see $\alpha$ as a vector of length $w$ over $\F_p.$ If $\alpha=\sum_{i=1}^{w}\lambda_i \alpha_i$ for some $\alpha_i\in \F_p,$ we define $\varphi_p(\alpha)=\alpha_1\in \F_p.$

Consider a linear secret sharing scheme with access structure $(\mathcal{F},\Gamma)$ over $\F_q$ on $\mathtt{U}$ such that $\Gamma\neq\emptyset.$ Then it is easy to see that $\mathtt{U}\in \Gamma.$ Set $\mathtt{m}, V_1,\cdots, V_n$ subspaces of $\F_q^{\mathtt{m}}$ with their corresponding bases $V_1^\ast,\cdots, V_n^\ast$ and $\mathbf{u}=(1,0,\cdots,0)\in \F_q^{\mathtt{m}}\setminus\{0\}$ as defined above. For $i=1,\cdots, n,$ let $V_i^\ast = \{\mathbf{v}_{i,1},\cdots, \mathbf{v}_{i,d_i}\}.$ Recall that by the analysis above, since $\mathtt{U}\in \Gamma,$ there exists $\mathbf{w}^\ast=(w_{1,1},\cdots, w_{n,d_n})\in \F_q^{\sum_{i=1}^n |V_i^\ast|}$ such that $\mathbf{u}=\mathbf{w}^\ast\cdot (\mathbf{v}_{1,1},\cdots, \mathbf{v}_{1,d_1},\cdots,\mathbf{v}_{n,1},\cdots, \mathbf{v}_{n,d_n})^T.$
For $i=1,\cdots,n,$ we define $\mathbf{u}_i=(w_{i,1},\cdots, w_{i,d_i})\cdot (\mathbf{v}_{i,1},\cdots, \mathbf{v}_{i,d_i})^T.$ Furthermore, we define $s_i=(w_{i,1},\cdots,w_{i,d_i})\cdot \mathbf{s}_i.$ It is easy to see that $s_i$ can be locally computed by player $U_i$ and  $s=\sum_{i=1}^n s_i.$ Note that due to the $\F_p$-linearity of $\varphi_p,$ we have $\varphi_p(s)\equiv \sum_{i=1}^n \varphi_p(s_i)\pmod{p}.$ Hence if we define $\tau_i(\mathbf{s}_i)=\varphi_p(s_i)$ where $s_i$ is defined as discussed above, we have a local leakage function that outputs $\log(p)$-bit leakage from each shares and these leakages can be used to distinguish the value of $\varphi_p(s)$ with probability $1.$ This shows that any linear secret sharing schemes that is defined over $\F_{p^{w}}$ is not $(\theta,\log(p),\epsilon)$ for any $\epsilon<1.$ Because of this, for the remainder of this paper, we always assume that the length of the leakage $\mu$ is less than $\log(p).$

\subsection{Linear Codes}\label{subsec:LinCode}
In this section, we briefly discuss the concept of linear codes and in particular, a family of linear codes called Algebraic Geometric code or AG code for short.

\begin{defn}[Linear Codes]
Let $n,k,d$ be non-negative integers such that $d$ and $k$ are at most $n.$ A linear code $C$ over $\F_q$ with parameter $[n,k,d]$ is a subspace $C\subseteq \F_q^n$ of dimension $k$ such that for any non-zero $\mathbf{c}\in C\setminus\{\mathbf{0}\}, wt_H(\mathbf{c})\geq d$ where for any vector $\mathbf{x}=(x_1,\cdots, x_n), wt_H(\mathbf{x})$ is defined to be the Hamming weight of $\mathbf{x},$ i.e., $wt_H(\mathbf{x})=|\{i:x_i\neq 0\}|.$ By the Singleton bound, we have the relation $d\leq n-k+1;$ a code that satisfies this bound with an equality is called a Maximum Distance Separable code or MDS code for short.

A linear $[n,k,d]$ code $C$ can be represented by its generator matrix $G\in \F_q^{k\times n}.$ So given $G,$ we have $C=\{\mathbf{x}\cdot G: \mathbf{x}\in \F_q^k\}.$

Given a linear code $C,$ its dual code $C^\bot$ is defined to be the dual subspace of $C$ over $\F_q^n.$ That is, $C^\bot=\{\mathbf{x}\in \F_q^n: \langle \mathbf{c},\mathbf{x}\rangle =0~\forall \mathbf{c}\in C\}$ where $\langle \cdot,\cdot\rangle$ denotes the inner product operation. Then, $C^\bot$ is an $[n,n-k,d^\bot\leq k+1]$ code. A parity check matrix $H\in \F_q^{(n-k)\times n}$  of $C$ is a generator matrix of $C^\bot.$
\end{defn}

In the following, we provide the characterization of a linear code based on its generator matrix and its parity check matrix.

\begin{prop}\label{prop:distprop}
An $[n,k,d]$ linear code $C$ has minimum distance $d$ if and only if every set of $d-1$ columns of its parity check matrix $H\in \F_p^{(n-k)\times n}$ are linearly independent and there exists a set of $d$ columns of $H$ that is linearly dependent. Furthermore, for its generator matrix $G\in \F_p^{k\times n},$ the minimum distance of $C$ is $d$ if and only if any submatrix $G'\in \F_p^{k\times(n-d+1)}$ has full rank while there exists a submatrix $\hat{G}\in \F_p^{k\times (n-d)}$ that does not have a full rank.
\end{prop}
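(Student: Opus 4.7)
The plan is to prove both characterizations by directly translating the weight of a codeword into an algebraic condition on the columns of $H$, respectively on column-submatrices of $G$, and then reading off the equivalences.

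For the parity-check characterization, I would start from the defining equivalence $\mathbf{c}\in C \iff H\mathbf{c}^T=\mathbf{0}$. If $\mathbf{c}$ has support $S=\{i:c_i\neq 0\}$, then $H\mathbf{c}^T=\mathbf{0}$ says precisely that the columns of $H$ indexed by $S$ are linearly dependent, with dependence coefficients given by the nonzero entries $c_i$. Conversely, any nontrivial dependence among a column set $S$ of size $s$ yields a nonzero codeword of Hamming weight $s$ supported on $S$. Therefore the minimum weight of a nonzero codeword equals the size of the smallest linearly dependent set of columns of $H$. Saying this equals $d$ is exactly the claim that every $d-1$ columns of $H$ are linearly independent while some $d$ columns are linearly dependent.

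For the generator-matrix characterization, write $\mathbf{c}=\mathbf{x}G$ with $\mathbf{x}\in\F_q^k$. The key observation is that the zero positions of $\mathbf{c}$ are exactly the columns of $G$ annihilated by $\mathbf{x}$ on the left. Hence $\mathbf{c}$ has weight at most $n-s$ iff there is a choice of $s$ columns of $G$ whose $k\times s$ submatrix has a nontrivial left kernel, i.e.\ fails to have rank $k$. From this I get the two directions. First, minimum distance at least $d$ is equivalent to: for every nonzero $\mathbf{x}$, no set of $n-d+1$ columns is annihilated by $\mathbf{x}$, which is exactly the condition that every $k\times(n-d+1)$ submatrix $G'$ of $G$ has full rank $k$. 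Second, the minimum distance being \emph{exactly} $d$ (rather than strictly larger) is equivalent to the existence of a nonzero codeword of weight $d$, i.e.\ a nonzero $\mathbf{x}$ vanishing on some $n-d$ columns, which translates to the existence of a $k\times(n-d)$ submatrix $\hat{G}$ of $G$ of rank strictly less than $k$.

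There is no real obstacle here; the statement is a standard textbook fact and the two parts are dual to each other in spirit. The only care needed is bookkeeping: keeping separate the "$\geq d$" direction (which gives the rank condition on every $(n-d+1)$-column submatrix, respectively linear independence of every $d-1$ columns of $H$) from the "$=d$" direction (which produces the single deficient submatrix, respectively the single dependent $d$-set of columns of $H$), and avoiding off-by-one errors when passing between support size and number of zero coordinates.
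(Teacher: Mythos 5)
Your argument is correct: translating codeword weight into linear dependence of the corresponding columns of $H$, and into left-kernel/rank conditions on column submatrices of $G$, is the standard proof of this standard fact, and the paper itself states the proposition without proof. One small point of care you already implicitly handle: a nontrivial dependence among $s$ columns gives a codeword of weight \emph{at most} $s$ (some coefficients may vanish), which is exactly what is needed to conclude that the minimum weight equals the size of the smallest dependent column set; and in the generator-matrix part the operative condition is rank $<k$ (nontrivial left kernel) rather than literal ``full rank'' of the $k\times(n-d)$ submatrix, which matters only in the MDS case $n-d<k$ and which your phrasing gets right.
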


\subsection{Algebraic Geometric Code}\label{subsec:AGCode}
In this section, we briefly discuss algebraic-geometric codes (AG codes for short) and their properties. Before we discuss the construction of AG codes, first we provide a brief discussion on algebraic function field. For a complete discussion on algebraic function field and the construction of AG code, see~\cite{St08}.

Let $\mathbb{K}$ be a field. We say $\mathbb{F}$ is an algebraic function field over $\mathbb{K}$ in one variable if $\mathbb{F}$ is a finite algebraic extension of $\mathbb{K}(x)$ for some element $x\in \mathbb{F}$ that is transcendental over $\mathbb{K}.$ The algebraic closure $\overline{\mathbb{K}}$ of $\mathbb{K}$ in $\mathbb{F}$ contains all the elements in $\mathbb{F}$ that is algebraic over $\mathbb{K}.$ The field $\mathbb{K}$ is called the full constant field of $\mathbb{F}$ if $\overline{\mathbb{K}}=\mathbb{K}.$

We define a map $\varphi:\mathbb{F}\rightarrow\mathbb{Z}\cup\{\infty\}$ as a discrete valuation of $\mathbb{F}/\mathbb{K}$ if: (i) $\varphi(0)=\infty$ and $\varphi(x)=0$ for any $x\in \mathbb{K}\setminus\{0\},$ (ii) For any $x,y\in \mathbb{F}, \varphi(xy)=\varphi(x)+\varphi(y)$ and $\varphi(x+y)\geq \min\{\varphi(x),\varphi(y)\}$ and (iii) $\varphi^{-1}(1)\neq \emptyset.$

Any discrete valuation $\varphi$ defines a valuation ring $\mathcal{O}\triangleq\{x\in \mathbb{F}:\varphi(x)\geq 0\}$ which is a local ring with its maximal ideal $P\triangleq \{x\in \mathbb{F}:\varphi(x)>0\},$ which is called a place. The collection of all places in $\mathbb{F}$ is denoted by $\mathbb{P}_\mathbb{F}.$ The discrete valuation and the valuation ring corresponding to a place $P$ are denoted by $\varphi_P$ and $\mathcal{O}_P$ respectively. By definition, $F_P\triangleq \mathcal{O}_P / P$ is a finite field extension of $\mathbb{K}.$ The degree of $P,$ denoted by $\deg(P)$ is defined to be the extension degree $[F_P:\mathbb{K}].$ We say $P$ is a rational place if $\deg(P)=1.$  For any $x\in \mathcal{O}_P,$ we define $x(P)\in F_P,$ the residue class of $x$ modulo $P.$ So if $P$ is a rational place, $x(P)\in \mathbb{K}.$

For a non-zero $x\in \mathbb{F},$ we denote its principal divisor by ${\rm div}(x)$ which is defined as ${\rm div}(x)=\sum_{P\in \mathbb{P}_\mathbb{F}}\varphi_P(x) P.$ We also define ${\rm div}(x)_0=\sum_{P: \varphi_P(x)>0} \varphi_P(x) P$ and ${\rm div}(x)_\infty = -\sum_{P:\varphi_P(x)<0} \varphi_P(x) P$ the zero and pole divisors of $x$ respectively. The divisor group, ${\rm Div}(\mathbb{F})$ is the free Abelian group generated by all elements of $\mathbb{P}_{\mathbb{F}}.$ An element $D\in {\rm Div}(\mathbb{F})$ is called a divisor of $\mathbb{F}$ and it can be written as $D=\sum_{P\in \mathbb{P}_{\mathbb{F}}} n_P(D) P$ where $n_P(D)\in \mathbb{Z}$ and $n_P(D)=0$ for all but finitely many $P\in \mathbb{P}_{\mathbb{F}}.$ The finite set of all places $P\in \mathbb{P}_{\mathbb{F}}$ where $n_P(D)\neq 0$ is called its support, denoted by ${\rm Supp}(D).$ We denote by $0$ a divisor of $\mathbb{F}/\mathbb{K}$ with $n_P(0)=0$ for all $P\in \mathbb{P}_\mathbb{F}.$ For any two divisors $D$ and $D',$ we say $D\geq D'$ if for any $P\in \mathbb{P}_{\mathbb{F}},$ we have $n_P(D)\geq n_P(D').$

Next we define the Riemann-Roch space associated to a divisor. For any divisor $D$ of $\mathbb{F}/\mathbb{K},$ the Riemann-Roch space $\mathcal{L}(D)$ associated with $D$ is defined as $\mathcal{L}(D)\triangleq \{x\in \mathbb{F}\setminus\{0\}: {\rm div}(x)+D \geq 0\} \cup \{0\}.$
It can be shown that $\mathcal{L}(D)$ is a finite dimensional $\mathbb{K}$-vector space with dimension $\dim_{\mathbb{K}}(\mathcal{L}(D))\geq \deg(D)+1-\mathfrak{g}$ where $\mathfrak{g}$ is the genus of $\mathbb{F}$ and equality holds if $\deg(D)\geq 2\mathfrak{g}-1$~\cite{St08}.

Consider the case when $\mathbb{K}=\F_q$ and $\F=\F_q(x).$ Then every discrete valuation of $\mathbb{F}/\mathbb{F}_q$ is either $\varphi_{(p(x))}$ for some irreducible polynomial $p(x)$ or $\varphi_\infty$ where for any $f/g\in \mathbb{F}_q(x), \varphi_{(p(x))}(f/g) = a-b$ where $p(x)^a||f$ and $p(x)^b||g$ and $\varphi_\infty(f/g)=\deg(g)-\deg(f).$

Now we are ready to construct algebraic geometric codes. Let $q=p^w$ for some prime $p$ and a positive integer $w.$ Assume that $\mathbb{F}/\F_q$ is a function field of genus $\mathfrak{g}$ and at least $n+1$ pairwise distinct rational places. Label the $n+1$ points as $P_1,\cdots, P_n,Q$ and $\mathcal{P}=(P_1,\cdots, P_n).$ Set $m\geq 2\mathfrak{g}-1$ and $D=mQ.$ We define the AG code $C(D,\mathcal{P})$ as $C(D,\mathcal{P})=\{(x(P_1),\cdots, x(P_n)):x\in \mathcal{L}(D)\}.$ Note that for any $i=1,\cdots, n,$ we have $n_{P_i}(x)\geq 0,$ which implies that $x\in \mathcal{O}_P$ making $x(P_i)$ to be a well defined element of $\F_q.$ Hence $C(D,\mathcal{P})\subseteq \F_q^n.$ Furthermore, $C(D,\mathcal{P})$ is an $[n,k=m-\mathfrak{g}+1,d\geq n-m]-$ linear code while its dual code $C^\bot$ is an $[n,n-m+\mathfrak{g}-1,d^\bot\geq m-2\mathfrak{g}+2]-$ linear code~\cite[Section $2.2$]{St08}.

We can construct a ramp secret sharing scheme based on algebraic-geometric code~\cite{CC06}. The construction is done in the following manner. First, let $F/\mathbb{F}_q$ be a function field of genus $\mathfrak{g}$ and at least $n+2$ pairwise distinct rational places, denoted by $P_0,\cdots, P_n,Q.$ Fix the privacy level $t$ such that $1\leq t<n-2\mathfrak{g}.$ Set $m=t+2\mathfrak{g}$ and set $D=(2\mathfrak{g}+t)\cdot Q.$ For any secret $s\in \F_q,$ the $Share$ function works as follows. First, we choose $f\in \mathcal{L}(D)$ at random conditioned on $f(P_0)=s.$ We then define the share for $U_j$ to be $s_i=f(P_i)\in \F_q.$

Let $(\mathcal{F},\Gamma)$ be the access structure of the secret sharing scheme defined above. Then it can be shown~\cite{CC06} that $\mathcal{F}_{t,n}\subseteq \mathcal{F}$ and $\Gamma_{2\mathfrak{g}+t+1,n}\subseteq \Gamma.$ Hence the secret sharing scheme defined above is a ramp secret sharing scheme providing $t$ privacy and $r=2\mathfrak{g}+t+1$ reconstruction.

We note that such Riemann-Roch space along with the corresponding AG codes and the ramp secret sharing schemes defined over it is shown to exist for any prime powers $q=p^w$~\cite{Ser83,NX01}. However, such codes are only shown to be constructable with polynomial time when $q$ is a square of a prime or $w=2$~\cite{GS96}. Furthermore, when we consider $q$ is a square of a prime, we can have up to approximately $\frac{\mathfrak{g}}{\sqrt{q}-1}$ rational points.

Next we provide the surjectivity of the $Share$ function.

\begin{lem}\label{lem:surj}
Let $F/\F_q$ be a function field of genus $\mathfrak{g}$ with $P_{\infty},P_1,\cdots, P_\ell$ its $\ell+1$ pairwise distinct rational points. Consider the linear map $\varphi:\mathcal{L}(mP_\infty)\rightarrow \F_q^\ell$ such that $\varphi(f)=(f(P_1),\cdots, f(P_\ell))$ for any $f\in \mathcal{L}(mP_\infty).$ If $m\geq 2\mathfrak{g}+\ell-1,$ then $\varphi$ is surjective.
\end{lem}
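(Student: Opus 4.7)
The plan is to apply rank–nullity to $\varphi$ and use Riemann–Roch to compute both the dimension of the domain and the dimension of the kernel exactly. Concretely, I would first identify $\ker(\varphi)$: an element $f\in\mathcal{L}(mP_\infty)$ lies in the kernel iff $f(P_i)=0$ for every $i=1,\dots,\ell$, which by definition of the discrete valuations at the rational places $P_i$ says $\varphi_{P_i}(f)\geq 1$ for all $i$. Combined with the pole condition $\varphi_{P_\infty}(f)\geq -m$ and $\varphi_P(f)\geq 0$ for all other places, this exactly means $\mathrm{div}(f)+mP_\infty-\sum_{i=1}^\ell P_i\geq 0$, i.e.\ $f\in \mathcal{L}\bigl(mP_\infty-\sum_{i=1}^\ell P_i\bigr)$.

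Next I would invoke Riemann–Roch on both divisors. Since $\deg(mP_\infty)=m\geq 2\mathfrak{g}+\ell-1\geq 2\mathfrak{g}-1$, we have
\[
\dim_{\F_q}\mathcal{L}(mP_\infty)=m-\mathfrak{g}+1.
\]
Likewise $\deg\bigl(mP_\infty-\sum_{i=1}^\ell P_i\bigr)=m-\ell\geq 2\mathfrak{g}-1$ (this is where the hypothesis $m\geq 2\mathfrak{g}+\ell-1$ is used in a crucial way), so Riemann–Roch again gives the exact formula
\[
\dim_{\F_q}\mathcal{L}\Bigl(mP_\infty-\sum_{i=1}^\ell P_i\Bigr)=m-\ell-\mathfrak{g}+1.
\]

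Finally I would combine these with the rank–nullity theorem applied to the $\F_q$-linear map $\varphi$:
\[
\dim_{\F_q}\mathrm{Im}(\varphi)=\dim_{\F_q}\mathcal{L}(mP_\infty)-\dim_{\F_q}\ker(\varphi)=(m-\mathfrak{g}+1)-(m-\ell-\mathfrak{g}+1)=\ell.
\]
Since $\mathrm{Im}(\varphi)\subseteq \F_q^\ell$ and both have dimension $\ell$, we conclude $\mathrm{Im}(\varphi)=\F_q^\ell$, i.e.\ $\varphi$ is surjective.

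There is no real obstacle here: the only delicate point is making sure both divisors $mP_\infty$ and $mP_\infty-\sum_{i=1}^\ell P_i$ are above the Riemann–Roch threshold $2\mathfrak{g}-1$ so that the equality case of the Riemann–Roch bound applies, and this is precisely the role played by the hypothesis $m\geq 2\mathfrak{g}+\ell-1$. Identifying the kernel with a Riemann–Roch space attached to the subtracted divisor is the standard evaluation-map argument used in the construction of AG codes, and can be stated in one line once the valuation-theoretic definition of $f(P_i)=0$ has been unpacked.
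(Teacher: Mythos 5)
Your proposal is correct and follows essentially the same route as the paper's own proof: identify the kernel as $\mathcal{L}\bigl(mP_\infty-\sum_{i=1}^\ell P_i\bigr)$, apply Riemann's theorem to both divisors (using $m-\ell\geq 2\mathfrak{g}-1$), and conclude by rank--nullity that the image has dimension $\ell$. No differences worth noting.
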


\begin{proof}
Obviously,  $Im(\varphi)$ is a subspace of $\F_q^\ell.$ We aim to prove that they have the same dimension, $\ell.$ First we consider the kernel of $\varphi.$

\begin{eqnarray*}
f\in Ker(\varphi) &\Leftrightarrow & f(P_i)=0~\forall i=1,\cdots, \ell, v_{P_\infty}(f)\geq -m\\
&\Leftrightarrow & v_{P_i}(f)\geq 1~\forall i=1,\cdots, \ell, v_{P_\infty}(f)\geq -m\\
&\Leftrightarrow & f\in \mathcal{L}(mP_{\infty}-P_1-\cdots-P_\ell)
\end{eqnarray*}

So $Ker(\varphi)=\mathcal{L}(mP_{\infty}-P_1-\cdots-P_\ell).$ By rank-nullity theorem, we obtain that $dim(Im(\varphi))=dim(\mathcal{L}(mP_\infty))-dim(Ker(\varphi)).$ Note that $deg(\mathcal{L}(mP_\infty))=m$ and $deg(\mathcal{L}(mP_\infty-P_1-\cdots-P_\ell)=m-\ell.$ By assumption of $m,$ we have $m\geq m-\ell\geq 2g-1.$ So by Riemann's theorem~\cite[Theorem $1.4.17$]{St08}, $dim(\mathcal{L}(mP_\infty))=deg(\mathcal{L}(mP_\infty))+1-\mathfrak{g}=m+1-\mathfrak{g}$ and $dim(\mathcal{L}(mP_\infty-P_1-\cdots-P_\ell))=deg(\mathcal{L}(mP_\infty-P_1-\cdots-P_\ell))+1-\mathfrak{g}=m-\ell+1-g.$ Hence $dim(Im(\varphi))=m+1-\mathfrak{g}-(m-\ell+1-\mathfrak{g})=\ell.$ This shows that $Im(\varphi)=\ell=dim(\F_q^\ell),$ completing the proof.
\end{proof}

Let $C=C(D,\mathcal{P})\subseteq \F_q^n$ for some $q=p^{w}.$ Now we identify $\mathbb{F}_{p^{w}}$ with $\mathbb{F}_p^{w}$ via an $\mathbb{F}_p$-isomorphism $\Pi.$ Then
\begin{equation*}
\hat{C}=\Pi(C)=\left\{(\Pi(c_1)\|\Pi(c_2)\|\cdots\|\Pi(c_n)): (c_1,\cdots, c_n)\in C\right\}\subseteq \F_p^{w n}
\end{equation*}
is a $p$-ary $[wn,k=w m-w \mathfrak{g}+w ,d\geq n-m]$-linear code with dual code $\hat{C}^\bot$ a $p$-ary $[w n,w n-w m+w \mathfrak{g}-w ,d^\bot\geq m-2\mathfrak{g}+2].$ Furthermore, the corresponding secret sharing scheme provides $t=m-2\mathfrak{g}$ privacy and $(w -1)n+2\mathfrak{g}+t+1$ reconstruction.

\section{Fourier Analysis}\label{sec:FourAn}
In this section, we provide a brief discussion on Fourier coefficients of a function along with some of the properties that are useful in the discussion later. For a more complete discussion, see~\cite{Gr07}.

Let $\mathbb{G}$ be any finite Abelian group. A character $\chi:\mathbb{G}\rightarrow\mathbb{U}_1$ is a group homomorphism between the group $\mathbb{G}$ and the multiplicative group $\mathbb{U}_1.$ That is, for any $a,b\in \mathbb{G}, \chi(a+b)=\chi(a)\cdot \chi(b).$ Let $\hat{\mathbb{G}}$ be the set of characters of $\mathbb{G}.$ Then equipped with point-wise product operation, $\hat{\mathbb{G}}$ is a group that is isomorphic to $\mathbb{G}.$ So we can write any element of $\hat{\mathbb{G}}$ by $\chi_g$ for some $g\in \mathbb{G}$ where the correspondence is done using a fixed isomorphism between $\mathbb{G}$ and $\hat{\mathbb{G}}.$

Note that we are interested in the case when $\mathbb{G}=\F_q$ where $q=p^{w }$ where $p$ is a prime number and $w$ is a positive integer. We denote by $\omega_p=e^{\frac{2\pi}{p}\mathbf{i}}\in \mathbb{C}$ the $p$-th root of unity.  Let $\alpha\in \F_q.$ It can be shown that $\chi_\alpha\in \widehat{\F_q}$ is defined to be $\chi_\alpha(x)=\omega_p^{{\rm Tr}_{\F_q/\F_p}(\alpha\cdot x)}$ where ${\rm Tr}_{\F_q/\F_p}(y)$ is the field trace of $\F_q$ over $\F_p.$ That is, ${\rm Tr}_{\F_q/\F_p}(y)=\sum_{i=0}^{w -1}y^{p^i}.$

\begin{defn}[Fourier Coefficients]
For functions $f:\mathbb{G}\rightarrow\mathbb{C},$ the Fourier basis is composed of the group of characters $\chi_g\in \hat{\mathbb{G}}.$ Then the Fourier coefficient $\hat{f}(\chi)$ corresponding to the character $\chi\in \hat{\mathbb{G}}$ is defined to be
\begin{equation*}
\hat{f}(\chi) = \mathbb{E}_{x\leftarrow \mathbb{G}}[f(x)\cdot \chi(x)]\in \mathbb{C}.
\end{equation*}
\end{defn}

Recall that $\hat{\mathbb{G}}$ is isomorphic to $\mathbb{G}.$ So we can identify an element $\alpha$ of $\mathbb{G}$ with the character $\chi_\alpha\in \hat{\mathbb{G}}.$ To simplify the notation, for any function $f:\mathbb{G}\rightarrow \mathbb{C}$ and $\alpha\in \mathbb{G},$ instead of writing $\hat{f}(\chi_\alpha),$ we write $\hat{f}(\alpha).$ Next we provide some existing properties of Fourier Transform.

\begin{lem}\label{lem:PIFIF}
Let $\mathbb{G}$ be a finite Abelian group and $\hat{\mathbb{G}}$ be its group of characters. We further let $f,g:\mathbb{G}\rightarrow\mathbb{C}$ be two functions. Then
\begin{enumerate}
\item (Parseval's Identity) We have
\begin{equation*}
\mathbb{E}_{x\leftarrow\mathbb{G}}\left[f(x)\cdot\overline{g(x)}\right]=\sum_{\chi\in \hat{\mathbb{G}}}\hat{f}(\chi)\cdot\overline{\hat{g}(\chi)}
\end{equation*}
where $\hat{f}(x)$ is the Fourier coefficient of $f$ corresponding to a character $\chi,$ i.e. $\hat{f}(\chi)=\mathbb{E}_{x\in \mathbb{G}}[f(x)\cdot \chi(x)]\in \mathbb{C}.$
In particular, $\|f\|_2=\|\hat{f}\|_2$ where $\|f\|_2^2=\mathbb{E}_{x\leftarrow\mathbb{G}}\left[|f(x)|^2\right]$ and $\|\hat{f}\|_2^2 = \sum_{\chi\in \hat{\mathbb{G}}}\left|\hat{f}(x)\right|^2.$
\item (Fourier Inversion Formula) For any $x\in \mathbb{G}, f(x)=\sum_{\chi\in \hat{\mathbb{G}}} \hat{f}(x)\cdot\overline{\chi(x)}.$
\end{enumerate}
\end{lem}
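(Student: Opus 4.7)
The plan is to derive both statements from the standard orthogonality relations for characters of a finite abelian group, starting from the one-line observation that for any non-trivial character $\psi : \mathbb{G} \to \mathbb{U}_1$, $\sum_{x \in \mathbb{G}} \psi(x) = 0$: picking $y \in \mathbb{G}$ with $\psi(y) \neq 1$ and translating the sum yields $\psi(y)\sum_x \psi(x) = \sum_x \psi(x+y) = \sum_x \psi(x)$, forcing the sum to vanish. From this I would deduce two orthogonality relations. First, for any $\chi, \chi' \in \hat{\mathbb{G}}$, $\mathbb{E}_{x \leftarrow \mathbb{G}}[\chi(x)\overline{\chi'(x)}] = \mathbf{1}[\chi = \chi']$, since $\chi\overline{\chi'}$ lies in $\hat{\mathbb{G}}$ and is trivial exactly when $\chi = \chi'$. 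Second, running the same argument inside the dual group (or invoking Pontryagin duality $\widehat{\hat{\mathbb{G}}} \cong \mathbb{G}$), one obtains the ``dual'' relation $\sum_{\chi \in \hat{\mathbb{G}}} \chi(g) = |\mathbb{G}|\cdot \mathbf{1}[g = 0]$ for every $g \in \mathbb{G}$.

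With these tools, the Fourier Inversion Formula drops out by expanding the right-hand side and exchanging the order of summation:
\[
\sum_{\chi \in \hat{\mathbb{G}}} \hat{f}(\chi)\overline{\chi(x)}
= \frac{1}{|\mathbb{G}|}\sum_{y \in \mathbb{G}} f(y)\sum_{\chi \in \hat{\mathbb{G}}} \chi(y)\overline{\chi(x)}
= \frac{1}{|\mathbb{G}|}\sum_{y \in \mathbb{G}} f(y)\sum_{\chi \in \hat{\mathbb{G}}} \chi(y-x)
= f(x),
\]
where I used $\overline{\chi(x)} = \chi(-x)$ (since $|\chi(x)|=1$) and then the dual orthogonality, which collapses the inner sum to $|\mathbb{G}|\cdot \mathbf{1}[y = x]$.

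For Parseval's Identity, I would substitute the inversion formulas for $f(x)$ and for $\overline{g(x)}$ into the expectation, giving $f(x)\overline{g(x)} = \sum_{\chi, \chi'} \hat{f}(\chi)\overline{\hat{g}(\chi')}\,\overline{\chi(x)}\chi'(x)$, and then apply the first orthogonality relation to collapse the double sum:
\[
\mathbb{E}_{x \leftarrow \mathbb{G}}\!\left[f(x)\overline{g(x)}\right]
= \sum_{\chi, \chi' \in \hat{\mathbb{G}}} \hat{f}(\chi)\overline{\hat{g}(\chi')}\cdot \mathbf{1}[\chi = \chi']
= \sum_{\chi \in \hat{\mathbb{G}}} \hat{f}(\chi)\overline{\hat{g}(\chi)}.
\]
The norm identity $\|f\|_2 = \|\hat{f}\|_2$ is then the $g = f$ special case. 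There is no real obstacle here: the content is purely the elementary orthogonality of characters, and the only care needed is bookkeeping around the paper's convention $\hat{f}(\chi) = \mathbb{E}_x[f(x)\chi(x)]$ (with $\chi$, not $\overline{\chi}$, inside the expectation), so that every conjugation on the $f,g$ side is matched correctly on the Fourier side.
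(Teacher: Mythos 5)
Your proof is correct: both parts follow exactly as you describe from the two character orthogonality relations, and you have handled the paper's slightly unusual convention $\hat{f}(\chi)=\mathbb{E}_x[f(x)\chi(x)]$ (with $\chi$ rather than $\overline{\chi}$ inside the expectation) consistently, which is the only place one could slip. The paper itself states Lemma~\ref{lem:PIFIF} without proof, citing it as a standard fact from the Fourier-analysis literature, so there is no in-paper argument to compare against; your derivation is the canonical one and fills that gap correctly.
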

We proceed by providing some analysis on some sums of roots of unity when $\mathbb{G}=\F_{p^{w }}$ for a prime $p.$ Suppose that $\omega_p=e^{\frac{2\pi \mathbf{i}}{p}}\in \mathbb{C}$ is a $p$-th root of unity. For any $S\subseteq \F_{p^{w }},$ we define $\omega_p^S\triangleq\sum_{i\in S} \omega_p^{Tr(i)}.$ Lemma~\ref{lem:newsumpthroot} provides an upper bound of such sums which only depends on the size of $S.$
\begin{lem}\label{lem:newsumpthroot}
Let $S\subseteq\F_{p^{w }}$ of size $s\leq p^{w }-1$ and set $s=s_1 (p^{{w }-1})+s_2$ for some $0\leq s_1\leq p-1$ and $0\leq s_2\leq p^{{w }-1}-1.$ For any $i=0,\cdots,p-1,$ we set $T_i=\{x\in \F_{p^{w }}: Tr(x)=i\}.$ Recall that $|T_i|=p^{{w }-1}$ for any $i.$ We set $T_{s_1}^\ast$ to be any subset of $T_{s_1}$ of size $s_2$ and $S^\star=\left(\bigcup_{i=0}^{s_1-1} T_i\right)\cup T_{s_1}^\ast.$ Then
\begin{equation*}
|\omega_p^S|\leq |\omega_p^{S^\star}|=\left|s_2 \sum_{i=0}^{s_1} \omega_p^{i}+ (p^{{w }-1}-s_2)\sum_{i=0}^{s_1-1}\omega_p^i\right|\leq p^{{w }-1}\cdot\frac{\sin(\pi s/p^{w })}{\sin(\pi/p)}.
\end{equation*}
\end{lem}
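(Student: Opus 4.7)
The plan is to recast the problem as a maximisation over trace-fibre multiplicity vectors, argue that the maximum is attained at (a cyclic shift of) the canonical ``consecutive'' profile defining $S^{\star}$, compute the value of $\omega_p^{S^{\star}}$ in closed form, and finally bound it via concavity of sine.

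First I would re-group the sum by trace. Writing $n_j = |S \cap T_j|$ for $j = 0, 1, \ldots, p-1$ gives $\omega_p^S = \sum_{j=0}^{p-1} n_j \omega_p^{j}$, subject to the integer constraints $0 \le n_j \le p^{w-1}$ and $\sum_j n_j = s$. Since the sum depends on $S$ only through the profile $(n_j)$, proving $|\omega_p^S| \le |\omega_p^{S^{\star}}|$ reduces to showing that the canonical profile $(p^{w-1}, \ldots, p^{w-1}, s_2, 0, \ldots, 0)$ maximises $|\sum_j n_j \omega_p^{j}|$ over this integer polytope.

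To establish that maximisation, I would use the identity $|z| = \max_{\theta} \mathrm{Re}(e^{-\mathbf{i}\theta} z)$. Given any feasible profile, let $z = \sum_j n_j \omega_p^{j}$ and set $\theta^{*} = \arg(z)$, so that $|z| = \sum_j n_j \cos(2\pi j/p - \theta^{*})$. For that fixed $\theta^{*}$, the right-hand side is a linear functional of $(n_j)$ on a box-plus-sum polytope, whose optimum is obtained greedily: assign $n_j = p^{w-1}$ to the $s_1$ roots $\omega_p^{j}$ making the smallest angle with $\theta^{*}$, the residual mass $s_2$ to the next-closest root, and $0$ elsewhere. The resulting greedy profile has the same trace-fibre distribution as a cyclic shift of $S^{\star}$ and hence the same modulus as $\omega_p^{S^{\star}}$; combining this with the chain $|z| = \mathrm{Re}(e^{-\mathbf{i}\theta^{*}} z) \le \mathrm{Re}(e^{-\mathbf{i}\theta^{*}} \omega_p^{S_{\mathrm{greedy}}}) \le |\omega_p^{S^{\star}}|$ closes the argument. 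I expect this LP-plus-symmetry reduction to be the main conceptual step, since one must carefully check that the greedy LP solution is automatically integral and that every cyclic shift achieves the same modulus.

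For the closed form and the final bound, I would observe that $\omega_p^{S^{\star}} = p^{w-1} \sum_{i=0}^{s_1-1} \omega_p^{i} + s_2\, \omega_p^{s_1}$, which matches the reformulation in the statement after trivial rearrangement. Multiplying both sides by $\omega_p - 1$ telescopes the geometric sum and produces $(\omega_p - 1)\,\omega_p^{S^{\star}} = p^{w-1}(c - 1)$ with $c := (1 - \lambda)\omega_p^{s_1} + \lambda\, \omega_p^{s_1+1}$ and $\lambda := s_2/p^{w-1} \in [0, 1)$; hence $|\omega_p^{S^{\star}}| = p^{w-1} |c - 1|/(2 \sin(\pi/p))$. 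Writing $c - 1$ as the same convex combination of $\omega_p^{s_1} - 1$ and $\omega_p^{s_1+1} - 1$, the triangle inequality gives $|c - 1| \le 2(1 - \lambda) \sin(\pi s_1/p) + 2\lambda \sin(\pi(s_1+1)/p)$. Because $s_1 + 1 \le p$, both arguments lie in $[0, \pi]$ where $\sin$ is concave, and Jensen's inequality bounds the right-hand side by $2 \sin(\pi(s_1 + \lambda)/p) = 2\sin(\pi s/p^{w})$; substitution then yields the claimed estimate $p^{w-1} \sin(\pi s/p^{w})/\sin(\pi/p)$.
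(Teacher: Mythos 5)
Your proposal is correct, and its two halves relate to the paper's proof differently. For the closed form and the final estimate you do essentially what the paper does: the paper applies the triangle inequality to the two geometric sums $s_2\sum_{i=0}^{s_1}\omega_p^i$ and $(p^{w-1}-s_2)\sum_{i=0}^{s_1-1}\omega_p^i$ and then invokes concavity of $\sin$ on $[0,\pi]$ with weight $\lambda=s_2/p^{w-1}$; your telescoping by $\omega_p-1$ followed by Jensen is the same computation in a different order. The genuine divergence is in the extremal step $|\omega_p^S|\le|\omega_p^{S^\star}|$. The paper runs an iterative exchange argument: it passes to the multiplicity vector $\mathfrak{s}=(s^{(0)},\dots,s^{(p-1)})$, normalizes by a cyclic shift, and repeatedly swaps one element from an extremal fibre into an unfilled interior fibre, checking via an inner-product computation that each swap does not decrease $|\omega_p^S|^2$, until the consecutive profile is reached. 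You instead use the support-function identity $|z|=\max_\theta\mathrm{Re}(e^{-\mathbf{i}\theta}z)$: fixing $\theta^\ast=\arg(\omega_p^S)$ reduces the problem to maximizing a linear functional over the box-constrained slice $\{0\le n_j\le p^{w-1},\ \sum_j n_j=s\}$, whose greedy optimum is automatically integral and is a rotation or reflection of the canonical profile. Your route buys a one-shot argument that sidesteps the paper's case analysis (the location of $j^\star$, the two symmetric cases, and the implicit termination of the swapping process), at the cost of two small points you should make explicit: that the $s_1+1$ roots of unity angularly closest to a fixed direction always form a contiguous arc with the farthest (hence partially filled) fibre at one of its ends, so the greedy profile really is a cyclic shift or conjugate of $S^\star$; and that ties among the coefficients $\cos(2\pi j/p-\theta^\ast)$ can be broken so as to preserve this contiguity. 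Both are easy, and with them your proof is complete.
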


\begin{proof}
Note that the equality can be easily verified by the definition of $S^\ast.$ Using triangle inequality and further algebraic manipulation, we get
\begin{eqnarray*}
\left|s_2 \sum_{i=0}^{s_1} \omega_p^{i}+ (p^{{w }-1}-s_2)\sum_{i=0}^{s_1-1}\omega_p^i\right|&\leq& \left|s_2 \sum_{i=0}^{s_1} \omega_p^{i}\right|+ \left|(p^{{w }-1}-s_2)\sum_{i=0}^{s_1-1}\omega_p^i\right|\\
&=&s_2 \frac{\left|\omega_p^{s_1+1}-1\right|}{\left|\omega_p-1\right|}+ (p^{{w }-1}-s_2) \frac{\left|\omega_p^{s_1}-1\right|}{\left|\omega_p-1\right|}\\
&=& s_2 \frac{\sin(\pi (s_1+1)/p)}{\sin(\pi/p)}+ (p^{{w }-1}-s_2)\frac{\sin(\pi s_1/p)}{\sin(\pi/p)}\\
&=& \frac{p^{{w }-1}}{\sin(\pi/p)}\left(\frac{s_2}{p^{{w }-1}}\cdot \sin \left(\frac{\pi}{p}\cdot(s_1+1)\right)\right.\\
&&+\left.\frac{p^{{w }-1}-s_2}{p^{{w }-1}} \cdot \sin \left(\frac{\pi}{p}\cdot s_1\right)\right)
\end{eqnarray*}
Note that $0\leq \pi s_1/p < \pi(s_1+1)/p\leq \pi.$  By the concavity of $\sin(x)$ for $x\in[0,\pi],$ we have that for any $0\leq x<y\leq \pi$ and $\alpha\in [0,1], \sin(\alpha x+(1-\alpha) y)\geq \alpha \sin(x)+(1-\alpha)\sin(y).$ Setting $\alpha=\frac{p^{{w }-1}-s_2}{p^{e-1}}\in[0,1),$ noting that $\alpha s_1 +(1-\alpha) (s_1+1)= \frac{p^{{w }-1}s_1+s_2}{p^{{w }-1}}=\frac{s}{p^{e-1}},$ we can have the term in the last equation $\frac{p^{{w }-1}}{\sin(\pi/p)}\left(\frac{s_2}{p^{{w }-1}}\cdot \sin \left(\frac{\pi}{p}\cdot(s_1+1)\right)+\frac{p^{{w }-1}-s_2}{p^{{w }-1}} \cdot \sin \left(\frac{\pi}{p}\cdot s_1\right)\right)$ to be at most
\[
\frac{p^{{w }-1}}{\sin(\pi/p)}\cdot \sin\left(\frac{\pi}{p}\cdot\frac{s}{p^{{w }-1}}\right)=\frac{p^{{w }-1}\sin(\pi s/p^{w })}{\sin(\pi/p)}.
\]

It remains to prove the first inequality of the claim. Suppose that $S\subseteq \F_{p^{w }}$ of size $s=s_1\cdot p^{{w }-1}+s_2\leq p^{w }-1$ for some $0\leq s_1\leq p-1$ and $0\leq s_2\leq p^{{w }-1}-1$ has the largest value for $|\omega_p^S|$ and set $\xi=|\omega_p^S|.$ That is, $\xi=|\omega_p^S|\geq |\omega_p^{S^\star}|.$ For any $i=0,\cdots, p-1,$ define $S_i=S\cap T_i$ and $|S_i|=s^{(i)}\geq 0$ for each $i$ and $\sum_{i=0}^{p-1} s^{(i)}=s.$ Then $\omega_p^{S}=\sum_{i=0}^{p-1} s^{(i)}\omega_p^{i}.$ First we consider the case when $s\leq p^{w-1}.$ For any $x,y\in \mathbb{C}$ of length $1,$ assuming that $\theta\in[0,\pi]$ is the angle between $x$ and $y,$ without loss of generality, we can write $y=x\cdot e^{\mathbf{i}\theta}.$ Then $|x+y|=|1+e^{\mathbf{i}\theta}|=\sqrt{(1+\cos(\theta))^2+(\sin(\theta))^2}.$ It is easy to see that $|x+y|$ is a decreasing function as $\theta$ grows from $0$ to $\pi$ where the value reaches its maximum $|x+y|=2$ when $\theta=0$ and its minimum $|x+y|=0$ when $\theta=\pi.$ This shows that if $s\leq p^{{w }-1}, |\omega_p^S|\leq s$ and equality is achieved by setting $S\subseteq T_i$ for some $i\in \{0,\cdots,p-1\},$ proving the first inequality for this special case of $s\leq p^{{w }-1}.$ Now we suppose that $s>p^{{w }-1}$ or equivalently, $s_1\geq 1.$

Note that the actual elements from each $S_i$ does not affect $\omega_p^S$ since any element from the same $S_i$ contributes $\omega_p^i$ to the sum. So we are only interested in the vector $\mathfrak{s}=(s^{(0)},\cdots, s^{(p-1)}).$ We further note that performing cyclic shift operation to $\mathfrak{s}$ does not change $|\omega_p^S|.$ For any non-negative integer $z,$ we denote by $\mathfrak{s}^{(z)}$ to be the vector we obtained by cyclic shifting $\mathfrak{s}$ to the right by $z$ position and $\mathfrak{s}^{(z)}_i$ be its $i$-th entry for $i=0,\cdots,p-1.$ So we can find a non-negative integer $z$ such that
\begin{equation*}
z\in\argmin_{z\in\{0,\cdots,p-1\}}\left\{y_z: \mathfrak{s}^{(z)}_{y_z}\neq 0, \forall i>y_z, \mathfrak{s}^{(z)}_i=0\right\}
\end{equation*}
Without loss of generality, we can assume that $z=0$ and suppose that $y_0=y\geq s_1\geq 1.$ Then $\omega_p^{S}=\sum_{i=0}^{y} s^{(i)}\omega_p^i.$ Note that by the minimality of $y_z, s^{(0)}$ must be non-zero.

\begin{claim}
There exists $S'=\bigcup_{i=0}^{s_1}S'_i\subseteq\F_{p^{w }}$ and an integer $a'$ where $S'_i=T_i$ for any $i=(a'+1)\pmod{p},\cdots, (a'+s_1-1)\pmod{p}, S'_{a'}\subseteq T_{a'}$ and $S'_{(a'+s_1)\pmod{p}}\subseteq T_{(a'+s_1)\pmod{p}}$ such that $|\omega_p^{S'}|\geq \xi.$
\end{claim}

\begin{proof}
 Note that if $y=1,$ the claim is already true by setting $S'=S.$ So assume that $y>1.$ Then there exists $j^\star\in\{0,\cdots,y-1\}$ such that $\omega_p^S$ lies between $\omega_p^{j^\star}$ and $\omega_p^{j^\star+1}.$ Then for any $i=0,\cdots, j^{\ast}-1, \omega_p^i\circ\omega_p^S \leq \omega_p^{j^\ast}\circ\omega_p^S$ and for any $i=j^{\ast}+2,\cdots, y, \omega_p^i\circ \omega_p^S\leq\omega_p^{j^\ast+1}\circ \omega_p^S.$ Here $\circ$ represents the inner product between the two complex numbers. More specifically, for any $z,z'\in \mathbb{C}, z\circ z' = |z| |z'| \cos\theta$ where $\theta$ is the angle between $z$ and $z'.$

Now suppose that there exists $x\in\{1,\cdots, y-1\}$ such that $s^{(x)}< p^{e-1}.$ Then $1\leq x\leq j^{\ast}$ or $j^{\ast}+1\leq x\leq y-1.$ We consider the case when $1\leq x\leq j^\ast$ while the proof for $j^\ast+1\leq x\leq y-1$ can be done in a similar way. As discussed before, $s^{(0)}>0.$ Hence there exists an element $\alpha$ of $S_0.$ On the other hand, since $s^{(x)}<p^{e-1},$ the set $T_x\setminus S_x$ is non-empty, suppose that $\beta\in T_x\setminus S_x.$ By the assumption above, we get that $\omega_p^S\circ \beta \geq \omega_p^S\circ \alpha.$ Note that for the case of $j^\ast+1\leq x\leq y-1,$ we choose $\alpha$ from $S_y$ and $\beta\in T_x\setminus S_x.$

Now we consider $\hat{S}=S\cup\{\beta\}\setminus\{\alpha\}.$ Then $\omega_p^{\hat{S}}=\omega_p^S+\beta-\alpha.$ Note that $|\omega_p^{\hat{S}}|^2=|\omega_p^S|^2 + |\beta-\alpha|^2 + 2|\omega_p^S||\beta-\alpha|\cos(\theta)$ where $\theta$ is the angle between $\omega_p^S$ and $\beta-\alpha.$ Recall that $\omega_P^S\circ(\beta-\alpha)\geq 0.$ Hence $\theta\in\left[-\frac{\pi}{2},\frac{\pi}{2}\right].$ So it is easy to see that $|\omega_P^{\hat{S}}|^2\geq |\omega_p^S|^2.$

The claim is proved since we can keep repeating this process until the desired form is achieved.
\end{proof}

Due to the invariance of the sum with respect to cyclic shift operation on $\mathfrak{s},$ we can assume $a'=0.$ So we can assume that $S=S_0\cup T_1\cup\cdots \cup T_{s_1-1}\cup S_{s_1}$ where $S_0\subseteq T_0$ and $S_{s_1}\subseteq T_{s_1}.$ Note that since $\omega_p^S$ is also invariant with respect to complex conjugation, without loss of generality, we can assume that $\omega_p^S\circ 1\geq \omega_p^S\circ \omega_p^{s_1}.$ So if $|S_0|<p^{e-1},$ a similar proof as the one used above can be used to prove that replacing an element $\alpha$ of $S_{s_1}$ from $S$ with an element $\beta\in T_0\setminus S_0$ does not reduce the sum. So by repeating this step until $S_0=T_0,$ we complete the proof.
\end{proof}

Next we provide a relation between product of functions over a linear code over $\F_{p^{w}}$ with the sum of products of their Fourier coefficients via the Poisson Summation formula.

\begin{lem}[Poisson Summation Formula over $\F_{p^w}$]\label{lem:psfgen}
Let $p>2$ be a prime and $w$ be a positive integer. Let $C\subseteq \F_{p^w}^n$ be a linear code with dual code $C^\bot.$ Let $f_1,\cdots, f_n:\F_{p^w}\rightarrow\mathbb{C}$ be functions. Let $\Lambda$ be defined as follows:
\begin{equation*}
\Lambda(f_1,\cdots, f_n)=\mathbb{E}_{\mathbf{x}\leftarrow C}\left[\prod_{i=1}^n f_i(x_i)\right]
\end{equation*}
where $\mathbf{x}=(x_1,\cdots, x_n).$ Then
\begin{equation*}
\Lambda(f_1,\cdots, f_n)=\sum_{{\boldsymbol \alpha}\in C^\bot}\prod_{j=1}^n \hat{f_j}(\alpha_j).
\end{equation*}
\end{lem}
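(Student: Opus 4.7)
The plan is to expand each $f_i$ via the Fourier inversion formula of Lemma~\ref{lem:PIFIF}, commute the finite sum with the expectation over $\mathbf{x}\leftarrow C$, and collapse the resulting inner character average using orthogonality of characters on the additive group $C$.

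Concretely, writing $f_i(x_i)=\sum_{\alpha_i\in \F_{p^w}}\hat{f_i}(\alpha_i)\,\overline{\chi_{\alpha_i}(x_i)}$, expanding the product, and exchanging the finite sum with the expectation gives
\[
\Lambda(f_1,\dots,f_n) \;=\; \sum_{\boldsymbol{\alpha}\in \F_{p^w}^n}\left(\prod_{i=1}^n \hat{f_i}(\alpha_i)\right)\mathbb{E}_{\mathbf{x}\leftarrow C}\!\left[\overline{\psi_{\boldsymbol{\alpha}}(\mathbf{x})}\right],
\]
where $\psi_{\boldsymbol{\alpha}}(\mathbf{x}):=\prod_{i=1}^n\chi_{\alpha_i}(x_i)=\omega_p^{Tr(\langle\boldsymbol{\alpha},\mathbf{x}\rangle)}$ is a character of the additive group $\F_{p^w}^n$ (and so of its subgroup $C$). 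By the standard orthogonality relation for characters on a finite group, $\mathbb{E}_{\mathbf{x}\leftarrow C}[\overline{\psi_{\boldsymbol{\alpha}}(\mathbf{x})}]$ equals $1$ when $\psi_{\boldsymbol{\alpha}}|_C\equiv 1$ and $0$ otherwise. Thus the proof reduces to the identification
\[
\psi_{\boldsymbol{\alpha}}|_C \equiv 1 \;\;\Longleftrightarrow\;\; \boldsymbol{\alpha}\in C^\bot.
\]

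The ``if'' direction is immediate from the definition. The converse is the only genuinely nontrivial step and is the main obstacle, because the trace $Tr:\F_{p^w}\to\F_p$ is not injective: a priori one could have $Tr(\langle\boldsymbol{\alpha},\mathbf{x}\rangle)=0$ for every $\mathbf{x}\in C$ while $\langle\boldsymbol{\alpha},\mathbf{x}\rangle$ is nonzero for some $\mathbf{x}\in C$. The $\F_{p^w}$-linearity (not merely $\F_p$-linearity) of $C$ closes this gap: for any $\lambda\in \F_{p^w}$ and $\mathbf{x}\in C$, the vector $\lambda\mathbf{x}$ also lies in $C$, whence $Tr(\lambda\langle\boldsymbol{\alpha},\mathbf{x}\rangle)=0$ for every $\lambda\in\F_{p^w}$. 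Non-degeneracy of the trace form of $\F_{p^w}/\F_p$ then forces $\langle\boldsymbol{\alpha},\mathbf{x}\rangle=0$ for all $\mathbf{x}\in C$, i.e.\ $\boldsymbol{\alpha}\in C^\bot$. The displayed sum therefore collapses to the indices $\boldsymbol{\alpha}\in C^\bot$, yielding the claimed identity.
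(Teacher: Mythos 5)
Your proposal is correct and follows essentially the same route as the paper's proof: Fourier inversion, exchange of the finite sum with the expectation, and evaluation of $\mathbb{E}_{\mathbf{x}\leftarrow C}\bigl[\omega_p^{-Tr(\langle\boldsymbol{\alpha},\mathbf{x}\rangle)}\bigr]$, with the crucial use of the $\F_{p^w}$-linearity of $C$ (via scalar multiples $\lambda\mathbf{x}$) together with properties of the trace to handle the non-injectivity of $Tr.$ The only cosmetic difference is that you invoke character orthogonality as a black box after showing $\psi_{\boldsymbol{\alpha}}|_C\equiv 1$ iff $\boldsymbol{\alpha}\in C^\bot,$ whereas the paper re-derives the vanishing by showing the fibers of $\mathbf{x}\mapsto Tr(\langle\boldsymbol{\alpha},\mathbf{x}\rangle)$ are equal-sized and summing the $p$-th roots of unity directly.
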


The lemma can be shown using a similar proof idea as Lemma $4.16$ of~\cite{BDIR19}. For completeness, proof can be found in Supplementary Material~\ref{app:prooflempsfgen}.

\section{Leakage Resilience of Linear Codes}\label{sec:LRLinCod}
Now we are ready to proceed to the discussions of our results. In this section, we investigate the statistical distance between leakage from codewords of a fixed linear code and the leakage from a random string. We note that the analysis mainly follows the analysis in~\cite{BDIR19} with some modifications to allow for results to be applicable for a larger family of codes, i.e., linear codes over any finite fields. The objective of this section is to prove the following two theorems. Firstly, we provide a bound on the statistical distance, which can be found in Theorem~\ref{thm:new1}.
\begin{restatable}{thm}{thmnewone}\label{thm:new1}
Let $C\subseteq\F_{p^{w }}^n$ be an $[n,k,d\leq n-k+1]$ code with the dual code $C^\bot$ which is an $[n,n-k,d^\bot\leq k+1]$ code. Furthermore, let ${\boldsymbol\tau}=(\tau^{(1)},\cdots,\tau^{(n)})$ be any family of leakage functions where $\tau^{(j)}:\F_{p^{w }}\rightarrow\F_2^\mu.$ For simplicity of notation, for any set $S\subseteq \F_{p^w}^n,$ we denote by ${\boldsymbol\tau}(S)$ the random variable $(y_1,\cdots, y_n)$ where $y_i=\tau(x_i)$ and $\mathbf{x}=(x_1,\cdots, x_n)$ is uniformly sampled from $S.$

Let $c_\mu=\frac{2^\mu\sin\left(\frac{\pi}{2^\mu}\right)}{p\sin\left(\frac{\pi}{p}\right)}$ (which is less than $1$ when $2^\mu<p$). Then
\begin{equation*}
SD({\boldsymbol\tau}(C),{\boldsymbol\tau}(\mathcal{U}_n))\leq \frac{1}{2}\cdot p^{{w }(n-k)}\cdot c_\mu^{d^\bot}.
\end{equation*}
\end{restatable}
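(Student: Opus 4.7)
The plan is to mirror the Fourier-analytic argument of Benhamouda et al., replacing their prime-field sum-of-distinct-roots bound with the finer bound provided by Lemma~\ref{lem:newsumpthroot}. For each possible output $\mathbf{y}=(y_1,\ldots,y_n)\in(\F_2^\mu)^n$ of the leakage, and each coordinate $j$, let $f_j^{(y_j)}:\F_{p^w}\to\{0,1\}$ be the indicator of $\tau^{(j)}(x)=y_j$, and write $S_{j,y_j}=(f_j^{(y_j)})^{-1}(1)$. First I would express
\[
\Pr[{\boldsymbol\tau}(C)=\mathbf{y}]\;=\;\mathbb{E}_{\mathbf{x}\leftarrow C}\!\left[\prod_{j=1}^n f_j^{(y_j)}(x_j)\right], \qquad \Pr[{\boldsymbol\tau}(\mathcal{U}_n)=\mathbf{y}]\;=\;\prod_{j=1}^n \hat{f}_j^{(y_j)}(0),
\]
and then apply Lemma~\ref{lem:psfgen} (Poisson summation over $C$) to the first expectation, so that the $\alpha=\mathbf{0}$ contribution cancels exactly with $\Pr[{\boldsymbol\tau}(\mathcal{U}_n)=\mathbf{y}]$.

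Next I would take absolute values, sum over $\mathbf{y}$, and swap the order of summation to get
\[
SD({\boldsymbol\tau}(C),{\boldsymbol\tau}(\mathcal{U}_n))\;\le\;\tfrac{1}{2}\sum_{\boldsymbol\alpha\in C^\bot\setminus\{0\}}\prod_{j=1}^n \Biggl(\sum_{y_j\in\F_2^\mu}\bigl|\hat{f}_j^{(y_j)}(\alpha_j)\bigr|\Biggr).
\]
For coordinates with $\alpha_j=0$, the inner sum equals $\sum_{y_j}|S_{j,y_j}|/p^w=1$ and contributes nothing. So the whole estimate reduces to controlling, for each nonzero $\alpha\in\F_{p^w}$, the quantity $\Sigma_j(\alpha)\triangleq\sum_{y_j}|\hat{f}_j^{(y_j)}(\alpha)|$.

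The main work is this single-coordinate bound $\Sigma_j(\alpha)\le c_\mu$. Writing $\hat{f}_j^{(y_j)}(\alpha)=\frac{1}{p^w}\sum_{x\in\alpha\cdot S_{j,y_j}}\omega_p^{\mathrm{Tr}(x)}$ (using that $x\mapsto\alpha x$ is a bijection), Lemma~\ref{lem:newsumpthroot} gives $|\hat{f}_j^{(y_j)}(\alpha)|\le\frac{1}{p\sin(\pi/p)}\sin(\pi|S_{j,y_j}|/p^w)$. Summing over the $2^\mu$ values of $y_j$ and invoking the concavity of $\sin$ on $[0,\pi]$ together with the partition identity $\sum_{y_j}|S_{j,y_j}|=p^w$, I would conclude
\[
\Sigma_j(\alpha)\;\le\;\frac{2^\mu\sin(\pi/2^\mu)}{p\sin(\pi/p)}\;=\;c_\mu.
\]
The concavity step is where the definition of $c_\mu$ emerges, and is the key place where the proof differs structurally from the MDS/prime-field case: the bound holds uniformly in the choice of leakage function because the worst case of the sum of $2^\mu$ sines with fixed sum is achieved at the balanced partition.

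Plugging $\Sigma_j(\alpha_j)\le c_\mu$ for every nonzero $\alpha_j$ into the displayed sum, each nonzero $\boldsymbol\alpha\in C^\bot$ contributes at most $c_\mu^{\mathrm{wt}_H(\boldsymbol\alpha)}\le c_\mu^{d^\bot}$ (using $c_\mu<1$, which follows from $2^\mu<p$ and the monotonicity of $\sin(\pi/x)/(1/x)$), and $|C^\bot\setminus\{0\}|\le p^{w(n-k)}$ finishes the estimate. The principal obstacle is the sine-concavity/Lemma~\ref{lem:newsumpthroot} interplay used to certify $\Sigma_j(\alpha)\le c_\mu$; once that single-coordinate bound is established, the rest of the proof is a clean Poisson-summation computation and counting the dual codewords via the weight bound $\ge d^\bot$.
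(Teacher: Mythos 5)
Your proposal is correct and follows essentially the same route as the paper: the Poisson-summation cancellation of the $\boldsymbol\alpha=\mathbf{0}$ term is the paper's Lemma~\ref{lem:newSDtoDualsum}, and your single-coordinate bound $\Sigma_j(\alpha)\le c_\mu$ via the bijection $x\mapsto\alpha x$, Lemma~\ref{lem:newsumpthroot}, and sine-concavity on the partition is exactly the paper's Lemma~\ref{lem:cmpe}. The final counting step ($c_\mu^{\mathrm{wt}_H(\boldsymbol\alpha)}\le c_\mu^{d^\bot}$ together with $|C^\bot\setminus\{\mathbf{0}\}|\le p^{w(n-k)}$) also matches the paper's conclusion.
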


Note that the bound we have in Theorem~\ref{thm:new1} grows exponentially on $w .$  We will conclude this section by providing an improvement on Theorem~\ref{thm:new1} by eliminating the reliance of the bound on $w .$ The improved bound can be found in Theorem~\ref{thm:new2}.

\begin{restatable}{thm}{thmnewtwo}\label{thm:new2}
Let $C\subseteq \F_{p^{w }}^n$ be any $[n,k,d\leq n-k+1]$ linear code. Let ${\boldsymbol\tau}=(\tau^{(1)},\cdots, \tau^{(n)})$ be any family of leakage functions where $\tau^{(j)}:\F_{p^{w }}\rightarrow\F_2^\mu.$ Let $c_\mu^\prime = \frac{2^\mu\sin(\pi/2^\mu + \pi/2^{4\mu})}{p\sin(\pi/p)}.$ Then
\begin{equation*}
SD({\boldsymbol\tau}(C),{\boldsymbol\tau}(\mathcal{U}_n))\leq \frac{1}{2}\cdot 2^{(5\mu+1)\cdot (n-d^\bot) + \mu} \cdot (c_\mu^\prime)^{2d^\bot-n-2}
\end{equation*}
\end{restatable}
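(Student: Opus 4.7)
The starting point mirrors the proof of Theorem~\ref{thm:new1}: by Poisson summation (Lemma~\ref{lem:psfgen}) applied to the indicators $f^{(j)}_b(x)=\mathds{1}[\tau^{(j)}(x)=b]$, one obtains
\[
2\cdot SD(\boldsymbol\tau(C),\boldsymbol\tau(\mathcal{U}_n))
\;\leq\;\sum_{\mathbf{b}\in(\F_2^\mu)^n}\Bigl|\sum_{\boldsymbol\alpha\in C^\bot\setminus\{\mathbf{0}\}}\prod_{j=1}^n \hat{f}^{(j)}_{b_j}(\alpha_j)\Bigr|,
\]
after separating out the $\boldsymbol\alpha=\mathbf{0}$ piece, which cancels the uniform probability exactly. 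The proof of Theorem~\ref{thm:new1} swaps the two sums and pays $|C^\bot|\leq p^{w(n-k)}$ copies of the single-codeword bound $c_\mu^{d^\bot}$, wasting a full $p^{w(n-k)}$ factor. My plan is to eliminate this factor by enumerating dual codewords in a structured way and by upgrading the per-coordinate Fourier estimate to one that is amortized over the choice of $\alpha_j$.

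The plan is to stratify $C^\bot\setminus\{\mathbf{0}\}$ by the Hamming support $T=\text{supp}(\boldsymbol\alpha)\subseteq[n]$. Since $C^\bot$ has minimum distance $d^\bot$, only patterns of size $h:=|T|\in[d^\bot,n]$ contribute, and for each such $T$ the codewords with support contained in $T$ form an $\F_{p^w}$-subspace whose dimension is at most $h-d^\bot+1$ by the Singleton-type argument of Proposition~\ref{prop:distprop}. Combined with $\binom{n}{h}$ choices of $T$, this controls how many dual codewords of each weight class one has to sum over, while also eliminating the blanket $|C^\bot|$-enumeration.

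The main technical obstacle — and the source of the constant $c_\mu'$ in the statement — is an amortized coordinate-wise Fourier estimate of the shape
\[
\sum_{\alpha\in\F_{p^w}^*}\,\sum_{b\in\F_2^\mu}\bigl|\hat{f}^{(j)}_b(\alpha)\bigr|\;\leq\;2^{5\mu}\cdot c_\mu',
\]
absorbing the enumeration over the $p^w$ nonzero choices of a support coordinate into a $2^{5\mu}$ combinatorial factor, at the modest cost of weakening the per-coordinate constant from $c_\mu$ to $c_\mu'$. The extra $\pi/2^{4\mu}$ inside the sine in the definition of $c_\mu'$ is exactly the slack that makes this absorption go through: it comes from a refinement of the convexity argument behind Lemma~\ref{lem:newsumpthroot}, quantifying how much the bound on $\bigl|\sum_{x\in S}\omega_p^{Tr(\alpha x)}\bigr|$ worsens when $S$ is misaligned with the optimal $\F_p$-trace fibers, and then averaging the degradation uniformly over $\alpha\in\F_{p^w}^*$. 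This is the only genuinely new ingredient beyond the setup of Theorem~\ref{thm:new1}.

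Once this amortized inequality is established, I would iterate it coordinate-by-coordinate over the $h$ support positions, pick up the trivial $\sum_{b_j}\hat{f}^{(j)}_{b_j}(0)=1$ contribution at each of the $n-h$ non-support positions, and sum over $T$ of size $h\in[d^\bot,n]$. This produces a bound of shape $2\cdot SD\leq\sum_{h=d^\bot}^n \binom{n}{h}\cdot 2^{5\mu h}\cdot(c_\mu')^h$ up to small overhead. Re-parametrizing $h=d^\bot+(h-d^\bot)$, absorbing $\binom{n}{h}$ into the $2^{(5\mu+1)(n-d^\bot)}$ leading factor, and extracting $(c_\mu')^{-(n-d^\bot+2)}$ to shift the $c_\mu'$-exponent from $d^\bot$ down to $2d^\bot-n-2$, collapses everything into the claimed $2^{(5\mu+1)(n-d^\bot)+\mu}\cdot(c_\mu')^{2d^\bot-n-2}$. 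The final step is purely arithmetic; essentially all the difficulty lives in the amortized Fourier estimate.
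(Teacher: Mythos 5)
Your overall frame (Poisson summation, then controlling the sum over $C^\bot\setminus\{\mathbf{0}\}$ more cleverly than by paying $|C^\bot|$) is reasonable, but the core ingredient you propose is false. The amortized estimate $\sum_{\alpha\in\F_{p^{w}}\setminus\{0\}}\sum_{b}\bigl|\hat{f}^{(j)}_b(\alpha)\bigr|\leq 2^{5\mu}\cdot c_\mu'$ cannot hold with a right-hand side depending only on $\mu$ (and on $p$ only through $c_\mu'$): already for $w=1$, $\mu=1$ and $A_0$ an interval of length $p/2$ in $\F_p$, the left-hand side is the $L^1$ norm of a Dirichlet-type kernel and grows like $\tfrac{2}{\pi}\log p$, which exceeds $2^{5}c_1'$ once $p$ is large enough. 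In general the $L^1$ norm of $\widehat{\mathds{1}_A}$ over nonzero characters genuinely scales with the field size, so no per-coordinate constant of the form $2^{O(\mu)}c_\mu'$ can absorb an enumeration over the $p^{w}-1$ nonzero values of a support coordinate. Your support stratification compounds the problem: iterating the estimate coordinate-by-coordinate over a support of size $h$ amounts to summing over $(\F_{p^{w}}\setminus\{0\})^h$ rather than over the at most $p^{w(h-d^\bot+1)}$ dual codewords actually carried by that support, and the resulting $\sum_{h\geq d^\bot}\binom{n}{h}2^{5\mu h}(c_\mu')^h$ is dominated by $h=n$ and does not reduce to the stated exponents ($2^{5\mu n}$ versus $2^{(5\mu+1)(n-d^\bot)+\mu}$); multiplying by $(c_\mu')^{-(n-d^\bot+2)}>1$ only worsens the bound and does not correspond to any step of the computation.

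The paper avoids all of this by an $L^2$ argument rather than an $L^1$ one. It splits $[n]$ into $I_1,I_2,I_3$ with $|I_1|=|I_2|=n-d^\bot+1$; since any $n-d^\bot+1$ columns of the parity-check matrix are linearly independent, the map ${\boldsymbol\beta}\mapsto(\langle{\boldsymbol\beta},\mathbf{h}_j\rangle)_{j\in I_1}$ is injective, so after Cauchy--Schwarz the sum over all of $C^\bot$ of $\prod_{j\in I_1}\bigl|\widehat{\mathds{1}_{\ell_j}}(\alpha_j)\bigr|^2$ is at most $\prod_{j\in I_1}\bigl\|\widehat{\mathds{1}_{\ell_j}}\bigr\|_2^2$, which Parseval evaluates with no factor growing in $p^{w}$; this is where the $2^{\mu(n-d^\bot+1)}$ comes from (Lemma~\ref{lem:newSDboundpe}). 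What survives is a \emph{maximum}, not a sum, over dual codewords restricted to the $\kappa=2d^\bot-n-2$ coordinates of $I_3$, and there the constant $c_\mu'$ and the factor $2^{(4\mu+1)(n-d^\bot)}$ arise from clamping the pointwise Fourier bound at $2^{-(4\mu+1)}$ so as to pay $2^{4\mu+1}$ for each \emph{zero} coordinate of the restricted codeword (Lemmas~\ref{lem:newcmgenpe} and~\ref{lem:newxi}) --- not from averaging any degradation over nonzero $\alpha$. The missing idea in your plan is precisely this Cauchy--Schwarz/Parseval absorption of the dual-code enumeration.
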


The remainder of this section is used to prove these two theorems.
\subsection{Proof of Theorem~\ref{thm:new1}}
Before we prove Theorem~\ref{thm:new1}, we first discuss some supporting lemmas that will help in our proof.
\begin{lem}\label{lem:newSDtoDualsum}
Let $C\subseteq \F_{p^{w }}^n$ be any $[n,k,d\leq n-k+1]$ linear code. Let ${\boldsymbol \tau}=(\tau^{(1)},\tau^{(2)},\cdots,\tau^{(n)})$ be any family of leakage functions where $\tau^{(j)}:\F_{p^{w }}\rightarrow\F_2^\mu.$ For any $j=1,\cdots,n,\ell_j\in \F_2^\mu$ and $x\in \F_{p^{w }},$ define $\mathds{1}_{\ell_j}:\F_{p^w}\rightarrow\{0,1\}$ where $\mathds{1}_{\ell_j}(x)=1$ if and only if $\tau^{(j)}(x)=\ell_j.$ Then
\begin{equation*}
SD({\boldsymbol\tau}(C),{\boldsymbol\tau}(\mathcal{U}_n))=\frac{1}{2}\sum_{{\boldsymbol\ell}=(\ell_1,\cdots, \ell_n)\in \F_2^{\mu\times n}}\left|\sum_{{\boldsymbol\alpha}\in C^\bot\setminus\left\{\mathbf{0}\right\}}\prod_j\widehat{\mathds{1}_{\ell_j}}(\alpha_j)\right|.
\end{equation*}
\end{lem}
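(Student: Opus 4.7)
The plan is to unfold the definition of the statistical distance and evaluate the two underlying probabilities via the Poisson summation formula, Lemma~\ref{lem:psfgen}, so that the $\boldsymbol\alpha=\mathbf{0}$ contribution exactly cancels the uniform-case probability.

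First I would observe that, because $\{\tau^{(j)}(x)=\ell_j\}$ is recorded by $\mathds{1}_{\ell_j}$, for any distribution of $\mathbf{x}=(x_1,\dots,x_n)$ over $\F_{p^w}^n$ we have
\[
\Pr[{\boldsymbol\tau}(\mathbf{x})={\boldsymbol\ell}] \;=\; \mathbb{E}_{\mathbf{x}}\!\left[\prod_{j=1}^n \mathds{1}_{\ell_j}(x_j)\right].
\]
Specializing to $\mathbf{x}\leftarrow C$ and applying Lemma~\ref{lem:psfgen} with $f_j=\mathds{1}_{\ell_j}$ gives
\[
\Pr[{\boldsymbol\tau}(C)={\boldsymbol\ell}] \;=\; \sum_{\boldsymbol\alpha\in C^\bot}\prod_{j=1}^n\widehat{\mathds{1}_{\ell_j}}(\alpha_j).
\]

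Second, I would handle the uniform case. Since $\mathcal{U}_n$ is the product measure on $\F_{p^w}^n$, the expectation factorizes:
\[
\Pr[{\boldsymbol\tau}(\mathcal{U}_n)={\boldsymbol\ell}] \;=\; \prod_{j=1}^n \mathbb{E}_{x\leftarrow\F_{p^w}}\!\left[\mathds{1}_{\ell_j}(x)\right] \;=\; \prod_{j=1}^n \widehat{\mathds{1}_{\ell_j}}(0),
\]
where the last equality uses that the Fourier coefficient at the trivial character $\chi_0\equiv 1$ is exactly the mean of the function. This is precisely the $\boldsymbol\alpha=\mathbf{0}$ summand in the Poisson expansion above.

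Third, subtracting the two expressions eliminates the $\boldsymbol\alpha=\mathbf{0}$ term and leaves
\[
\Pr[{\boldsymbol\tau}(C)={\boldsymbol\ell}] - \Pr[{\boldsymbol\tau}(\mathcal{U}_n)={\boldsymbol\ell}] \;=\; \sum_{\boldsymbol\alpha\in C^\bot\setminus\{\mathbf{0}\}}\prod_{j=1}^n\widehat{\mathds{1}_{\ell_j}}(\alpha_j).
\]
Plugging this into the definition $SD(X,Y)=\tfrac12\sum_{\boldsymbol\ell}|\Pr[X={\boldsymbol\ell}]-\Pr[Y={\boldsymbol\ell}]|$ and summing over ${\boldsymbol\ell}\in \F_2^{\mu\times n}$ yields the claimed identity.

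There is essentially no obstacle: the only non-mechanical step is recognizing that the product $\prod_j \widehat{\mathds{1}_{\ell_j}}(0)$ coming from the uniform distribution coincides with the $\boldsymbol\alpha=\mathbf{0}$ term of the Poisson sum, which isolates the nontrivial dual codewords as the sole source of statistical distance. All remaining manipulations are direct application of the definitions of Fourier coefficient, statistical distance, and Lemma~\ref{lem:psfgen}.
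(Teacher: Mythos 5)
Your proposal is correct and follows essentially the same route as the paper's own proof: both unfold the probabilities as expectations of products of indicators, apply the Poisson summation formula (Lemma~\ref{lem:psfgen}) to the code case, identify the uniform-case probability with the $\boldsymbol\alpha=\mathbf{0}$ term of the dual sum, and cancel. No gaps.
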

Due to the page limitation, we provide the proof in Supplementary Material~\ref{app:prooflemnewsdtodualsum}.
\begin{lem}\label{lem:cmpe}
Let $\mu$ be a positive real number such that $2^\mu$ is an integer. let $c_\mu=\frac{2^\mu \sin (\pi/2^\mu)}{p\sin (\pi/p)}.$ For any sets $A_1,\cdots, A_{2^\mu}\subseteq \F_{p^{w }}$ such that $\sum_{i=1}^{2^\mu} |A_i|=p^{w },$ we have
\begin{equation*}
\left\{
\begin{array}{ccc}
\sum_{i=1}^{2^\mu}\left|\widehat{\mathds{1}_{A_i}}(\alpha)\right|\leq c_\mu, &\mathrm{~if~}\alpha\neq 0,\\
\sum_{i=1}^{2^\mu}\left|\widehat{\mathds{1}_{A_i}}(\alpha)\right|=1, &\mathrm{~if~}\alpha=0
\end{array}
\right.
\end{equation*}
where for any $A\subseteq \F_{p^{w }},\mathds{1}_A:\F_{p^{w }}\rightarrow\{0,1\}$ is the characteristic function of the set $A\subseteq\F_{p^{w }}.$ 
\end{lem}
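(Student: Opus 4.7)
My plan is to unfold the Fourier coefficient using the character formula, reduce the sum inside $|\widehat{\mathds{1}_{A_i}}(\alpha)|$ to an instance of $|\omega_p^{S}|$ in the notation of Lemma~\ref{lem:newsumpthroot}, and then invoke concavity of $\sin$ on $[0,\pi]$ to finish.

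First I would handle the easy case $\alpha = 0$. Since $\chi_0(x) = \omega_p^{\mathrm{Tr}(0 \cdot x)} = 1$ for every $x$, we get $\widehat{\mathds{1}_{A_i}}(0) = \mathbb{E}_{x \leftarrow \F_{p^w}}[\mathds{1}_{A_i}(x)] = |A_i|/p^w$, which is nonnegative, so summing over $i$ and using $\sum_i |A_i| = p^w$ yields exactly $1$.

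Now suppose $\alpha \ne 0$. By definition,
\[
\widehat{\mathds{1}_{A_i}}(\alpha) \;=\; \frac{1}{p^w}\sum_{x \in A_i} \omega_p^{\mathrm{Tr}(\alpha x)}.
\]
Since the map $x \mapsto \alpha x$ is a bijection on $\F_{p^w}$, the set $\alpha A_i := \{\alpha x : x \in A_i\}$ has cardinality $|A_i|$, and
\[
\Bigl|\widehat{\mathds{1}_{A_i}}(\alpha)\Bigr| \;=\; \frac{1}{p^w}\Bigl|\omega_p^{\alpha A_i}\Bigr|.
\]
For each $i$ with $|A_i| \leq p^w - 1$, applying Lemma~\ref{lem:newsumpthroot} to the set $\alpha A_i$ gives $|\omega_p^{\alpha A_i}| \leq p^{w-1} \sin(\pi |A_i|/p^w)/\sin(\pi/p)$; note that if some $A_i = \F_{p^w}$ then all other $A_j$ are empty and $\omega_p^{\alpha A_i} = 0$ since it sums a nontrivial character over the whole group, so the bound holds trivially in that degenerate case as well. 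Summing over $i$,
\[
\sum_{i=1}^{2^\mu} \Bigl|\widehat{\mathds{1}_{A_i}}(\alpha)\Bigr| \;\leq\; \frac{1}{p \sin(\pi/p)} \sum_{i=1}^{2^\mu} \sin\!\left(\frac{\pi |A_i|}{p^w}\right).
\]

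The last step is to maximize $\sum_{i=1}^{2^\mu} \sin(\pi |A_i|/p^w)$ subject to $|A_i| \geq 0$ and $\sum_i |A_i| = p^w$. Letting $t_i = \pi |A_i|/p^w \in [0,\pi]$, we have $\sum_i t_i = \pi$, and since $\sin$ is concave on $[0,\pi]$, Jensen's inequality yields $\sum_i \sin(t_i) \leq 2^\mu \sin(\pi/2^\mu)$. Substituting back gives exactly $c_\mu = \frac{2^\mu \sin(\pi/2^\mu)}{p \sin(\pi/p)}$, as required.

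The only genuinely delicate point is the reduction to Lemma~\ref{lem:newsumpthroot}: one must be careful that the lemma is stated for subsets of $\F_{p^w}$ of size at most $p^w - 1$, which is why I separated out the trivial case $|A_i| = p^w$. Everything else is a direct concavity argument, so I do not foresee any real obstacle.
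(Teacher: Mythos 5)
Your proposal is correct and follows essentially the same route as the paper's proof: compute $\widehat{\mathds{1}_{A_i}}(\alpha)=\frac{1}{p^w}\omega_p^{\alpha A_i}$, bound each term via Lemma~\ref{lem:newsumpthroot}, and finish with concavity of $\sin$ on $[0,\pi]$. Your explicit treatment of the degenerate case $|A_i|=p^w$ (which Lemma~\ref{lem:newsumpthroot} formally excludes) is a small point the paper's proof glosses over, and your handling of it is correct.
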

Due to the page limitation, the proof can be found in Supplementary Material~\ref{app:prooflemcmpe}

We are now ready to prove Theorem~\ref{thm:new1}. First we restate the theorem.
\thmnewone*
\begin{proof}
By Lemma~\ref{lem:newSDtoDualsum} and triangle inequality, we have
\begin{align*}
SD({\boldsymbol\tau}(C),{\boldsymbol\tau}(\mathcal{U}_n))&=\frac{1}{2}\sum_{{\boldsymbol\ell}}\left|\sum_{{\boldsymbol\alpha}\in C^\bot\setminus\left\{\mathbf{0}\right\}}\prod_j\widehat{\mathds{1}_{\ell_j}}(\alpha_j)\right|\leq \frac{1}{2}\sum_{{\boldsymbol\ell}} \sum_{{\boldsymbol\alpha}\in C^\bot\setminus\left\{\mathbf{0}\right\}}\prod_j\left|\widehat{\mathds{1}_{\ell_j}}(\alpha_j)\right|\\
&=\frac{1}{2}\sum_{{\boldsymbol\alpha}\in C^\bot\setminus\left\{\mathbf{0}\right\}}\prod_j\left(\sum_{\ell_j}\left|\widehat{\mathds{1}_{\ell_j}}(\alpha_j)\right|\right)
\end{align*}
Note that for any $j,(\tau^{(j)})^{-1}(\ell_j)$ partitions $\F_p$ to $2^\mu$ sets. So by Lemma~\ref{lem:cmpe},
\begin{equation*}
\left\{
\begin{array}{ccc}
\sum_{\ell_j}\left|\widehat{\mathds{1}_{\ell_j}}(\alpha_j)\right|\leq c_\mu, &\mathrm{~if~}\alpha_j\neq 0,\\
\sum_{\ell_j}\left|\widehat{\mathds{1}_{\ell_j}}(\alpha_j)\right|=1, &\mathrm{~if~}\alpha_j=0
\end{array}
\right.
\end{equation*}
Note that for any ${\boldsymbol\alpha}\in C^\bot\setminus\left\{\mathbf{0}\right\}, |\{j:\alpha_j\neq 0\}|\geq d^\bot.$ Hence, since $c_\mu\leq 1,$
\begin{align*}
SD({\boldsymbol\tau}(C),{\boldsymbol\tau}(\mathcal{U}_n))&\leq\frac{1}{2}\sum_{{\boldsymbol\alpha}\in C^\bot\setminus\left\{\mathbf{0}\right\}}\prod_j\left(\sum_{\ell_j}\left|\widehat{\mathds{1}_{\ell_j}}(\alpha_j)\right|\right)= \frac{1}{2}\sum_{{\boldsymbol\alpha}\in C^\bot\setminus\left\{\mathbf{0}\right\}}c_\mu^{wt_H({\boldsymbol\alpha})}\\
&\leq \frac{1}{2} \cdot |C^\bot| \cdot c_\mu^{d^\bot}= \frac{1}{2}\cdot  p^{w (n-k)}\cdot  c_\mu^{d^\bot}.
\end{align*}
\end{proof}

\subsection{Proof of Theorem~\ref{thm:new2}}

In this subsection, we aim to prove Theorem~\ref{thm:new2} which provides an alternative direction of bounding the statistical distance from the result in Theorem~\ref{thm:new1}. Before we prove Theorem~\ref{thm:new2}, first we provide some supporting lemmas that are useful in the proof of Theorem~\ref{thm:new2}.

Firstly, instead of finding a bound for the sum of the Fourier coefficients of $\mathds{1}_{A_i}$ in each character separately as has been done in Lemma~\ref{lem:cmpe}, we consider the bound where for each $A_i,$ we consider the character that maximizes each Fourier coefficients separately.
\begin{lem}\label{lem:maxcmpe}
Let $\mu$ be some positive real such that $2^\mu$ is an integer. Let $c_\mu=\frac{2^\mu \sin(\pi/2^\mu)}{p \sin (\pi/p)}.$ For any sets $A_1,\cdots, ,A_{2^\mu}\subseteq \F_{p^{w }}$ such that $\sum_{i=1}^{2^\mu} |A_i|=p^{w },$ we have
\begin{equation*}
\sum_{i=1}^{2^\mu} \max_{\alpha\neq 0} |\widehat{\mathds{1}_{A_i}}(\alpha)|\leq c_\mu.
\end{equation*}
\end{lem}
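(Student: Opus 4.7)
The plan is to reduce the problem to a pointwise bound on each Fourier coefficient that depends only on the cardinality $|A_i|$, and then use concavity of $\sin$ to sum these bounds over $i$. The structure is parallel to the proof of Lemma~\ref{lem:cmpe}, but with the maximum $\alpha$ moved inside each term rather than fixed across all terms; the key observation is that the Lemma~\ref{lem:newsumpthroot} bound is blind to the choice of $\alpha$, so this shift costs nothing.

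First I would establish the per-set bound. For any $A\subseteq\F_{p^w}$ and any $\alpha\neq 0$,
$$\bigl|\widehat{\mathds{1}_A}(\alpha)\bigr| = \frac{1}{p^w}\Bigl|\sum_{x\in A}\omega_p^{{\rm Tr}(\alpha x)}\Bigr| = \frac{1}{p^w}\bigl|\omega_p^{\alpha A}\bigr|,$$
where $\alpha A = \{\alpha x : x\in A\}$ has the same cardinality as $A$ because multiplication by a nonzero $\alpha$ is a bijection on $\F_{p^w}$. Lemma~\ref{lem:newsumpthroot}, applied to $\alpha A$ (of size $|A|\leq p^w$), then yields
$$\bigl|\widehat{\mathds{1}_A}(\alpha)\bigr| \leq \frac{p^{w-1}\sin(\pi |A|/p^w)}{p^w\sin(\pi/p)} = \frac{\sin(\pi |A|/p^w)}{p\sin(\pi/p)}.$$
Since the right-hand side does not depend on $\alpha$, the same quantity controls $\max_{\alpha\neq 0}\bigl|\widehat{\mathds{1}_{A_i}}(\alpha)\bigr|$ for each $i$ individually. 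The edge case $|A|=p^w$ is immediate because $\widehat{\mathds{1}_{\F_{p^w}}}$ vanishes on nonzero characters.

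Next I would sum the $2^\mu$ pointwise inequalities and apply Jensen. Writing $x_i = \pi|A_i|/p^w\in[0,\pi]$, the hypothesis $\sum_i |A_i| = p^w$ gives $\sum_i x_i = \pi$, so concavity of $\sin$ on $[0,\pi]$ yields
$$\frac{1}{2^\mu}\sum_{i=1}^{2^\mu}\sin x_i \;\leq\; \sin\!\bigl(\pi/2^\mu\bigr).$$
Multiplying by $2^\mu$ and dividing by $p\sin(\pi/p)$ converts the sum of per-$i$ bounds into exactly $\frac{2^\mu\sin(\pi/2^\mu)}{p\sin(\pi/p)} = c_\mu$, which is the claim.

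I do not anticipate any substantial obstacle; the heavy lifting has already been done in Lemma~\ref{lem:newsumpthroot}, which provides a character-independent estimate, and the outer sum is handled cleanly by Jensen's inequality. The only mild sanity check is confirming that each $|A_i|/p^w\in[0,1]$ so that the $x_i$ lie in the interval $[0,\pi]$ where $\sin$ is concave; this is immediate from $A_i\subseteq\F_{p^w}$. The contrast with Lemma~\ref{lem:cmpe} is conceptually slight: there, one fixes $\alpha$ and sums $|\widehat{\mathds{1}_{A_i}}(\alpha)|$; here, one takes independent maxima but still obtains the same bound, because the per-$i$ estimate was never character-sensitive in the first place.
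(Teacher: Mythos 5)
Your proposal is correct and follows essentially the same route as the paper: both arguments rest on the observation that the bound from Lemma~\ref{lem:newsumpthroot} applied to $\alpha A_i$ depends only on $|A_i|$ and not on $\alpha$, followed by concavity of $\sin$ on $[0,\pi]$. The only cosmetic difference is that the paper packages this by substituting $B_i=\alpha_i A_i$ for maximizing characters $\alpha_i$ and then invoking Lemma~\ref{lem:cmpe} at the fixed character $\chi_1$, whereas you inline that lemma's proof directly; the mathematics is identical.
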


Proof can be found in Supplementary Material~\ref{app:prooflemmaxcmpe}. Next we consider an improvement on the bound for the statistical distance.

\begin{lem}\label{lem:newSDboundpe}
Let $C$ be an $[n,k,d\leq n-k+1]$ linear code with parity check matrix $H=(\mathbf{h}_1^T | \mathbf{h}_2^T | \cdots | \mathbf{h}_n^{T})$ where $\mathbf{h}_i^T$ is a column vector of length $n-k,$ which is the $i$-th column of $H.$ We also let $C^\bot$ be its dual with parameter $[n,n-k,d^\bot\leq k+1].$ Partition the indices of the columns of $H$ into three disjoint subsets $I_1,I_2$ and $I_3$ where $|I_1|=|I_2|=n-d^\bot+1.$ Let $\mu$ be a positive integer and ${\boldsymbol\tau}=(\tau^{(1)},\cdots, \tau^{(n)})$ be any family of leakage functions where $\tau^{(j)}:\F_{p^{w }}\rightarrow\F_2^\mu.$ We define $\mathds{1}_{\ell_j}(x)$ as before. Then

\begin{equation*}
SD({\boldsymbol\tau}(C),{\boldsymbol\tau}(\mathcal{U}_n))\leq \frac{1}{2}\cdot 2^{\mu(n-d^\bot+1)}\cdot\sum_{\{\ell_j\}_{j\in I_3}}\max_{{\boldsymbol\beta}\in \F_{p^{w }}^{n-k}\setminus\left\{\mathbf{0}\right\}}\prod_{j\in I_3}\left|\widehat{\mathds{1}_{\ell_j}}(\langle{\boldsymbol\beta},{\boldsymbol h}_j\rangle)\right|
\end{equation*}
\end{lem}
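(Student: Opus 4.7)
My plan is to start from Lemma~\ref{lem:newSDtoDualsum}, which rewrites $2\cdot SD(\boldsymbol{\tau}(C),\boldsymbol{\tau}(\mathcal{U}_n)) = \sum_{\boldsymbol{\ell}} |F(\boldsymbol{\ell})|$, where, after parametrizing nonzero elements of $C^\bot$ as $\boldsymbol{\beta} H$,
\[F(\boldsymbol{\ell}) \;=\; \sum_{\boldsymbol{\beta}\in\F_{p^{w}}^{n-k}\setminus\{\mathbf{0}\}} \prod_{j=1}^n \widehat{\mathds{1}_{\ell_j}}(\gamma_j),\qquad \gamma_j:=\langle\boldsymbol{\beta},\mathbf{h}_j\rangle.\]
The structural input I will exploit is that because $|I_1|=|I_2|=n-d^\bot+1\ge n-k$, Proposition~\ref{prop:distprop} applied to $C^\bot$ (whose generator matrix is $H$) forces both $H_{I_1}$ and $H_{I_2}$ to have full row rank $n-k$, so the maps $\boldsymbol{\beta}\mapsto(\gamma_j)_{j\in I_s}$ are injective for $s\in\{1,2\}$.

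The first step is to view $F(\boldsymbol{\ell})$ as an inner product in $\boldsymbol{\beta}$ and apply the Cauchy--Schwarz inequality to decouple the $I_1$-coordinates from the $I_2\cup I_3$-coordinates: $|F(\boldsymbol{\ell})|\le A(\boldsymbol{\ell}_{I_1})\cdot B(\boldsymbol{\ell}_{I_2\cup I_3})$, where
\[A(\boldsymbol{\ell}_{I_1})^2 := \sum_{\boldsymbol{\beta}\ne\mathbf{0}}\prod_{j\in I_1}|\widehat{\mathds{1}_{\ell_j}}(\gamma_j)|^2,\qquad B(\boldsymbol{\ell}_{I_2\cup I_3})^2 := \sum_{\boldsymbol{\beta}\ne\mathbf{0}}\prod_{j\in I_2\cup I_3}|\widehat{\mathds{1}_{\ell_j}}(\gamma_j)|^2.\]
Since $A$ and $B$ depend on disjoint blocks of $\boldsymbol{\ell}$, the total sum factors as $\sum_{\boldsymbol{\ell}}|F(\boldsymbol{\ell})|\le\bigl(\sum_{\boldsymbol{\ell}_{I_1}}A\bigr)\bigl(\sum_{\boldsymbol{\ell}_{I_2\cup I_3}}B\bigr)$, and it remains to bound each factor by $2^{\mu|I_s|/2}$ times the appropriate quantity.

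The second step is a second Cauchy--Schwarz combined with Parseval. Setting $T_j(\gamma)^2:=\sum_{\ell_j}|\widehat{\mathds{1}_{\ell_j}}(\gamma)|^2$, Parseval's identity (Lemma~\ref{lem:PIFIF}) gives $\sum_{\gamma\in\F_{p^{w}}}T_j(\gamma)^2=\sum_{\ell_j}\|\mathds{1}_{\ell_j}\|_2^2=1$. For the $A$-sum I apply Cauchy--Schwarz in $\boldsymbol{\ell}_{I_1}$ and swap the order of summation to obtain $\sum_{\boldsymbol{\ell}_{I_1}}A\le 2^{\mu|I_1|/2}\sqrt{\sum_{\boldsymbol{\beta}\ne\mathbf{0}}\prod_{j\in I_1}T_j(\gamma_j)^2}$; the inner quantity is at most $1$ because the injectivity of $\boldsymbol{\beta}\mapsto(\gamma_j)_{j\in I_1}$ allows me to dominate by the unconstrained tensor-product sum $\prod_{j\in I_1}\sum_{\gamma}T_j(\gamma)^2=1$. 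For the $B$-sum, for each fixed $\boldsymbol{\ell}_{I_3}$ an identical Cauchy--Schwarz in $\boldsymbol{\ell}_{I_2}$ produces $\sum_{\boldsymbol{\ell}_{I_2}}B\le 2^{\mu|I_2|/2}\sqrt{\sum_{\boldsymbol{\beta}\ne\mathbf{0}}\prod_{j\in I_3}|\widehat{\mathds{1}_{\ell_j}}(\gamma_j)|^2\prod_{j\in I_2}T_j(\gamma_j)^2}$; pulling the $\prod_{I_3}$ factor out of the $\boldsymbol{\beta}$-sum as its maximum over $\boldsymbol{\beta}\ne\mathbf{0}$ and applying the same Parseval bound to the $\prod_{I_2}T_j^2$ factor (now using the rank of $H_{I_2}$) yields $\sum_{\boldsymbol{\ell}_{I_2}}B\le 2^{\mu|I_2|/2}\max_{\boldsymbol{\beta}\ne\mathbf{0}}\prod_{j\in I_3}|\widehat{\mathds{1}_{\ell_j}}(\gamma_j)|$. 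Summing over $\boldsymbol{\ell}_{I_3}$ and combining the two factor-bounds produces the claimed prefactor $2^{\mu(|I_1|+|I_2|)/2}=2^{\mu(n-d^\bot+1)}$, completing the proof.

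The main technical obstacle is the Parseval step. The sum I need to control runs over $\boldsymbol{\beta}\in\F_{p^{w}}^{n-k}$, not directly over the full coordinate space $\F_{p^{w}}^{|I_s|}$ on which $T_j^2$ tensorizes; the rank-$(n-k)$ property of $H_{I_s}$ (a consequence of $|I_s|=n-d^\bot+1$ together with the Singleton-bound inequality $d^\bot\le k+1$ for $C^\bot$) is precisely what lets me dominate the restricted sum by the unconstrained one and apply Parseval coordinate-wise to obtain the clean bound of $1$.
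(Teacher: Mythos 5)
Your proposal is correct and follows essentially the same route as the paper's proof: Lemma~\ref{lem:newSDtoDualsum} plus the parametrization $\boldsymbol{\alpha}=\boldsymbol{\beta}H$, a Cauchy--Schwarz split of the product into the $I_1$ and $I_2\cup I_3$ blocks, the full-rank/injectivity property of any $n-d^\bot+1$ columns of $H$ to dominate the restricted $\boldsymbol{\beta}$-sum by the unconstrained tensor-product sum, and Parseval together with a counting Cauchy--Schwarz over the leakage labels to extract the factor $2^{\mu/2}$ per coordinate of $I_1\cup I_2$. The only difference is the order of operations (you sum over the $\boldsymbol{\ell}$-blocks before invoking Parseval, whereas the paper bounds each fixed-$\boldsymbol{\ell}$ term by $\prod_j\|\widehat{\mathds{1}_{\ell_j}}\|_2$ first and then uses its Proposition~\ref{prop:fourier2normbound}), which yields the identical bound.
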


Due to the page limitation, the proof can be found in Supplementary Material~\ref{app:prooflemnewsdboundpe}.

Lemma~\ref{lem:newSDboundpe} bounds the statistical distance of a linear code using some functions that are related to a submatrix of its parity check matrix. Note that such submatrix defines a linear code with length $|I_3|.$ Lastly, we provide a bound for such code in Lemma~\ref{lem:newcmgenpe}.

\begin{lem}\label{lem:newcmgenpe}
Let $D\subseteq\F_{p^{w }}^\kappa$ be any code of distance at least $d.$ Consider an arbitrary family of leakage functions ${\boldsymbol\tau}=\left(\tau^{(1)},\cdots, \tau^{(\kappa)}\right)$ where $\tau^{(j)}:\F_{p^{w }}\rightarrow\F_2^\mu.$ Recall that we defined $\mathds{1}_{\ell_j}(x)=1$ if $\tau^{(j)}(x)=\ell_j$ and $0$ otherwise. Let $c_\mu'=\frac{2^\mu\sin(\pi/2^\mu+\pi/2^{4\mu})}{p\sin(\pi/p)}.$ Then
\begin{equation*}
\sum_{{\boldsymbol\ell}\in \F_2^{\mu\times \kappa}} \max_{{\boldsymbol\alpha}\in D\setminus\{\mathbf{0}\}}\prod_{j=1}^\kappa \left|\widehat{\mathds{1}_{\ell_j}}(\alpha_j)\right|\leq 2^{(4\mu+1)\cdot(\kappa-d)}\cdot (c_\mu')^\kappa.
\end{equation*}
\end{lem}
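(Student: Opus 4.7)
The plan is to bound the maximum over codewords by a controlled overhead times a per-coordinate envelope that can be factorised under the sum over $\boldsymbol{\ell}$, and then close with a trigonometric inequality linking $c_\mu$ and $c_\mu'$.

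First I would isolate the ``zero-coordinate'' cost of the maximizer. For each $\boldsymbol{\ell}$, let $\boldsymbol{\alpha}^\star(\boldsymbol{\ell})\in D\setminus\{\mathbf 0\}$ be a maximizer and $T(\boldsymbol{\ell})=\mathrm{supp}(\boldsymbol{\alpha}^\star(\boldsymbol{\ell}))$. Since $D$ has distance at least $d$, $|T(\boldsymbol{\ell})|\ge d$, i.e.\ at most $\kappa-d$ coordinates of $\boldsymbol{\alpha}^\star(\boldsymbol{\ell})$ are zero. Writing $M(\ell_j):=\max_{\alpha\ne 0}|\widehat{\mathds{1}_{\ell_j}}(\alpha)|$ and splitting the product across $T(\boldsymbol{\ell})$ and its complement yields
\[
\max_{\boldsymbol{\alpha}\in D\setminus\{\mathbf 0\}}\prod_{j=1}^{\kappa}\bigl|\widehat{\mathds{1}_{\ell_j}}(\alpha_j)\bigr|\;\le\;\prod_{j\notin T(\boldsymbol{\ell})}\bigl|\widehat{\mathds{1}_{\ell_j}}(0)\bigr|\prod_{j\in T(\boldsymbol{\ell})}M(\ell_j).
\]
Next, with $K=2^{4\mu+1}$, I would pass to a single envelope $\widetilde B(\ell_j):=K^{-1}\bigl|\widehat{\mathds{1}_{\ell_j}}(0)\bigr|+M(\ell_j)$ which by construction dominates $M(\ell_j)$ and dominates $|\widehat{\mathds{1}_{\ell_j}}(0)|$ after multiplication by $K$. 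This absorbs the zero-frequency terms at the cost of paying one factor $K$ per zero coordinate, of which there are at most $\kappa-d$, giving
\[
\max_{\boldsymbol{\alpha}\in D\setminus\{\mathbf 0\}}\prod_{j=1}^{\kappa}\bigl|\widehat{\mathds{1}_{\ell_j}}(\alpha_j)\bigr|\;\le\;2^{(4\mu+1)(\kappa-d)}\prod_{j=1}^{\kappa}\widetilde B(\ell_j).
\]

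Summing over $\boldsymbol{\ell}$ now factorises across coordinates, producing $2^{(4\mu+1)(\kappa-d)}\prod_j\bigl(\sum_{\ell_j}\widetilde B(\ell_j)\bigr)$. Since the pre-images $A^{(j)}_{\ell_j}=(\tau^{(j)})^{-1}(\ell_j)$ partition $\F_{p^w}$ we have $\sum_{\ell_j}|\widehat{\mathds{1}_{\ell_j}}(0)|=1$, and Lemma~\ref{lem:maxcmpe} gives $\sum_{\ell_j}M(\ell_j)\le c_\mu$; together they yield $\sum_{\ell_j}\widetilde B(\ell_j)\le 2^{-(4\mu+1)}+c_\mu$. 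The proof then closes by verifying the trigonometric inequality $2^{-(4\mu+1)}+c_\mu\le c_\mu'$, i.e., after unwinding the definitions and applying $\sin A-\sin B=2\cos\tfrac{A+B}{2}\sin\tfrac{A-B}{2}$,
\[
\tfrac{p\sin(\pi/p)}{2^{5\mu+1}}\;\le\;2\cos\!\Bigl(\tfrac{\pi}{2^\mu}+\tfrac{\pi}{2^{4\mu+1}}\Bigr)\sin\!\Bigl(\tfrac{\pi}{2^{4\mu+1}}\Bigr),
\]
which can be obtained from the standard bounds $p\sin(\pi/p)\le\pi$ and $\sin x\ge 2x/\pi$ on $[0,\pi/2]$ together with a uniform lower bound on the cosine factor in the regime of interest.

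The hard part is precisely this final trigonometric inequality: $c_\mu'$ exceeds $c_\mu$ only by a narrow margin, and the constants $4\mu$ inside the sine defining $c_\mu'$ and $4\mu+1$ in the overhead exponent are tuned exactly so that the slack $c_\mu'-c_\mu$ can absorb the $2^{-(4\mu+1)}$ contributed by the zero-frequency envelope. Careful control of the cosine factor is needed because its half-angle $\pi/2^\mu+\pi/2^{4\mu+1}$ drifts close to $\pi/2$ when $\mu$ is small; the remaining steps are routine once Step~1's support split and Step~2's $\widetilde B$ envelope are in place.
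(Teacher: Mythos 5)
Your decomposition of the maximum is sound: splitting the product over the support $T(\boldsymbol{\ell})$ of the maximizer, paying $K=2^{4\mu+1}$ for each of the at most $\kappa-d$ zero coordinates, passing to the additive envelope $\widetilde B(\ell_j)=K^{-1}|\widehat{\mathds{1}_{\ell_j}}(0)|+M(\ell_j)$, and factorising the sum over $\boldsymbol{\ell}$ are all correct, and Lemma~\ref{lem:maxcmpe} together with $\sum_{\ell_j}|\widehat{\mathds{1}_{\ell_j}}(0)|=1$ does give $\sum_{\ell_j}\widetilde B(\ell_j)\le 2^{-(4\mu+1)}+c_\mu$. This is a genuinely different route from the paper's: there the two frequency regimes are merged into a single pointwise-max envelope $\xi_{p^{w}}(t)=\max\bigl(\zeta_{p^{w}}(t)/p^{w},\,2^{-(4\mu+1)}\bigr)$ (Lemma~\ref{lem:newxi}), and the bound $\sum_i\xi_{p^{w}}(t_i)\le c_\mu'$ is obtained by replacing each cell size $t_i$ by $\max(t_i,p^{w}/2^{4\mu})$ and applying concavity of $\sin$ to the shifted sizes, so the slack in $c_\mu'$ only has to absorb the perturbation $\frac{\pi}{2^{4\mu}}$ \emph{inside} the sine. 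Your additive envelope decouples the two contributions and therefore needs the strictly stronger inequality $c_\mu+2^{-(4\mu+1)}\le c_\mu'$.

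That final inequality is where the proposal breaks: it is \emph{false} for $\mu=1$. Indeed
\begin{equation*}
c_1'-c_1=\frac{2}{p\sin(\pi/p)}\Bigl(\sin\bigl(\tfrac{\pi}{2}+\tfrac{\pi}{16}\bigr)-\sin\bigl(\tfrac{\pi}{2}\bigr)\Bigr)=\frac{2}{p\sin(\pi/p)}\bigl(\cos(\tfrac{\pi}{16})-1\bigr)<0,
\end{equation*}
so $c_1'<c_1$ and a fortiori $c_1+2^{-5}>c_1'$. Equivalently, in your product form the half-angle $\frac{\pi}{2^\mu}+\frac{\pi}{2^{4\mu+1}}$ exceeds $\frac{\pi}{2}$ at $\mu=1$, so the cosine factor is negative and admits no positive uniform lower bound; the ``regime of interest'' you appeal to cannot include $\mu=1$, yet the lemma is stated for all $\mu$ and $\mu=1$ is the only admissible leakage length when $p=3$. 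For $\mu\ge 2$ your inequality does hold (there $\cos(\frac{\pi}{2^\mu}+\frac{\pi}{2^{4\mu+1}})\ge\frac12$ and $\sin(\frac{\pi}{2^{4\mu+1}})\ge 2^{-4\mu}$ give $c_\mu'-c_\mu\ge 2^{-3\mu}/\pi\ge 2^{-(4\mu+1)}$), so the argument closes in that range; but as written the proposal does not prove the lemma in full generality, and to cover $\mu=1$ you would need to recouple the zero-frequency penalty with the cell sizes as the paper does, rather than charging the worst case of both simultaneously.
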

Due to page limitation, the proof can be found in Supplementary Material~\ref{app:prooflemnewcmgenpe}.

Now we are ready to prove Theorem~\ref{thm:new2}. First we restate the theorem.
\thmnewtwo*
\begin{proof}
By Lemma~\ref{lem:newSDboundpe}, we have
\begin{equation*}
SD({\boldsymbol\tau}(C),{\boldsymbol\tau}(\mathcal{U}_n))\leq \frac{1}{2}\cdot 2^{\mu(n-d^\bot+1)}\cdot\sum_{\{\ell_j\}_{j\in I_3}}\max_{{\boldsymbol\beta}\in \F_{p^{w }}^{n-k}\setminus\left\{\mathbf{0}\right\}}\prod_{j\in I_3}\left|\widehat{\mathds{1}_{\ell_j}}(\langle{\boldsymbol\beta},\mathbf{h}_j\rangle)\right|
\end{equation*}
where $I_3\subseteq[n]$ of size $\kappa\triangleq n-2(n-d^\bot+1)=2d^\bot-n-2.$ Let $D=\{(x_j)_{j\in I_3}: \mathbf{x}\in C^\bot\}.$ Since $C^\bot$ is an $[n,n-k,d^\bot\leq k+1]$ code, $D$ is a $[\kappa,k',d']$ code where $k'\leq n-k$ and $d'\geq d^\bot-(n-\kappa).$ This implies that $\kappa-d'\leq n-d^\bot.$ Then by Lemma~\ref{lem:newcmgenpe},

\begin{align*}
\sum_{\{\ell_j\}_{j\in I_3}}\max_{{\boldsymbol\beta}\in \F_{p^{w }}^{n-k}\setminus\left\{\mathbf{0}\right\}}\prod_{j\in I_3}\left|\widehat{\mathds{1}_{\ell_j}}(\langle{\boldsymbol\beta},\mathbf{h}_j\rangle)\right|&= \sum_{\{\ell_j\}_{j\in I_3}} \max_{{\boldsymbol\alpha}\in D\setminus\{\mathbf{0}\}}\prod_{j\in I_3}\left|\widehat{\mathds{1}_{\ell_j}}(\alpha_j)\right|\\
&\leq 2^{(4\mu+1)\cdot(\kappa-d')}\cdot (c_\mu^\prime)^\kappa\\
&\leq 2^{(4\mu+1)\cdot(n-d^\bot)}\cdot (c_\mu^\prime)^{2d^\bot-n-2}.
\end{align*}
So we have

\begin{align*}
SD({\boldsymbol\tau}(C),{\boldsymbol\tau}(\mathcal{U}_n))&\leq \frac{1}{2}\cdot 2^{\mu(n-d^\bot+1)}\cdot 2^{(4\mu+1)\cdot(n-d^\bot)}\cdot (c_\mu^\prime)^{2d^\bot-n-2}\\
&= \frac{1}{2}\cdot 2^{(5\mu+1)\cdot (n-d^\bot) + \mu} \cdot (c_\mu^\prime)^{2d^\bot-n-2}
\end{align*}

which completes the proof.

\end{proof}

\section{Local Leakage Resilience of Additive Secret Sharing Schemes and Shamir's Secret Sharing Schemes over Arbitrary Finite Fields}\label{sec:SSSgenFF}

In this section, we apply our analysis in Section~\ref{sec:LRLinCod} to the family of MDS codes. This generalizes the result of~\cite{BDIR19} to threshold secret sharing schemes defined over field extensions. In order to achieve this, we state the results in Theorems~\ref{thm:new1} and~\ref{thm:new2} when applied to MDS codes.

\begin{cor}\label{cor:MDSthm12}
Let $C\subseteq\F_{p^{w }}^n$ be an MDS $[n,k,n-k+1]$ code with the dual code $C^\bot$ which is an $[n,n-k,k+1]$ code. Furthermore, let ${\boldsymbol\tau}=(\tau^{(1)},\cdots,\tau^{(n)})$ be any family of leakage functions where $\tau^{(j)}:\F_{p^{w }}\rightarrow\F_2^\mu$ for some integer $\mu<\log p.$ Let $c_\mu=\frac{2^\mu\sin\left(\frac{\pi}{2^\mu}\right)}{p\sin\left(\frac{\pi}{p}\right)}$ and $c_\mu^\prime = \frac{2^\mu\sin(\pi/2^\mu + \pi/2^{4\mu})}{p\sin(\pi/p)}.$ Then
\begin{equation*}
SD({\boldsymbol\tau}(C),{\boldsymbol\tau}(\mathcal{U}_n))\leq \frac{1}{2}\cdot \min\left(p^{{w }(n-k)}\cdot c_\mu^{k+1},2^{(5\mu+1)\cdot (n-k-1) + \mu} \cdot (c_\mu^\prime)^{2k-n}\right).
\end{equation*}
\end{cor}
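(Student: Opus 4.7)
The plan is to obtain the stated bound as a direct specialization of Theorems~\ref{thm:new1} and~\ref{thm:new2} to the MDS setting, where the dual distance attains its Singleton bound value.

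First I would record the dual-code parameters: because $C$ is MDS with parameters $[n,k,n-k+1]$, the dual $C^\bot$ is itself MDS with parameters $[n,n-k,d^\bot=k+1]$ (this is the standard fact that the dual of an MDS code is MDS, using Proposition~\ref{prop:distprop}). In particular, we have the exact equality $d^\bot = k+1$ rather than merely the inequality $d^\bot \leq k+1$ used in the general statements of Theorems~\ref{thm:new1} and~\ref{thm:new2}.

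Next, I would apply Theorem~\ref{thm:new1} to $C$. Since $c_\mu < 1$ (using $2^\mu < p$, which follows from the hypothesis $\mu < \log p$), the bound $\tfrac{1}{2} p^{w(n-k)} c_\mu^{d^\bot}$ is monotonically decreasing in $d^\bot$, so substituting $d^\bot = k+1$ gives exactly
\begin{equation*}
SD({\boldsymbol\tau}(C),{\boldsymbol\tau}(\mathcal{U}_n)) \leq \tfrac{1}{2}\, p^{w(n-k)}\, c_\mu^{k+1}.
\end{equation*}
Similarly, applying Theorem~\ref{thm:new2} and substituting $d^\bot = k+1$ into the exponents yields $n - d^\bot = n - k - 1$ and $2d^\bot - n - 2 = 2k - n$, producing
\begin{equation*}
SD({\boldsymbol\tau}(C),{\boldsymbol\tau}(\mathcal{U}_n)) \leq \tfrac{1}{2}\cdot 2^{(5\mu+1)(n-k-1)+\mu}\cdot (c_\mu^\prime)^{2k-n}.
\end{equation*}

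Finally, since both bounds hold simultaneously for the same statistical distance, the overall bound is the minimum of the two right-hand sides, which is exactly the claimed inequality. There is no real obstacle here; the only minor care needed is to verify the monotonicity in $d^\bot$ of the bound from Theorem~\ref{thm:new1} so that replacing $d^\bot$ by $k+1$ is legitimate, and to observe that the exponent $2d^\bot - n - 2$ in Theorem~\ref{thm:new2} may be negative for small $k$ — but in that regime the factor $(c_\mu^\prime)^{2k-n}$ is simply large, and the first bound in the minimum is the operative one, so taking the minimum remains meaningful.
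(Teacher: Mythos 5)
Your proposal is correct and matches the paper's intent exactly: the corollary is stated as an immediate specialization of Theorems~\ref{thm:new1} and~\ref{thm:new2} using the fact that the dual of an MDS code is MDS, so $d^\bot=k+1$, and the paper offers no further argument. Your extra remarks on monotonicity are harmless but unnecessary, since $d^\bot=k+1$ holds with equality here.
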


Using the same approach in~\cite[Theorem $4.7$]{BDIR19}, we have the following result on additive secret sharing over $\F_{p^w}.$
\begin{cor}\label{cor:gen47}
Let $C\subseteq\F_{p^w}^n$ be the code generated with codewords having entries being valid additive shares of $0.$ Letting ${\boldsymbol \tau}=(\tau^{(1)},\cdots,\tau^{(n)})$ be any family of leakage functions where each $\tau^{(j)}$ has $\mu$ bit output for some $\mu<\log p.$ Letting $c_\mu=\frac{2^\mu \sin(\pi/2^\mu)}{p\sin(\pi/p)}.$ Then
\begin{equation*}
SD({\boldsymbol \tau}(C),{\boldsymbol \tau}(U_n))\leq \frac{1}{2}\cdot 2^\mu\cdot c_\mu^{n-2}.
\end{equation*}
\end{cor}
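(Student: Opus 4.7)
The additive secret sharing code $C\subseteq\F_{p^w}^n$ consists of all vectors $(x_1,\ldots,x_n)$ with $\sum_i x_i=0$, so $C$ is an $[n,n-1,2]$ code whose dual $C^\bot$ is the repetition code spanned by $(1,1,\ldots,1)$, with parameters $[n,1,n]$. In particular $d^\bot=n$ and a parity-check matrix for $C$ may be taken to be $H=(1,1,\ldots,1)\in\F_{p^w}^{1\times n}$. My plan is to specialize the proof of Theorem~\ref{thm:new2} to this extremal case: the ``inner'' code handled by Lemma~\ref{lem:newcmgenpe} then becomes a repetition code of full distance, so we can bypass Lemma~\ref{lem:newcmgenpe} altogether and apply Lemma~\ref{lem:maxcmpe} directly, which is precisely why the sharper constant $c_\mu$ appears in the bound in place of $c_\mu'$.

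Concretely, I would first invoke Lemma~\ref{lem:newSDboundpe} with any partition $[n]=I_1\sqcup I_2\sqcup I_3$ satisfying $|I_1|=|I_2|=n-d^\bot+1=1$, so that $|I_3|=n-2$. Since every column $\mathbf{h}_j$ of $H$ equals $1$, the inner product $\langle\boldsymbol{\beta},\mathbf{h}_j\rangle$ collapses to a single scalar $\beta\in\F_{p^w}\setminus\{0\}$, and Lemma~\ref{lem:newSDboundpe} reduces to
\[
SD(\boldsymbol{\tau}(C),\boldsymbol{\tau}(\mathcal{U}_n))\leq \tfrac{1}{2}\cdot 2^\mu\cdot \sum_{\{\ell_j\}_{j\in I_3}}\max_{\beta\neq 0}\prod_{j\in I_3}\left|\widehat{\mathds{1}_{\ell_j}}(\beta)\right|.
\]

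For the remaining sum I would use the elementary inequality $\max_\beta\prod_j a_j(\beta)\leq\prod_j\max_\beta a_j(\beta)$ and then factor the sum over the independent $\ell_j$ to obtain
\[
\sum_{\{\ell_j\}_{j\in I_3}}\max_{\beta\neq 0}\prod_{j\in I_3}\left|\widehat{\mathds{1}_{\ell_j}}(\beta)\right|\leq \prod_{j\in I_3}\left(\sum_{\ell_j}\max_{\beta\neq 0}\left|\widehat{\mathds{1}_{\ell_j}}(\beta)\right|\right).
\]
For each fixed $j$, the preimages $\{(\tau^{(j)})^{-1}(\ell_j)\}_{\ell_j\in\F_2^\mu}$ form a partition of $\F_{p^w}$ into $2^\mu$ sets, so Lemma~\ref{lem:maxcmpe} bounds each inner sum by $c_\mu$. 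Multiplying across the $n-2$ indices in $I_3$ yields $c_\mu^{n-2}$ and therefore the claimed bound $\tfrac{1}{2}\cdot 2^\mu\cdot c_\mu^{n-2}$. I do not anticipate any substantive obstacle: the only real content is recognizing that for the additive code the dual parameters $(1,n)$ are extremal, which makes the two estimates used in Section~\ref{sec:LRLinCod} align and combine with no slack beyond what Lemma~\ref{lem:maxcmpe} already gives.
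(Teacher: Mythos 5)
Your proof is correct and follows essentially the route the paper intends (it cites the approach of \cite[Theorem 4.7]{BDIR19}, which is exactly the Cauchy--Schwarz-on-two-coordinates plus max-bound argument that the paper has packaged into Lemma~\ref{lem:newSDboundpe} and Lemma~\ref{lem:maxcmpe}). Your specialization to the dual repetition code with $d^\bot=n$, giving $|I_1|=|I_2|=1$ and the factor $2^{\mu}$, and the subsequent factorization over $I_3$ bounded by $c_\mu^{n-2}$ via Lemma~\ref{lem:maxcmpe}, are all sound.
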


Utilizing Corollaries~\ref{cor:MDSthm12} and~\ref{cor:gen47} we can obtain similar results on leakage resilience of both additive secret sharing schemes and Shamir's secret sharing schemes as discussed in~\cite[Section 4.2.2 and Section 4.2.3]{BDIR19}.

\section{Local Leakage Resilience of Algebraic Geometric codes based Ramp Secret Sharing Scheme}~\label{sec:AGSSS}
Let $q=p^w$ and $F/\F_q$ be a function field of genus $\mathfrak{g}$ and at least $n+2$ distinct $\F_q$ rational places $P_\infty,P_0,\cdots, P_n.$ Set $\mathcal{P}=\{P_1,\cdots, P_n\}$ and $\mathcal{P}_0=\{P_0\}\cup \mathcal{P}.$ Consider a ramp secret sharing scheme $AGSh_{mP_\infty,\mathcal{P}}$ over $\F_{q}$ based on $C(mP_\infty,\mathcal{P})$ constructed using the technique discussed in Section~\ref{subsec:AGCode}. In this section, we utilize Theorems~\ref{thm:new1} and~\ref{thm:new2} to establish the local leakage resilience of $AGSh_{mP_\infty,\mathcal{P}}.$

\subsection{Local Leakage Resilience of $AGSh_{mP_\infty,\mathcal{P}}$}

First we recall the the local leakage scenario we are considering. We use the algebraic-geometric code based ramp secret sharing scheme $AGSh_{mP_\infty,\mathcal{P}}$ with $t=m-2\mathfrak{g}$ privacy and $2\mathfrak{g}+t+1$ reconstruction to secret share a secret to a set of players $\mathtt{U}=\{U_1,\cdots, U_n\}.$ A passive adversary then chooses $\Theta\subseteq\mathtt{U}$ of size $|\Theta|=\theta<t$ to control as well as the leakage function ${\boldsymbol \tau}=(\tau^{(1)},\cdots, \tau^{(n)})$ with each leakage function outputting $\mu<\log p$ bits each. This defines the view of the adversary as $Leak_{\Theta,{\boldsymbol \tau}}.$ Note that in the view of the adversary, due to the $\theta$ shares that he has learned in full, the remaining required information is less than what is originally needed. In fact, we can no longer apply Theorems~\ref{thm:new1} and~\ref{thm:new2} directly using $C(mP_\infty,\mathcal{P})$ as the linear code. The following lemma provides a linear code $\mC$ that is equivalent to the view of the adversary after learning the shares of the players in $\Theta.$

\begin{lem}\label{lem:compsh}
Consider the secret sharing scheme $AGSh_{n,r,t}=AGSh_{mP_\infty,\mathcal{P}}$ which is used to secret share $s\in \F_q$ where each user $U_i$ gets a share $s_i\in \F_q.$ Let $\Theta\subseteq\mathtt{U}$ be a set of $\theta\leq t=m-2\mathfrak{g}$ players that is corrupted by the adversary and set $\bar{\Theta}=\mathtt{U}\setminus\Theta.$ Consider the following experiment. Let the values of the shares held by the corrupted parties be $\mathbf{x}^{(\Theta)}.$ Let $\left.AGSh_{n,r,t}(s)\right|_{\mathbf{s}^{(\Theta)}=\mathbf{x}^{(\Theta)}}$ be the distribution on the shares conditioned on the revealed values $\mathbf{s}^{(\Theta)}$ being $\mathbf{x}^{(\Theta)}.$ Then there exists an $[n-\theta,m-\theta-\mathfrak{g},d'\geq n-m+1]$ code $\mathcal{C}'\subseteq \F_{q}^{n-\theta}$ with dual code $(\mathcal{C}')^\bot$ with parameter $[n-\theta,n-m+\mathfrak{g},(d')^\bot\geq m-\theta-2\mathfrak{g}+1]$ and a shift vector $\mathbf{b}\in \F_q^n$ such that

\begin{equation*}
\left.AGSh_{n,r,t}(s)\right|_{\mathbf{s}^{(\Theta)}=\mathbf{x}^{(\Theta)}}\equiv \left\{\left(\mathbf{y}^{\bar{(\Theta)}}|\mathbf{0}^{(\Theta)}\right)+\mathbf{b}:\mathbf{y}^{\bar{(\Theta)}}\leftarrow \mathcal{C}'\right\}.
\end{equation*}
\end{lem}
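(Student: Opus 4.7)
\smallskip
\noindent\textbf{Proof plan for Lemma~\ref{lem:compsh}.}
The plan is to identify, explicitly, a particular function that realizes the observed values on $\Theta$ (together with the secret), show that the set of all functions consistent with these constraints is a coset of a certain Riemann--Roch space, and then read off the relevant code $\mathcal{C}'$ and shift vector $\mathbf{b}$. Throughout, recall that the sharing function of $AGSh_{mP_\infty,\mathcal{P}}$ is obtained by picking $f\in \mathcal{L}(mP_\infty)$ uniformly at random subject to $f(P_0)=s$, and setting $s_i=f(P_i)$.

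First I would apply Lemma~\ref{lem:surj} to the $\theta+1$ rational places $P_0$ and $(P_i)_{i\in \Theta}$ with the target vector $(s,\mathbf{x}^{(\Theta)})$. The hypothesis $\theta\leq t=m-2\mathfrak{g}$ gives $m\geq 2\mathfrak{g}+(\theta+1)-1$, so evaluation at these $\theta+1$ places is surjective on $\mathcal{L}(mP_\infty)$, and there exists some $f_0\in \mathcal{L}(mP_\infty)$ with $f_0(P_0)=s$ and $f_0(P_i)=x_i$ for $i\in \Theta$. The kernel of this evaluation map is $\mathcal{L}(D^\ast)$ where $D^\ast\triangleq mP_\infty-P_0-\sum_{i\in\Theta} P_i$, so the full set of $f\in \mathcal{L}(mP_\infty)$ consistent with the conditioning is exactly the affine subspace $f_0+\mathcal{L}(D^\ast)$. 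I would then define
\begin{equation*}
\mathcal{C}'\triangleq\bigl\{(g(P_j))_{j\in\bar{\Theta}} : g\in \mathcal{L}(D^\ast)\bigr\}\subseteq \F_q^{n-\theta},
\end{equation*}
and the shift vector $\mathbf{b}\in\F_q^n$ by $b_i=f_0(P_i)$ for all $i\in[n]$ (so $b_i=x_i$ for $i\in\Theta$). Since any $g\in \mathcal{L}(D^\ast)$ vanishes at every $P_i$ with $i\in\Theta$, the conditional distribution of the share vector is $\mathbf{b}+\{(\mathbf{y}^{(\bar{\Theta})}\mid \mathbf{0}^{(\Theta)}): \mathbf{y}^{(\bar\Theta)}\leftarrow \mathcal{C}'\}$, as required.

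It remains to verify the parameters of $\mathcal{C}'$ and $(\mathcal{C}')^\bot$. For the dimension of $\mathcal{C}'$, note that $\deg(D^\ast)=m-\theta-1\geq 2\mathfrak{g}-1$ (again using $\theta\leq m-2\mathfrak{g}$), so by Riemann's theorem $\dim \mathcal{L}(D^\ast)=m-\theta-\mathfrak{g}$. The kernel of the evaluation $g\mapsto (g(P_j))_{j\in\bar\Theta}$ inside $\mathcal{L}(D^\ast)$ is $\mathcal{L}(D^\ast-\sum_{j\in\bar{\Theta}} P_j)$, which has negative degree in the regime $m<n+1$ typically considered, hence is trivial, giving $\dim \mathcal{C}'=m-\theta-\mathfrak{g}$. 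For the minimum distance $d'$, any nonzero $g\in \mathcal{L}(D^\ast)$ has at most $\deg(mP_\infty)=m$ zeros; already $\theta+1$ of them are accounted for at $P_0$ and $(P_i)_{i\in\Theta}$, so $g$ has at most $m-\theta-1$ zeros among $(P_j)_{j\in\bar{\Theta}}$, and the resulting codeword has weight at least $(n-\theta)-(m-\theta-1)=n-m+1$.

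For the dual code, I would use the standard fact that $\mathcal{C}'$, being the evaluation of a Riemann--Roch space at $n-\theta$ rational places disjoint from $\mathrm{Supp}(D^\ast)$, is itself an AG code, and its dual is again an AG code whose parameters are given by the usual formula once $2\mathfrak{g}-2<\deg(D^\ast)<n-\theta$ (see \cite[Section~2.2]{St08}). The dimension of $(\mathcal{C}')^\bot$ is $(n-\theta)-(m-\theta-\mathfrak{g})=n-m+\mathfrak{g}$, and its distance satisfies $(d')^\bot\geq \deg(D^\ast)-2\mathfrak{g}+2=m-\theta-2\mathfrak{g}+1$. The main (mild) obstacle is checking that the degree bound $2\mathfrak{g}-2<m-\theta-1$ that justifies the AG dual formula does hold; this follows immediately from $\theta\leq m-2\mathfrak{g}$. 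The rest is bookkeeping.
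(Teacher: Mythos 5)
Your proposal is correct and follows essentially the same route as the paper: use Lemma~\ref{lem:surj} to interpolate the conditioned values, recognize the conditional distribution as a coset of the evaluation code of $\mathcal{L}(mP_\infty-P_0-\sum_{i\in\Theta}P_i)$ at the places indexed by $\bar{\Theta}$, and read off the parameters. If anything, your version is slightly cleaner (you interpolate $s$ at $P_0$ and $\mathbf{x}^{(\Theta)}$ simultaneously rather than correcting at $P_0$ afterwards with a multiple of the all-one vector, and you actually verify the code parameters that the paper only asserts), but the underlying argument is the same.
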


\begin{proof}
Without loss of generality, assume that $\Theta=\{U_1,\cdots, U_\theta\}.$ Then by Lemma~\ref{lem:surj}, there exists $p\in\mathcal{L}(mP_\infty)$ such that $p(P_i)=x_i$ for any $i\in \Theta.$ So $\left.AGSh_{n,r,t}(s)\right|_{\mathbf{s}^{(\Theta)}=\mathbf{x}^{(\Theta)}}\equiv \left.AGSh_{n,r,t}(0)\right|_{\mathbf{s}^{(\Theta)}=\mathbf{0}}+\mathbf{p}-(p(P_0)-s)\mathbf{1}_n$
where $\mathbf{1}_n$ is the all-one vector of length $n.$ Set $\mathbf{b}=\mathbf{p}-(p(P_0)-s)\mathbf{1}_n.$ Now we consider $\left.AGSh_{n,r,t}(0)\right|_{\mathbf{s}^{(\Theta)}=\mathbf{0}}.$

Note that $\left.AGSh_{n,r,t}(0)\right|_{\mathbf{s}^{(\Theta)}=\mathbf{0}}$ can be seen as follows: Sample $f\in \mathcal{L}(mP_\infty-P_0-P_1-\cdots-P_\theta)$ and $\mathcal{C}'$ is defined to be $\{(y^{\bar{(\theta)}}: y=(f(P_1),\cdots, f(P_n)), f\in \mathcal{L}(mP_\infty-P_0-\sum_{i\in \Theta}P_i)\}.$ Then $\left.AGSh_{n,r,t}(0)\right|_{\mathbf{s}^{(\Theta)}=\mathbf{0}}$ is equivalent to $\mathcal{C}',$ which is an $[n-\theta,m-\theta-\mathfrak{g},d'\geq n-m+1]$ code with dual code $(\mathcal{C}')^\bot$ an $[n-\theta,n-m+\mathfrak{g},(d')^\bot\geq m-\theta-2\mathfrak{g}+1]$ code.
\end{proof}

Using this observation, we can then provide the leakage resilience of the ramp secret sharing scheme $AGSh_{n,r,t}(s)$ which is defined over $\F_{p^{w }}.$
\begin{cor}\label{cor:AGShOverpe}
The ramp secret sharing scheme $AGSh_{n,r,t}(s)$ defined over $\F_{p^{w }}$  is $(\theta,\mu,\epsilon)$-LL resilient where
\begin{equation*}
\epsilon=\min\left(p^{{w }\left(n-t-\frac{r-1-t}{2}\right)}\cdot c_\mu^{t-\theta+1},2^{(n-t-1)(5\mu+1)+\mu}\cdot (c_\mu^\prime)^{2t-n-\theta}\right).
\end{equation*}
\end{cor}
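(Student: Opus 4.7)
The plan is to reduce the leakage-resilience of $AGSh_{n,r,t}$ to the leakage-resilience of the code $\mathcal{C}'$ furnished by Lemma~\ref{lem:compsh}, and then invoke Theorems~\ref{thm:new1} and~\ref{thm:new2}. Fix two secrets $s_0,s_1\in\F_q$. By Definition~\ref{def:LLR}, what must be bounded is the statistical distance between the joint distributions $(\mathbf{s}^{(\Theta)},\boldsymbol\tau^{(\bar\Theta)}(\mathbf{s}^{(\bar\Theta)}))$ under $s_0$ and under $s_1$. Since $\theta\le t$, the scheme's $t$-privacy guarantees that the marginal distribution of $\mathbf{s}^{(\Theta)}$ is identical under both secrets, so this total variation distance equals the expectation over $\mathbf{x}^{(\Theta)}$ of the conditional statistical distance between $\boldsymbol\tau^{(\bar\Theta)}(\mathbf{s}^{(\bar\Theta)})$ under $s_0$ and under $s_1$, conditioned on $\mathbf{s}^{(\Theta)}=\mathbf{x}^{(\Theta)}$.

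Next I would apply Lemma~\ref{lem:compsh} twice, once for each secret. Conditioned on $\mathbf{s}^{(\Theta)}=\mathbf{x}^{(\Theta)}$, the non-corrupted shares are distributed as $\mathbf{y}^{(\bar\Theta)}+\mathbf{b}_k^{(\bar\Theta)}$ where $\mathbf{y}^{(\bar\Theta)}\leftarrow\mathcal{C}'$ and $\mathbf{b}_k$ is the shift vector associated to $s_k$ and $\mathbf{x}^{(\Theta)}$. Crucially, the same code $\mathcal{C}'$ appears for both secrets; only the shift differs. I would then absorb the shift into the leakage functions by defining, for $k\in\{0,1\}$, new leakage functions $\tilde\tau_k^{(j)}(y)\triangleq\tau^{(j)}(y+b_{k,j})$ for $j\in\bar\Theta$. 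Each $\tilde\tau_k^{(j)}$ still outputs $\mu$ bits, so the conditional statistical distance equals
\[SD\bigl(\tilde{\boldsymbol\tau}_0(\mathcal{C}'),\tilde{\boldsymbol\tau}_1(\mathcal{C}')\bigr).\]
Applying the triangle inequality through the uniform distribution $\mathcal{U}_{n-\theta}$ on $\F_q^{n-\theta}$, and using that a uniform random vector is invariant under translation, this is at most
\[SD\bigl(\tilde{\boldsymbol\tau}_0(\mathcal{C}'),\tilde{\boldsymbol\tau}_0(\mathcal{U}_{n-\theta})\bigr)+SD\bigl(\tilde{\boldsymbol\tau}_1(\mathcal{C}'),\tilde{\boldsymbol\tau}_1(\mathcal{U}_{n-\theta})\bigr).\]

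Each of these two terms is precisely of the form handled by Theorems~\ref{thm:new1} and~\ref{thm:new2}, applied to the $[n-\theta,m-\theta-\mathfrak{g},d'\geq n-m+1]$ code $\mathcal{C}'$ whose dual has minimum distance $(d')^\bot\ge m-\theta-2\mathfrak{g}+1=t-\theta+1$. Theorem~\ref{thm:new1} yields a bound of $\tfrac12 p^{w((n-\theta)-(m-\theta-\mathfrak{g}))}c_\mu^{t-\theta+1}=\tfrac12 p^{w(n-m+\mathfrak{g})}c_\mu^{t-\theta+1}$, and one checks that $n-m+\mathfrak{g}=n-t-\mathfrak{g}=n-t-\tfrac{r-1-t}{2}$, matching the first term of the claimed $\epsilon$. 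Theorem~\ref{thm:new2} yields $\tfrac12\cdot 2^{(5\mu+1)((n-\theta)-(d')^\bot)+\mu}\cdot(c_\mu')^{2(d')^\bot-(n-\theta)-2}$; using the monotonicity of these expressions in $(d')^\bot$ (since $c_\mu'<1$ when $2^\mu<p$) together with the lower bound on $(d')^\bot$, this is at most $\tfrac12\cdot 2^{(5\mu+1)(n-t-1)+\mu}\cdot(c_\mu')^{2t-n-\theta}$. Summing the two triangle-inequality terms cancels the factor of $\tfrac12$, and taking the minimum of the two bounds gives exactly the claimed $\epsilon$.

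The main (minor) obstacle is careful bookkeeping of the dependence of the shift $\mathbf{b}_k$ on both the secret $s_k$ and the revealed values $\mathbf{x}^{(\Theta)}$, and verifying that the $t$-privacy of the AG-based ramp scheme indeed makes the marginal distribution of $\mathbf{s}^{(\Theta)}$ independent of the secret, so that the joint statistical distance reduces cleanly to a bound on the conditional one. Everything else is a direct substitution of parameters of $\mathcal{C}'$ into the two previously proved code-level leakage-resilience theorems.
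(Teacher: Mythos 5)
Your proposal is correct and follows essentially the same route as the paper: condition on the corrupted shares, invoke Lemma~\ref{lem:compsh} to identify the conditional distribution with a shift of the code $\mathcal{C}'$, apply Theorems~\ref{thm:new1} and~\ref{thm:new2} to $\mathcal{C}'$, and use the triangle inequality through the uniform distribution to absorb the factor of $\tfrac12$. Your version merely makes explicit some bookkeeping the paper leaves implicit (the identical marginal of $\mathbf{s}^{(\Theta)}$ under both secrets, absorbing the shift $\mathbf{b}$ into the leakage functions, and the monotonicity in $(d')^\bot$), all of which is sound.
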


\begin{proof}
Let $s,s'\in \F_{p^w}.$ For simplicity of notation, we denote by $SD$ the statistical distance between the outputs of ${\boldsymbol \tau}$ with inputs being the shares generated by $s$ and $s'$ respectively under the assumption that $\mathbf{s}^{(\Theta)}=\mathbf{x}^{(\Theta)}.$ By Lemma~\ref{lem:compsh}, having the adversary seeing the values of $\mathbf{s}^{(\Theta)}=\mathbf{x}^{(\Theta)},$ the adversary specifies any family of $\mu$-bit output leakage functions ${\boldsymbol\tau}^{\bar{(\Theta)}}=(\tau^{(i)})_{i\in \bar{\Theta}}.$ Since the distribution of $AGSh_{n,r,t}(s)$ after the leak of $\mathbf{s}^{(\Theta)}$ is equivalent to an $[n-\theta,m-\theta-\mathfrak{g},d'\geq n-m+1]$ code, by Theorem~\ref{thm:new1}, $SD\leq \frac{1}{2}\cdot p^{{w }(n-m+\mathfrak{g})}\cdot c_\mu^{(d')^\bot}\leq \frac{1}{2}\cdot p^{w (n-m+\mathfrak{g})}\cdot c_\mu^{m-\theta-2\mathfrak{g}+1}= \frac{1}{2}\cdot p^{w \left(n-t-\frac{r-1-t}{2}\right)}\cdot c_\mu^{t-\theta+1}.$ Using triangle inequality, we have that for any $s\neq s'\in \F_{p^{w }},$ we have

\begin{equation*}
SD\leq  p^{w \left(n-t-\frac{r-1-t}{2}\right)}\cdot c_\mu^{t-\theta+1}.
\end{equation*}

\noindent Using the same argument based on Theorem~\ref{thm:new2}, we have
\begin{equation*}
SD\leq  2^{(n-t-1)(5\mu+1)+\mu}\cdot (c_\mu^\prime)^{2t-n-\theta}
\end{equation*}

\end{proof}

\noindent In order to obtain a ramp secret sharing scheme that can be constructed in polynomial time, we construct the code by applying the Garcia-Stichtenoth tower~\cite{GS96} to construct the AG code. However, for such method to be applicable, we need to be working over $\F_q$ where $q$ is a square of a prime. In other words, our resulting ramp secret sharing scheme is defined over $\F_q$ where $q=p^2.$ Hence, applying this result when $w=2,$ we have our first main result.
\mtone*

\subsection{Construction of Algebraic-Geometric Codes-based Ramp Secret Sharing Scheme by Concatenation Scheme}
Recall that for any element in $\F_{p^{w }}$ where $w $ is a positive integer that can be factorized to $w =uv$ for some positive integers $u$ and $v,$ there is a vector space isomorphism between $\F_{p^{w }}$ with $\F_{p^{u}}^{v}.$ Fix one of such isomorphisms and denote it by $\Pi_{u,v}.$

Consider an AG-codes based ramp secret sharing scheme defined over $\F_{p^{w }}.$ Then using $\Pi_{u,v},$ we can map the $n$ shares $s_1,\cdots, s_n\in \F_{p^{w }}$ to $vn$ shares $s'_1,\cdots, s'_{vn}\in \F_{p^{u}}$ where for any
$i=1,\cdots, n, (s'_{v(i-1)+1},\cdots, s'_{vi})=\Pi_{u,v}(s_i).$ Then the resulting ramp secret sharing schemes has $N=vn$ players providing $T=m-2\mathfrak{g}$ privacy and $R=(v-1)n+m+1=\frac{v-1}{v}N+m+1$ reconstruction. Note that this can be easily verified by noting that learning any $T$ shares provides at most $T$ shares of the original AG-code based ramp secret sharing scheme that is defined over $\F_{p^w}.$ Hence by definition, the adversary learns no information about the original secret from any of such $T$ shares. On the other hand, by Pigeon Hole principle, having $R$ shares, there are at least $m+1$ of $i\in \{1,\cdots, n\}$ such that we learned $(s'_{v(i-1)+1},\cdots, s'_{vi}).$ Hence, using the isomorphism $\Pi_{u,v},$ we can learn at least $m+1$ of the original shares $s_i\in \F_{p^w}.$ By definition, such information is sufficient to reconstruct the original secret. Denote the resulting ramp secret sharing scheme by $EAGSh_{N,R,T}$ that secretly shares a secret $s\in \F_{p^{w }}$ to $N=v n$ players over $\F_{p^{u}}.$ Here we denote the set of $N$ players by $\hat{\mathtt{U}}=\{U_1,\cdots, U_N\}.$

 We consider the extension of Lemma~\ref{lem:compsh}.

\begin{lem}\label{lem:newcompsh}
Let $\Theta\subseteq\hat{\mathtt{U}}$ be a set of $\theta\leq T$ players. Consider the following experiment where for a given secret $s\in \F_{p^{w}},$ the $N$ shares $\mathbf{s}=(s'_1,\cdots, s'_N)=EAGSh_{N,R,T}(s)$ are generated while the shares $s_i$ of $U_i$ for all $U_i\in \Theta$ are leaked. Let these values be  $\mathbf{x}^{(\Theta)}.$ Let $EAGSh_{N,R,T}(s)|_{\mathbf{s}^{(\Theta)}=\mathbf{x}^{(\Theta)}}$ be the distribution on the shares conditioned on the revealed values $\mathbf{s}^{(\Theta)}$ being $\mathbf{x}^{(\Theta)}.$ Then there exists an $\left[N-v\theta, v(m-\theta-\mathfrak{g}),D'\geq \frac{N}{v}-m+1\right]$ code $\mathcal{C}'\subseteq \F_{p^{u}}^{N-v\theta}$ with dual code $(\mathcal{C}')^\bot$ with parameter $[N-v\theta, N-vm+v\mathfrak{g},(D')^\bot\geq m-\theta-2\mathfrak{g}+1]$ and a shift vector $\mathbf{b}\in \F_{p^{u}}^N$ such that
\begin{equation*}
EAGSh_{N,R,T}(s)|_{\mathbf{s}^{(\Theta)}=\mathbf{x}^{(\Theta)}}\equiv \left\{\left.\left(\mathbf{y}^{\bar{(\Theta)}}\right|\mathbf{0}^{(\Theta)}\right)+\mathbf{b}:\mathbf{y}^{\bar{(\Theta)}}\leftarrow \mathcal{C}'\right\}.
\end{equation*}
\end{lem}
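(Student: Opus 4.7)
The plan is to reduce everything to Lemma~\ref{lem:compsh} applied to the base scheme $AGSh_{n,r,t}$, and then push the resulting affine description through the coordinate-wise $\F_{p^u}$-isomorphism $\Pi_{u,v}:\F_{p^w}\to\F_{p^u}^v$. By construction, $EAGSh_{N,R,T}$ is the coordinate-wise image of $AGSh_{n,r,t}$ under $\Pi_{u,v}$, so a corruption set $\Theta$ (interpreted as $\theta$ blocks of $v$ extended players each) corresponds canonically to a set $\Theta_0$ of $\theta$ original players, and revealing $\mathbf{x}^{(\Theta)}$ is equivalent to revealing $\mathbf{y}^{(\Theta_0)}:=\Pi_{u,v}^{-1}(\mathbf{x}^{(\Theta)})$. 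Applying Lemma~\ref{lem:compsh} to $AGSh_{n,r,t}$ then yields an $[n-\theta,\,m-\theta-\mathfrak{g},\,d_1\geq n-m+1]$ code $\mathcal{C}_1\subseteq\F_{p^w}^{n-\theta}$ with dual $\mathcal{C}_1^\bot$ of parameters $[n-\theta,\,n-m+\mathfrak{g},\,d_1^\bot\geq m-\theta-2\mathfrak{g}+1]$, together with a shift $\mathbf{b}_1\in\F_{p^w}^n$ describing the conditional distribution.

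Next, define $\mathcal{C}':=\Pi_{u,v}(\mathcal{C}_1)\subseteq\F_{p^u}^{v(n-\theta)}=\F_{p^u}^{N-v\theta}$ by applying $\Pi_{u,v}$ to each coordinate, and let $\mathbf{b}$ be the corresponding image of $\mathbf{b}_1$. The claimed identity for the conditional distribution then follows by applying $\Pi_{u,v}$ coordinate-wise to both sides of the identity supplied by Lemma~\ref{lem:compsh}. Because $\Pi_{u,v}$ is an $\F_{p^u}$-linear isomorphism between $\F_{p^w}$ and $\F_{p^u}^v$, the length of $\mathcal{C}'$ is $v(n-\theta)=N-v\theta$ and $\dim_{\F_{p^u}}\mathcal{C}'=v\cdot\dim_{\F_{p^w}}\mathcal{C}_1=v(m-\theta-\mathfrak{g})$; moreover $D'\geq d_1\geq n-m+1=N/v-m+1$, since every nonzero $c_i\in\F_{p^w}$ is sent by $\Pi_{u,v}$ to a nonzero block in $\F_{p^u}^v$ and hence contributes at least one nonzero $\F_{p^u}$-coordinate.

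The main technical step is the dual. I will use the trace pairing $\mathrm{Tr}:=\mathrm{Tr}_{\F_{p^w}/\F_{p^u}}$: because $\mathcal{C}_1$ is $\F_{p^w}$-linear, the standard $\F_{p^w}$-dual $\mathcal{C}_1^\bot$ equals the $\F_{p^u}$-dual of $\mathcal{C}_1$ under the bilinear form $(\mathbf{a},\mathbf{b})\mapsto\mathrm{Tr}(\sum_i a_i b_i)$; indeed, $\mathrm{Tr}(\lambda\langle\mathbf{b},\mathbf{c}\rangle)=0$ for every $\lambda\in\F_{p^w}$ and $\mathbf{c}\in\mathcal{C}_1$ forces $\langle\mathbf{b},\mathbf{c}\rangle=0$ by nondegeneracy of $\mathrm{Tr}$. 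Fix the basis $\{\beta_1,\ldots,\beta_v\}$ of $\F_{p^w}/\F_{p^u}$ underlying $\Pi_{u,v}$ and its trace-dual basis $\{\beta_1^*,\ldots,\beta_v^*\}$, and let $\Pi^*_{u,v}$ be the coordinate-wise map built from the dual basis. A direct computation gives $\langle\Pi^*_{u,v}(\mathbf{b}),\Pi_{u,v}(\mathbf{c})\rangle_{\F_{p^u}}=\mathrm{Tr}(\langle\mathbf{b},\mathbf{c}\rangle_{\F_{p^w}})$, from which $(\mathcal{C}')^\bot=\Pi^*_{u,v}(\mathcal{C}_1^\bot)$. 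This identity yields the dual dimension $v(n-m+\mathfrak{g})=N-vm+v\mathfrak{g}$, and the same block-to-nonzero-block argument gives $(D')^\bot\geq d_1^\bot\geq m-\theta-2\mathfrak{g}+1$.

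The principal obstacle is establishing the dual-distance bound, because concatenation with a trivial inner code can in general degrade the dual distance, so one cannot appeal to the dual of $\mathcal{C}_1$ naively. The way around this is precisely the trace-pairing identification above, which replaces the naive concatenation picture by an isomorphism-based description of $(\mathcal{C}')^\bot$ (at the cost of switching from $\{\beta_j\}$ to its dual $\{\beta_j^*\}$). Once this identification is in place, the problem reduces to Lemma~\ref{lem:compsh} plus the observation that coordinate-wise $\F_{p^u}$-isomorphisms $\F_{p^w}\to\F_{p^u}^v$ preserve the count of nonzero coordinates up to passing from ``nonzero element'' to ``nonzero block''.
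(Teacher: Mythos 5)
Your proposal follows the same overall route as the paper: apply Lemma~\ref{lem:compsh} to the underlying scheme over $\F_{p^w}$ and transport the resulting affine description of the conditional distribution through the coordinate-wise isomorphism $\Pi_{u,v}$, reading off length, dimension and minimum distance of $\mathcal{C}'$ from the fact that a nonzero $\F_{p^w}$-coordinate maps to a nonzero block. The one place where you genuinely diverge is the dual code: the paper simply asserts $(\mathcal{C}')^\bot=\Pi_{u,v}(\hat{\mathcal{C}}^\bot)$ without comment, whereas you correctly observe that the $\F_{p^u}$-dual of $\Pi_{u,v}(\mathcal{C}_1)$ is the image of $\mathcal{C}_1^\bot$ under the map built from the \emph{trace-dual} basis, justified via nondegeneracy of $\mathrm{Tr}_{\F_{p^w}/\F_{p^u}}$ and the $\F_{p^w}$-linearity of $\mathcal{C}_1$. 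This is a real improvement in rigor: the paper's literal claim holds only for a trace-self-dual basis, though the stated dual parameters are unaffected since $\Pi^*_{u,v}$ is again a coordinate-wise isomorphism preserving the block-support count. One small caveat: the lemma's bookkeeping (a code of length $N-v\theta$) implicitly treats the $\theta$ corrupted extended players as $\theta$ full blocks; the paper words this as a worst case, and your block interpretation is consistent with it, but it is worth flagging that an adversary corrupting $\theta$ scattered extended players is dominated by this case.
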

\begin{proof}
Suppose that $\Theta\subseteq[N]$ with $|\Theta|=\theta<t.$ Then in the worst case, the leaks reveal the $\theta$ values in the corresponding ramp secret sharing scheme over $\F_{p^{w }}.$ Suppose that the $\theta$ values leaked in the corresponding ramp secret sharing scheme over $\F_{p^{w }}$ is $s_1,\cdots, s_\theta.$ Then by Lemma~\ref{lem:compsh}, there exists an $[n-\theta,m-\theta-\mathfrak{g},\hat{d}\geq n-m+1]$ code $\hat{\mathcal{C}}\subseteq \F_{p^{w }}^{n-\theta}$ with dual code $\hat{\mathcal{C}}^\bot$ with parameter $[n-\theta,n-m+\mathfrak{g},\hat{d}^\bot\geq m-\theta-2\mathfrak{g}+1]$ and a shift vector $\hat{\mathbf{b}}\in \F_{p^{w }}^n$ such that the distribution on the corresponding $n$ shares over $\F_{p^{w }}$ conditioned on the revealed values is equivalent to $\left\{\left(\hat{\mathbf{y}}^{(\overline{\Theta'})}|\mathbf{0}^{(\Theta')}\right)+\hat{\mathbf{b}}:\hat{\mathbf{y}}^{(\overline{\Theta'})}\leftarrow \hat{\mathcal{C}}\right\}.$ Then, using the isomorphism $\Pi_{u,v},$ setting $\mathcal{C}'=\Pi_{u,v}(\hat{\mathcal{C}})\subseteq \F_{p^{u}}^{N-v\theta}$ with parameter $[N-v\theta, v(m-\theta-\mathfrak{g}),D'\geq \frac{N}{v}-m+1]$ and dual $(\mathcal{C}')^\bot=\Pi_{u,v}(\hat{\mathcal{C}}^\bot)\subseteq \F_{p^{u}}^{N-v\theta}$ with parameter $[N-v\theta, N-vm+v\mathfrak{g},(D')^\bot\geq m-\theta-2\mathfrak{g}+1]$ along with a shift vector $\mathbf{b}=\Pi_{u,v}(\hat{\mathbf{b}})\in \F_{p^{u}}^N,$ we have the desired result.
\end{proof}

Then we have the following result on the leakage resilience of ramp secret sharing schemes defined in this way.
\begin{cor}\label{cor:EAGShOverpe}
The ramp secret sharing scheme $EAGSh_{N,R,T}(s)$ defined over $\F_{p^{u}}$ is $(\theta,\mu,\epsilon)$- LL resilient where
\begin{equation*}
\epsilon=p^{\frac{w }{2}\left(\frac{v+1}{v}N-T-R+1\right)}\cdot c_\mu^{T-\theta+1}~\mathrm{or}~~\epsilon=2^{(N-(v-1)\theta-T-1)(5\mu+1)+\mu}\cdot (c_\mu^\prime)^{2T+(v-2)\theta-N}
\end{equation*}
\end{cor}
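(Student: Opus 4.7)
The plan is to mirror the proof of Corollary~\ref{cor:AGShOverpe} almost verbatim, replacing the linear code provided by Lemma~\ref{lem:compsh} with the one supplied by Lemma~\ref{lem:newcompsh}. Fix two secrets $s \neq s' \in \F_{p^{w}}$, a corrupted set $\Theta \subseteq \hat{\mathtt{U}}$ of size $\theta \leq T$, a leakage family ${\boldsymbol\tau}^{(\bar\Theta)}$ of $\mu$-bit functions, and any value $\mathbf{x}^{(\Theta)}$ of the corrupted shares. On the event $\mathbf{s}^{(\Theta)} = \mathbf{x}^{(\Theta)}$, the contribution from the coordinates in $\Theta$ to $Leak_{\Theta,{\boldsymbol\tau}}$ is deterministic and identical under $s$ and $s'$, so only the leakage on $\bar\Theta$ can create statistical distance. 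By Lemma~\ref{lem:newcompsh}, the conditional distribution of the $\bar\Theta$-shares is a translate of a fixed code $\mathcal{C}' \subseteq \F_{p^u}^{N - v\theta}$ by a vector $\mathbf{b}$; the translation can be absorbed by redefining each leakage function as $\tau^{(i)}(\cdot + b_i)$, so Theorems~\ref{thm:new1} and~\ref{thm:new2} apply directly to $\mathcal{C}'$ viewed as a code over $\F_{p^u}$.

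Next I would plug the parameters of $\mathcal{C}'$ into the two bounds. The code $\mathcal{C}'$ has length $n' = N - v\theta$, dimension $k' = v(m - \theta - \mathfrak{g})$, and dual minimum distance $(D')^{\bot} \geq m - \theta - 2\mathfrak{g} + 1 = T - \theta + 1$. Theorem~\ref{thm:new1} then gives
\begin{equation*}
SD\bigl({\boldsymbol\tau}^{(\bar\Theta)}(\mathcal{C}'), {\boldsymbol\tau}^{(\bar\Theta)}(\mathcal{U}_{n'})\bigr) \;\leq\; \tfrac{1}{2}\,(p^{u})^{n' - k'}\,c_\mu^{(D')^{\bot}} \;=\; \tfrac{1}{2}\,p^{u(N - vm + v\mathfrak{g})}\,c_\mu^{T - \theta + 1},
\end{equation*}
while Theorem~\ref{thm:new2} gives
\begin{equation*}
SD\bigl({\boldsymbol\tau}^{(\bar\Theta)}(\mathcal{C}'), {\boldsymbol\tau}^{(\bar\Theta)}(\mathcal{U}_{n'})\bigr) \;\leq\; \tfrac{1}{2}\,2^{(5\mu+1)(n' - (D')^{\bot}) + \mu}\,(c_\mu^{\prime})^{2(D')^{\bot} - n' - 2},
\end{equation*}
and the exponent of $2$ satisfies $n' - (D')^{\bot} \leq N - (v-1)\theta - T - 1$ while $2(D')^{\bot} - n' - 2 \geq 2T + (v-2)\theta - N$.

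The remaining work is algebraic bookkeeping. Using $w = uv$, $n = N/v$, $T = m - 2\mathfrak{g}$, and $R = \tfrac{v-1}{v}N + m + 1$, one checks
\begin{equation*}
u(N - vm + v\mathfrak{g}) \;=\; \tfrac{w}{v}(N - vm + v\mathfrak{g}) \;=\; \tfrac{w}{2}\Bigl(\tfrac{v+1}{v}N - T - R + 1\Bigr),
\end{equation*}
so that the Theorem~\ref{thm:new1} bound matches the first alternative for $\epsilon$, and the Theorem~\ref{thm:new2} bound matches the second. Finally, applying the triangle inequality
\begin{equation*}
SD\bigl(Leak_{s}\mid \mathbf{x}^{(\Theta)},\; Leak_{s'}\mid \mathbf{x}^{(\Theta)}\bigr) \;\leq\; SD(Leak_{s}\mid \mathbf{x}^{(\Theta)},\; {\boldsymbol\tau}(\mathcal{U}_{n'})) + SD(Leak_{s'}\mid \mathbf{x}^{(\Theta)},\; {\boldsymbol\tau}(\mathcal{U}_{n'}))
\end{equation*}
absorbs the two factors of $\tfrac{1}{2}$, and averaging over $\mathbf{x}^{(\Theta)}$ preserves the bound, yielding the stated $\epsilon$.

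There is no conceptual obstacle here: the hard content was already extracted in Theorems~\ref{thm:new1}, \ref{thm:new2} and in the structural Lemma~\ref{lem:newcompsh}. The only thing to be careful about is the parameter substitution under the $\Pi_{u,v}$ identification — in particular, keeping straight that $\mathcal{C}'$ is a code over the small field $\F_{p^u}$ (not over $\F_{p^w}$), so the factor in Theorem~\ref{thm:new1} is $(p^u)^{n' - k'}$ rather than $(p^w)^{n' - k'}$, and verifying that the resulting exponent of $p$ cleanly equals $\tfrac{w}{2}\bigl(\tfrac{v+1}{v}N - T - R + 1\bigr)$ after eliminating $n$, $m$, $\mathfrak{g}$ in favor of $N$, $T$, $R$, $v$.
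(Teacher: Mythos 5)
Your proposal is correct and follows essentially the same route as the paper: condition on the corrupted shares, invoke Lemma~\ref{lem:newcompsh} to identify the conditional distribution with a shifted code $\mathcal{C}'\subseteq\F_{p^u}^{N-v\theta}$, apply Theorems~\ref{thm:new1} and~\ref{thm:new2} with $n'-k'=N-vm+v\mathfrak{g}$ and $(D')^\bot\geq T-\theta+1$, and finish with the triangle inequality; your parameter bookkeeping (in particular $u(N-vm+v\mathfrak{g})=\frac{w}{2}\bigl(\frac{v+1}{v}N-T-R+1\bigr)$) matches the paper's. The only points you spell out more carefully than the paper — absorbing the shift $\mathbf{b}$ into the leakage functions and noting that the base field for the Theorem~\ref{thm:new1} factor is $\F_{p^u}$ — are both handled correctly.
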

\begin{proof}
The proof uses the same argument as Corollary~\ref{cor:AGShOverpe}. For simplicity of notation, we denote $\mathcal{S}$ the statistical distance between ${\boldsymbol\tau}\left(EAGSh_{N,R,T}(s)|_{\mathbf{s}^{(\Theta)}=\mathbf{x}^{(\Theta)}}\right)$ and ${\boldsymbol\tau}\left(EAGSh_{N,R,T}(s')|_{\mathbf{s}^{(\Theta)}=\mathbf{x}^{(\Theta)}}\right)).$ Then for any $s\neq s'\in \F_{p^{w }},$ Theorem~\ref{thm:new1} implies
\begin{equation*}
\mathcal{S}\leq  p^{u\left(\frac{(v+1)N}{2}-\frac{v}{2}T-\frac{v}{2}R+\frac{v}{2}\right)}\cdot c_\mu^{T-\theta+1}=p^{\frac{w}{2}\left(\frac{v+1}{v}N-T-R+1\right)}\cdot c_\mu^{T-\theta+1}.
\end{equation*}
Similarly, Theorem~\ref{thm:new2} implies
\begin{equation*}
\mathcal{S}\leq  2^{(N-(v-1)\theta-T-1)(5\mu+1)+\mu}\cdot (c_\mu^\prime)^{2T+(v-2)\theta-N}
\end{equation*}
\end{proof}

As before, we aim to obtain a ramp secret sharing scheme that can be constructed in polynomial time. So by applying Garcia-Stichtenoth tower~\cite{GS96}, the original AG code is defined over $\F_{p^2}.$ We can then apply this result to the special case when $w=v=2$ which is discussed in the following main theorem.

\mttwo*

Lastly, we provide some analysis on the results presented in Main Theorems~\ref{mt:1} and~\ref{mt:2} to determine whether it is possible to have a ramp secret sharing scheme defined over a field extension that has a better leakage resilience property compared to a ramp secret sharing scheme defined over a prime field with comparable parameters. In other words, we assume that they are defined in fields with approximately the same size, approximately the same length $N$ and privacy $T.$ Having fixed these parameters, we can then determine the reconstruction levels, which is denoted by $R_1=T+\frac{N}{\sqrt{q}-1}+1$ for construction in Main Theorem~\ref{mt:1} and $R_2=T+\frac{N}{2}+\frac{N}{q-1}+T+1$ for construction in Main Theorem~\ref{mt:2}. Then, assuming that $\mu<\log\sqrt{q},$ we can find some values of the parameters such that the leakage resilience property of the construction in Main Theorem~\ref{mt:1} is better than that of the construction in Main Theorem~\ref{mt:2}. The result is summarized in Lemma~\ref{lem:mt1better}.

\begin{lem}\label{lem:mt1better}
Let $N,T,R_1,R_2,\theta, \mu,q$ be as defined above such that $T<\frac{\sqrt{q}}{q-1}N-1$ while we set $\theta\geq \max\left(2T-\frac{\sqrt{q}}{q-1}N+1,(\rho+1)T-\rho N-(\rho-1)\right)$ where $\rho= \frac{\log\left(\frac{5}{3}\right)}{\log\left(\frac{51}{50}\right)}.$ Furthermore, let $\mu$ be chosen such that $\max\left(2,\frac{1}{5}\log(\sqrt{q})\right)\leq \mu<\log(\sqrt{q})$ for some $q> 16.$ Assume that an adversary $\mathcal{A}$ has access to the full shares of any $\theta$ players and $\mu$ bits from the remaining players. Let $AGSh_{n,R_1,T}$ be a ramp secret sharing scheme defined over $\F_{q_1}$ with $N$ players, $T$ privacy by Main Theorem~\ref{mt:1} for some prime square $q_1\approx q$ and $EAGSh_{n,R_2,T}$ be a ramp secret sharing scheme defined over $\F_{q_2}$ with $N$ players and $T$ privacy by Main Theorem~\ref{mt:2} for some prime $q_2\approx q.$ Then when $N$ is sufficiently large, $AGSh_{N,R_1,T}$ has a stronger leakage resilience against $\mathcal{A}$ compared to $EAGSh_{N,R_2,T}.$
\end{lem}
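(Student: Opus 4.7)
The plan is to prove $\epsilon_1 < \epsilon_2$ by writing $\epsilon_i = \min(A_i, B_i)$, where $A_i$ and $B_i$ denote the two entries of the $\min$ expression in Main Theorems~\ref{mt:1} and~\ref{mt:2} respectively, and by showing both $A_1 < A_2$ and $A_1 < B_2$. Once these are verified, $\epsilon_1 \leq A_1 < \min(A_2, B_2) = \epsilon_2$. The two lower bounds on $\theta$ in the hypothesis are tailored to these two comparisons respectively.

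Two preliminaries streamline the computations. First, the algebraic identity
\[
\frac{N}{q-1} - \frac{N}{\sqrt{q}-1} = -\frac{N\sqrt{q}}{q-1}
\]
collapses the $q$-exponents in $A_1/A_2$. Second, since $c_\mu$ and $c_\mu'$ in Main Theorems~\ref{mt:1} and~\ref{mt:2} differ only through the characteristic (namely $\sqrt{q}$ versus $q$) appearing in the denominator, the ratios $c_{\mu,1}/c_{\mu,2}$ and $c_{\mu,1}'/c_{\mu,2}'$ both equal a common quantity $r = f(\pi/q)/f(\pi/\sqrt{q})$, where $f(x) = \sin(x)/x$. Since $f$ is decreasing on $(0,\pi]$, we have $r > 1$, yet $r < q$ for any $q > 16$.

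For $A_1 < A_2$, a direct computation (treating $q_1 \approx q_2 \approx q$) yields $A_1/A_2 = q^{T - N\sqrt{q}/(q-1)} \, r^{T-\theta+1}$. The hypothesis $T < \sqrt{q}N/(q-1) - 1$ drives the $q$-exponent strictly below $-1$, while the first lower bound $\theta \geq 2T - N\sqrt{q}/(q-1) + 1$ rearranges to $T - \theta + 1 \leq N\sqrt{q}/(q-1) - T$. Substituting the larger bound (permissible since $r > 1$) gives $A_1/A_2 \leq (r/q)^{N\sqrt{q}/(q-1) - T}$. With $r/q < 1$ and the exponent exceeding $1$, this yields $A_1 < A_2$.

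For $A_1 < B_2$, the second lower bound $\theta \geq (\rho+1)T - \rho N - (\rho - 1)$ rearranges cleanly to $T - \theta + 1 \leq \rho(N - T + 1)$, which upper-bounds the exponent of $c_{\mu,1}$ in $A_1$. Taking $\log_2$ of the ratio $A_1/B_2$, the desired inequality reduces to showing the negativity of a linear combination whose main favourable terms are $(N - \theta - T - 1)(5\mu+1)$ and $\rho(N - T + 1)\log_2(1/c_{\mu,1})$, and whose main adverse terms are $(N - T - N/(\sqrt{q}-1))\log_2 q$ and $(N - 2T)\log_2 c_{\mu,2}'$. Under the prescribed regime $\max(2, \tfrac{1}{5}\log\sqrt{q}) \leq \mu < \log\sqrt{q}$ with $q > 16$, the concavity of $\sin$ (as already exploited in Lemma~\ref{lem:newsumpthroot}) delivers sharp two-sided estimates of $c_{\mu,1}, c_{\mu,2}', r$, and the precise value $\rho = \log(5/3)/\log(51/50)$ emerges as the critical ratio balancing the $(5\mu+1)\log 2$ scaling against the $\log q$ scaling. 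The principal obstacle is this calibration: one has to sharpen the trigonometric estimates enough to extract $\rho$ uniformly over the entire allowed range of $\mu$ and $q$, and then to absorb the small discrepancy between $q$ and its approximants $q_1, q_2$ into the ``$N$ sufficiently large'' proviso.
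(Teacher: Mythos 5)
Your decomposition is genuinely different from the paper's. Writing $\epsilon_1,\epsilon_2$ for the two branches of the $\min$ in Main Theorem~\ref{mt:1} and $\epsilon_3,\epsilon_4$ for those in Main Theorem~\ref{mt:2}, the paper proves $\epsilon_1<\epsilon_2$ (Proposition~\ref{pro:1vs2}), $\epsilon_3<\epsilon_4$ (Proposition~\ref{pro:3vs4}) and $\epsilon_1<\epsilon_3$ (Proposition~\ref{pro:1vs3}) and then chains them; you propose $\epsilon_1<\epsilon_3$ and $\epsilon_1<\epsilon_4$ directly. Your first comparison ($A_1<A_2$) is correct and essentially identical to Proposition~\ref{pro:1vs3}, including the identity $\frac{N}{q-1}-\frac{N}{\sqrt q-1}=-\frac{N\sqrt q}{q-1}$ and the observation that the ratio of the $c$-constants is $r=\frac{q\sin(\pi/q)}{\sqrt q\sin(\pi/\sqrt q)}$ with $1<r$ and $r/q<1$.

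The gap is in the second comparison, $A_1<B_2$ (i.e.\ $\epsilon_1<\epsilon_4$). First, the mechanism you describe points the wrong way: $T-\theta+1\le\rho(N-T+1)$ is an \emph{upper} bound on the exponent of $c_{\mu,1}<1$ in $A_1$, so it bounds $A_1$ from \emph{below} and cannot be used to upper-bound it. (In the paper this $\rho$-condition serves the comparison of $\epsilon_1$ with $\epsilon_2$, where after cancelling powers of $c_\mu'$ the factor carrying the exponent $T-\theta+1$ is $\bigl(\sin(\pi/2^\mu+\pi/2^{4\mu})/\sin(\pi/2^\mu)\bigr)^{T-\theta+1}>1$, for which an upper bound on the exponent is the right tool; $\rho$ does not calibrate $(5\mu+1)$ against $\log q$.) Second, and decisively, $A_1<B_2$ is false in part of the admissible range: take $q\approx 2^{20}$, $\mu=2$ (allowed, since $\max(2,\tfrac15\log\sqrt q)=2$), $T=2$, $\theta=1$, $N$ large. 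Then $\log_2 A_1\approx\bigl(N-T-\tfrac{N}{\sqrt q-1}\bigr)\log_2 q\approx 20N$, while $\log_2 B_2\approx(N-\theta-T-1)(5\mu+1)\approx 11N$, so $A_1\gg B_2$. The stated range of $\mu$ only guarantees $5\mu+1\ge\tfrac12\log_2 q+1$, which cannot absorb the adverse factor $q^{N-T-\mathfrak{g}_1}$. The paper's route never makes this cross-comparison; where its propositions do fight a power of $q$ they explicitly assume $\mu\ge\tfrac15\log q$ (equivalently $\tfrac25\log\sqrt q$), which is strictly stronger than the lemma's $\mu\ge\tfrac15\log\sqrt q$. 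To complete your route you would need both to repair the direction of the $\rho$-argument and to work under that stronger hypothesis on $\mu$.
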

\subsection{Proof of Lemma~\ref{lem:mt1better}}
In this section, we prove Lemma~\ref{lem:mt1better}. In order for the comparison to be done, we assume that they have approximately the same size of $q$ (where $q$ is a square of a prime for Main Theorem~\ref{mt:1} and it is a prime number for Main Theorem~\ref{mt:2}). Since we want $\mu\geq 1,$ we assume that the $\sqrt{q}\geq 3.$ Furthermore, we will also consider the ramp secret sharing schemes to have approximately the same length $N$ and privacy $T.$ Let $\mathfrak{g}_1$ and $\mathfrak{g}_2$ be the genus of the function field considered in the two main theorems respectively. Then we have $\mathfrak{g}_1\approx \frac{N}{\sqrt{q}-1}$ and $\mathfrak{g}_2\approx\frac{N}{q-1}$~\cite{GS96}. Then, denoting the reconstruction levels of the two main theorems to be $R_1$ and $R_2$ respectively, it can be easily verified that $R_1=T+\frac{N}{\sqrt{q}-1}+1<T+\frac{N}{2}+\frac{N}{q-1}+T+1=R_2$ if $q\geq 6.$ Lastly, we will also consider that the number of corrupted players $\theta$ and the number of leaked bits from the remaining players $\mu$ in both cases are set to be the same. Hence we assume that $\mu<\log \sqrt{q}.$

We aim to show that in some situation, the leakage resilience of $AGSh_{N,R_1,T}$ is strictly stronger than $EAGSh_{N,R_2,T}.$ More specifically, in some cases, the upper bound of statistical distance between the leak from any two secrets shared by $AGSh_{N,R_1,T}$ is strictly smaller than the statistical distance of those shared by $EAGSh_{N,R_2,T}.$ For simplicity of notation, we set $c_1=\frac{2^\mu \sin\left(\frac{\pi}{2^\mu}\right)}{\sqrt{q}\sin\left(\frac{\pi}{\sqrt{q}}\right)}, c_1^\prime=\frac{2^\mu \sin\left(\frac{\pi}{2^\mu}+\frac{\pi}{2^{4\mu}}\right)}{\sqrt{q}\sin\left(\frac{\pi}{\sqrt{q}}\right)}, c_2=\frac{2^\mu \sin\left(\frac{\pi}{2^\mu}\right)}{q\sin\left(\frac{\pi}{q}\right)}$ and $c_2^\prime=\frac{2^\mu \sin\left(\frac{\pi}{2^\mu}+\frac{\pi}{2^{4\mu}}\right)}{q\sin\left(\frac{\pi}{q}\right)}.$ Furthermore, we also simplify the notations for the $\epsilon$ guarantee for the main theorems. More specifically, we let $\epsilon_1=q^{\left(N-T-\mathfrak{g}_1\right)}\cdot c_1^{T-\theta+1},~\epsilon_2=2^{(N-T-1)(5\mu+1)+\mu}\cdot (c_1^\prime)^{2T-N-\theta},\epsilon_3=q^{\left(N-2T-\mathfrak{g}_2\right)}\cdot c_2^{T-\theta+1}$ and $\epsilon_4=2^{(N-\theta-T-1)(5\mu+1)+\mu}\cdot (c_2^\prime)^{2T-N}.$

First we compare $\epsilon_1$ and $\epsilon_2.$ Note that in contrast to~\cite{BDIR19} where $q>N$ and hence $\epsilon_1$ is asymptotically larger than $\epsilon_2,$ in our case, we can let $q$ to be a constant even when $N$ goes to infinity. Lemma~\ref{pro:1vs2} shows that such difference enables $\epsilon_1$ to be smaller than $\epsilon_2.$ Combined with Main Theorem~\ref{mt:1}, Lemma~\ref{pro:1vs2} implies that $AGSh_{n,R_1,T}$ is $(\theta, \mu, \epsilon)$-LL Resilient for $\theta<T,\mu<\log \sqrt{q}$ and $\epsilon=\min(\epsilon_1,\epsilon_2)=\epsilon_1.$

\begin{prop}\label{pro:1vs2}
Let $\max(2,\frac{2}{5} \log(\sqrt{q}))\leq \mu <\log \sqrt{q}$ and assume that we have $N-T+1\geq \frac{\log\left(\frac{51}{50}\right)}{\log\left(\frac{5}{3}\right)}\left(T-\theta+1\right).$ Then when $N$ is sufficiently large, we have $\epsilon_1<\epsilon_2.$
\end{prop}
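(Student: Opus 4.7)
The plan is to show the logarithmic inequality $\log \epsilon_2 - \log \epsilon_1 > 0$. Expanding the definitions,
\begin{equation*}
\log \epsilon_2 - \log \epsilon_1 = (N-T-1)(5\mu+1) + \mu + (2T-N-\theta)\log c_1' - (N-T-\mathfrak{g}_1)\log q - (T-\theta+1)\log c_1.
\end{equation*}
Using the algebraic identity $T-\theta+1 = (2T-N-\theta) + (N-T+1)$ to split the $\log c_1$ term and regrouping the powers of $q$ and $2$ as $(N-T-1)(5\mu+1)-(N-T-\mathfrak{g}_1)\log q = (N-T-1)\bigl[(5\mu+1)-\log q\bigr]+(\mathfrak{g}_1-1)\log q$, I would recast the quantity as
\begin{equation*}
\log \epsilon_2 - \log \epsilon_1 = (N-T-1)\bigl[(5\mu+1)-\log q\bigr] + (\mathfrak{g}_1-1)\log q + \mu + (2T-N-\theta)\log(c_1'/c_1) + (N-T+1)|\log c_1|,
\end{equation*}
so the goal becomes showing this decomposition is positive for $N$ sufficiently large.

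The next step is to establish two quantitative bounds. From $\mu \geq \tfrac{2}{5}\log\sqrt{q} = \tfrac{1}{5}\log q$ one gets $2^{5\mu+1} \geq 2q$, hence $(5\mu+1)-\log q \geq 1 \geq \log(5/3)$. From $\mu \geq 2$ I would verify $c_1'/c_1 \leq 51/50$ by writing $\sin(\pi/2^\mu+\pi/2^{4\mu}) \leq \sin(\pi/2^\mu)+(\pi/2^{4\mu})\cos(\pi/2^\mu)$ and estimating $\cot(\pi/2^\mu)\cdot\pi/2^{4\mu} \leq (2^\mu/\pi)\cdot\pi/2^{4\mu} = 2^{-3\mu} \leq 2^{-6} = 1/64 < 1/50$; then $c_1'/c_1 \leq 1 + \cot(\pi/2^\mu)\cdot\pi/2^{4\mu} \leq 51/50$.

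I then split on the sign of $2T-N-\theta$. If $2T-N-\theta \geq 0$, then $(2T-N-\theta)\log(c_1'/c_1) \geq 0$ since $c_1'/c_1 > 1$, so every summand in the decomposition is nonnegative and $(\mathfrak{g}_1-1)\log q$ is strictly positive for $\mathfrak{g}_1 \geq 2$, which holds for $N$ large. If $2T-N-\theta < 0$, using $|2T-N-\theta| \leq T-\theta+1$ and the bound $\log(c_1'/c_1) \leq \log(51/50)$ gives $|(2T-N-\theta)\log(c_1'/c_1)| \leq (T-\theta+1)\log(51/50)$. The hypothesis translates to $(N-T+1)\log(5/3) \geq (T-\theta+1)\log(51/50)$, so combined with $(N-T-1)[(5\mu+1)-\log q] \geq (N-T-1)\log(5/3)$ this yields
\begin{equation*}
(N-T-1)\bigl[(5\mu+1)-\log q\bigr] + (2T-N-\theta)\log(c_1'/c_1) \geq -2\log(5/3),
\end{equation*}
a fixed constant. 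The remaining positive terms $(\mathfrak{g}_1-1)\log q + \mu + (N-T+1)|\log c_1|$ then dominate this constant as soon as $\mathfrak{g}_1$ is large enough, which occurs for $N$ sufficiently large since $\mathfrak{g}_1 \approx N/(\sqrt{q}-1)$.

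The hardest step will be the uniform verification of $c_1'/c_1 \leq 51/50$ for every $\mu \geq 2$. Although $\pi/2^{4\mu}$ is a much smaller perturbation than $\pi/2^\mu$, the factor $\cot(\pi/2^\mu)$ grows like $2^\mu/\pi$ as $\mu \to \infty$, so the product $\cot(\pi/2^\mu)\cdot\pi/2^{4\mu}$ must be carefully tracked to stay below $1/50$; this is precisely what forces the constraint $\mu \geq 2$ and dictates the appearance of the specific constant $\log(51/50)/\log(5/3)$ in the hypothesis of the proposition.
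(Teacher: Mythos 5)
Your overall strategy---passing to $\log\epsilon_2-\log\epsilon_1$, isolating the ratio $c_1'/c_1$ via the identity $T-\theta+1=(2T-N-\theta)+(N-T+1)$, checking $c_1'/c_1\le 51/50$ for $\mu\ge 2$ by the tangent-line bound $\sin(x+\delta)\le\sin x+\delta\cos x$ together with $\cot x\le 1/x$, and letting $(\mathfrak{g}_1-1)\log q\to\infty$ absorb a bounded defect---is sound and is the additive counterpart of the paper's argument, which works multiplicatively with $\Delta=\epsilon_1/\epsilon_2$ and bounds it by $\left(c_1'/c_1\right)^{T-\theta+1}\cdot\left(c_1'/2\right)^{N-T+1}\cdot 2^{(7-\mathfrak{g}_1)\mu+2}$. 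One genuine difference: the paper extracts its per-coordinate gain from $c_1'/2<3/5$ (via Proposition A.1 of \cite{BDIR19}), whereas you extract it from $(5\mu+1)-\log q\ge 1$, which is where the lower bound $\mu\ge\frac{2}{5}\log\sqrt q$ enters; both are legitimate. Your explicit verification of $c_1'/c_1<51/50$ is also a useful addition, since the paper merely asserts it.

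There is, however, one false step. In the case $2T-N-\theta<0$ you claim $|2T-N-\theta|\le T-\theta+1$. Since $2T-N-\theta=(T-\theta+1)-(N-T+1)$, when this quantity is negative its absolute value equals $(N-T+1)-(T-\theta+1)$, and your claimed inequality is equivalent to $N-T+1\le 2(T-\theta+1)$, which does not follow from the hypothesis $N-T+1\ge\frac{\log(51/50)}{\log(5/3)}(T-\theta+1)$ (take $\theta=T$ and $N-T$ large). The repair is immediate: use the correct bound $|2T-N-\theta|\le N-T+1$, so that $(2T-N-\theta)\log(c_1'/c_1)\ge -(N-T+1)\log(51/50)$, and pair this loss against $(N-T-1)\bigl[(5\mu+1)-\log q\bigr]\ge N-T-1\ge (N-T+1)-2$; since $\log_2(51/50)<1$, the sum of these two terms is at least $-2$, again a fixed constant dominated by $(\mathfrak{g}_1-1)\log q$ for $N$ large. (With this repair your route in fact no longer uses the stated relation between $N-T+1$ and $T-\theta+1$; that hypothesis is what the paper needs because its factorization replaces $(c_1/c_1')^{T-\theta+1}$ by the larger quantity $(c_1'/c_1)^{T-\theta+1}$, which must then be offset by $(c_1'/2)^{N-T+1}$.)
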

\begin{proof}
Let $\Delta=\frac{\epsilon_1}{\epsilon_2}.$ We aim to show that $\Delta<1.$ By a simple algebraic manipulation, we have
\begin{eqnarray*}
\Delta\leq \left(\frac{\sin\left(\frac{\pi}{2^\mu}+\frac{\pi}{2^{4\mu}}\right)}{\sin\left(\frac{\pi}{2^\mu}\right)}\right)^{T-\theta+1}\cdot \left(\frac{c_1'}{2}\right)^{N-T+1}\cdot 2^{(7-\mathfrak{g}_1)\mu+2}.
\end{eqnarray*}
Assuming that $\mu\geq 2,$ we can verify that $\frac{\sin\left(\frac{\pi}{2^\mu}+\frac{\pi}{2^{4\mu}}\right)}{\sin\left(\frac{\pi}{2^\mu}\right)}< \frac{51}{50}.$ By~\cite[Proposition~$A.1$]{BDIR19}, $c_1^\prime\leq 2^{-\frac{1}{2^{2\mu+2}}+\frac{4}{q}}.$ Note that since we want $\mu\geq 2,$ we have $q>16.$ Utilizing such observation, we have $\frac{c_1'}{2}<2^{-\frac{49}{64}}<\frac{3}{5}.$ Together with the assumption on the values of $\theta, T$ and $N,$ these upper bounds ensure that  $\left(\frac{\sin\left(\frac{\pi}{2^\mu}+\frac{\pi}{2^{4\mu}}\right)}{\sin\left(\frac{\pi}{2^\mu}\right)}\right)^{T-\theta+1}\cdot \left(\frac{c_1'}{2}\right)^{N-T+1}<1.$ Recall that $\mathfrak{g}_1\approx\frac{N}{\sqrt{q}-1}.$ Hence when $N$ is sufficiently large, we again have $2^{(7-\mathfrak{g}_1)\mu+2}\leq 1.$ This completes the proof that $\Delta<1.$


\end{proof}

Similarly, we establish some situations where $\epsilon_3<\epsilon_4.$ Combined with Main Theorem~\ref{mt:2}, Lemma~\ref{pro:3vs4} implies that $EAGSh_{n,R_2,T}$ is $(\theta, \mu, \epsilon)$-LL Resilient for $\theta<T,\mu<\log q$ and $\epsilon=\min(\epsilon_3,\epsilon_4)=\epsilon_3.$

\begin{prop}\label{pro:3vs4}
Let $\frac{1}{5}\log(q)\leq \mu<\log q.$ Then when $N$ is sufficiently large, $\epsilon_3<\epsilon_4.$
\end{prop}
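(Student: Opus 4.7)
My plan is to mirror the structure of Proposition~\ref{pro:1vs2}: set $\Delta := \epsilon_3 / \epsilon_4$ and show that $\Delta \to 0$ as $N \to \infty$. First, in the regime of Lemma~\ref{lem:mt1better} we have $N > 2T$ (otherwise $R_2 = \frac{N}{2} + T + \frac{2N}{q-1} + O(1)$ would exceed $N$), so the exponent $2T - N$ in $(c_2')^{2T-N}$ is negative; this factor, which sits in $\epsilon_4$, is really $1/(c_2')^{N-2T}$ and hence contributes the small quantity $(c_2')^{N-2T}$ to the numerator of $\Delta$. Writing $c_2 = \alpha \cdot c_2'$ with $\alpha = \sin(\pi/2^\mu)/\sin(\pi/2^\mu + \pi/2^{4\mu}) \leq 1$, and using $\theta < T$ to ensure $\alpha^{T-\theta+1} \leq 1$, I arrive at
\begin{equation*}
\Delta \leq \frac{q^{N-2T-\mathfrak{g}_2}\,(c_2')^{N-T-\theta+1}}{2^{(N-\theta-T-1)(5\mu+1)+\mu}}.
\end{equation*}

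Next I would apply the estimate $c_2' \leq 2^{-1/2^{2\mu+2} + 4/q}$ (Proposition~A.1 of~\cite{BDIR19}, applied with $p=q$ since the ambient field is $\F_q$ with $q$ prime), and exploit the hypothesis $\mu \geq \tfrac{1}{5}\log q$, which gives $5\mu + 1 > \log q$ and hence $2^{(N-\theta-T-1)(5\mu+1)} \geq q^{N-\theta-T-1}$. Cancelling powers of $q$ in the numerator against this bound in the denominator, and then taking $\log_2$, one obtains
\begin{equation*}
\log_2 \Delta \leq (\theta - T + 1 - \mathfrak{g}_2)\log q + \frac{4(N-T-\theta+1)}{q} - \mu.
\end{equation*}
Since $\theta < T$ the coefficient of $\log q$ is at most $-\mathfrak{g}_2$, and recalling $\mathfrak{g}_2 \approx N/(q-1)$ from the Garcia-Stichtenoth tower, the right-hand side is asymptotic to $N\bigl(-\tfrac{\log q}{q-1} + \tfrac{4}{q}\bigr) + O(1)$.

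The final step is to verify that the coefficient of $N$ is strictly negative, i.e.\ $q\log q > 4(q-1)$. This holds for $q > 16$, which is precisely the range fixed in Lemma~\ref{lem:mt1better}. Consequently $\log_2 \Delta \to -\infty$ and $\epsilon_3 < \epsilon_4$ for all sufficiently large $N$. The main obstacle I anticipate is the bookkeeping of the $\mu$-dependent exponents: the hypothesis $\mu \geq \tfrac{1}{5}\log q$ is exactly what lets the factor $2^{(N-\theta-T-1)(5\mu+1)}$ absorb the bulk of $q^{N-2T-\mathfrak{g}_2}$, leaving only the genus term $\mathfrak{g}_2 \log q$ as the asymptotic witness that $\Delta \to 0$. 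Without that hypothesis the two factors would fail to balance, and the simple comparison above would break down.
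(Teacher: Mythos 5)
Your proof is correct, and its skeleton is the same as the paper's: form $\Delta=\epsilon_3/\epsilon_4$, use $\mu\geq\tfrac{1}{5}\log q$ (so $2^{5\mu+1}\geq 2q$) to absorb the powers of $q$ into the powers of $2^{5\mu+1}$ coming from $\epsilon_4$, and let the genus factor $q^{-\mathfrak{g}_2}$ with $\mathfrak{g}_2\approx N/(q-1)$ supply the decay in $N$. The only substantive divergence is in how the $N-2T$ copies of $c_2'$ are handled. The paper pairs each one with a $q$ and a $2^{5\mu+1}$, so that every non-genus factor of $\Delta$ is individually at most $1$ (e.g.\ $qc_2'/2^{5\mu+1}\leq c_2'/2\leq 1$) and no positive $\Theta(N)$ contribution ever arises; the genus factor alone then forces $\Delta<1$. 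You instead discard the helpful part of the estimate $c_2'\leq 2^{-1/2^{2\mu+2}+4/q}$ and keep only $(c_2')^{N-T-\theta+1}\leq 2^{4(N-T-\theta+1)/q}$, which injects a positive term $4N/q$ into $\log_2\Delta$ that must then be dominated by $-\mathfrak{g}_2\log_2 q\approx -N\log_2 q/(q-1)$; this costs you the extra numerical check $q\log_2 q>4(q-1)$, valid for $q>16$ (the regime of Lemma~\ref{lem:mt1better}, though not a hypothesis of the proposition as literally stated). In exchange, your version makes the exponential rate of decay of $\epsilon_3/\epsilon_4$ explicit, namely $\log_2\Delta\leq -N\bigl(\tfrac{\log_2 q}{q-1}-\tfrac{4}{q}\bigr)+O(1)$, whereas the paper only concludes $\Delta<1$. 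Your justification of the side conditions $N>2T$ and $N-T-\theta-1\geq 0$ (needed so the various exponents have the right sign) is correct and is in fact also tacitly required by the paper's own grouping.
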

\begin{proof}
Let $\Delta=\frac{\epsilon_3}{\epsilon_4}.$ Again, we aim to show that $\Delta<1.$ By a simple algebraic manipulation, we have

\begin{eqnarray*}
\Delta\leq \frac{c_2^\prime}{q^{\mathfrak{g}_2}\cdot 2^\mu}\cdot \left(\frac{c_2}{2^{5\mu+1}}\right)^{T-\theta+1}\cdot \left(\frac{q c_2^\prime}{2^{5\mu+1}}\right)^{N-2T}.
\end{eqnarray*}
It is easy to see that as $N$ is sufficiently large, the first two terms are at most $1.$ Using similar argument as before, we have $\frac{qc_2^\prime}{2^{5\mu+1}}\leq \frac{c_2^\prime}{2}.$ This can again be shown to be at most $1.$ This completes the proof that $\Delta<1.$
\end{proof}

Lastly, to compare the $\epsilon$ guarantees for $AGSh_{n,R_1,T}$ and $EAGSh_{n,R_2,T},$ we compare $\epsilon_1$ and $\epsilon_3.$
\begin{prop}\label{pro:1vs3}
Assume that $T<\frac{\sqrt{q}}{q-1}N-1$ and $\theta\geq 2T-\frac{\sqrt{q}}{q-1}N+1.$ If $q\geq 4, \epsilon_1<\epsilon_3.$
\end{prop}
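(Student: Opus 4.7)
The plan is to form the ratio $\epsilon_1/\epsilon_3$ and factor it as a product of a power of $q$ and a power of $c_1/c_2$, then use the two hypotheses on $T$ and $\theta$ to control the two exponents and a routine trigonometric bound to control the base $c_1/c_2$. Concretely, I would begin by writing
\begin{equation*}
\frac{\epsilon_1}{\epsilon_3} \;=\; q^{(N-T-\mathfrak{g}_1)-(N-2T-\mathfrak{g}_2)} \cdot \left(\frac{c_1}{c_2}\right)^{T-\theta+1} \;=\; q^a \cdot \left(\frac{c_1}{c_2}\right)^b,
\end{equation*}
where $a \triangleq T - (\mathfrak{g}_1 - \mathfrak{g}_2)$ and $b \triangleq T - \theta + 1$. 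A one-line partial-fractions simplification with $\mathfrak{g}_1 = N/(\sqrt{q}-1)$ and $\mathfrak{g}_2 = N/(q-1)$ collapses $\mathfrak{g}_1 - \mathfrak{g}_2$ to $\sqrt{q}\, N/(q-1)$, so $a = T - \sqrt{q}\,N/(q-1)$. The hypothesis $T < \sqrt{q}\,N/(q-1) - 1$ gives $a < -1$; the hypothesis $\theta \geq 2T - \sqrt{q}\,N/(q-1) + 1$ rearranges immediately to $b \leq -a$, i.e., $a + b \leq 0$; and since $\theta < T$ by the standing assumption of Main Theorem~\ref{mt:1}, we also have $b \geq 2 > 0$.

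The actual work is then the uniform estimate $c_1/c_2 < q$ for $q \geq 4$. I would use
\begin{equation*}
\frac{c_1}{c_2} \;=\; \frac{\sqrt{q}\,\sin(\pi/q)}{\sin(\pi/\sqrt{q})},
\end{equation*}
bound the numerator by $\sin x \leq x$, and bound the denominator below using Jordan's inequality $\sin x \geq (2/\pi)\,x$ valid for $x \in [0, \pi/2]$. The Jordan step requires $\pi/\sqrt{q} \leq \pi/2$, i.e., exactly $q \geq 4$, which is where the stated hypothesis is consumed. Together these give $c_1/c_2 \leq \pi/2$, and since $q \geq 4 > \pi/2$, the strict inequality $c_1/c_2 < q$ holds.

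To conclude, combining $c_1/c_2 < q$ with $b > 0$ gives $(c_1/c_2)^b < q^b$, so
\begin{equation*}
\frac{\epsilon_1}{\epsilon_3} \;<\; q^a \cdot q^b \;=\; q^{a+b} \;\leq\; q^0 \;=\; 1,
\end{equation*}
where the strict inequality comes from the base bound (using $b>0$) and the second inequality uses $a+b \leq 0$. The main obstacle is essentially just this trigonometric estimate, which is standard via Jordan's inequality and the concavity of $\sin$; everything else is bookkeeping that translates the quantitative hypotheses into the exponent inequalities $a < -1$ and $a+b \leq 0$. One thing worth double-checking during the write-up is whether the $O(1)$ terms hidden in $\mathfrak{g}_1, \mathfrak{g}_2$ can be absorbed into the slack $a < -1$ (rather than $a \leq 0$); since the slack in $a$ is at least $1$ while the $O(1)$ corrections are absolute constants, this is fine for $N$ large enough, matching the spirit of the companion Propositions~\ref{pro:1vs2} and~\ref{pro:3vs4}.
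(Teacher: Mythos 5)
Your proof is correct and takes essentially the same route as the paper's: both form the ratio $\epsilon_1/\epsilon_3$, use the hypothesis on $\theta$ to make the net exponent of $q$ nonpositive, and finish with a trigonometric estimate on the base (the paper bounds $\frac{\sin(\pi/q)}{\sqrt{q}\sin(\pi/\sqrt{q})}\leq \frac{1}{2\sqrt{2}}$ by monotonicity evaluated at $q=4$, whereas you use Jordan's inequality to get $c_1/c_2\leq \pi/2<q$; either suffices). The only differences are this minor choice of trig bound and a trivially different regrouping of the $q$-powers, so there is nothing to flag.
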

\begin{proof}
Let $\Delta=\frac{\epsilon_1}{\epsilon_3}.$ We aim to show that $\Delta<1.$ By a simple algebraic manipulation, we have
\begin{eqnarray*}
\Delta\leq q^{2T-\frac{\sqrt{q}}{q-1}N-\theta+1}\left(\frac{\sin\left(\frac{\pi}{q}\right)}{\sqrt{q}\sin\left(\frac{\pi}{\sqrt{q}}\right)}\right)^{T-\theta+1}.
\end{eqnarray*}
The assumptions on the values of $\theta, T$ and $N$ ensures that the first term is less than $1.$ Now note that $\sqrt{q}\sin\left(\frac{\pi}{\sqrt{q}}\right)$ is an increasing function on $q$ while $\sin\left(\frac{\pi}{q}\right)$ is a decreasing function on $q.$ Hence $\frac{\sin\left(\frac{\pi}{q}\right)}{\sqrt{q}\sin\left(\frac{\pi}{\sqrt{q}}\right)}\leq \frac{1}{2\sqrt{2}}<1$ which proves that $\Delta<1$ as required.
\end{proof}

Combining Propositions~\ref{pro:1vs2},~\ref{pro:3vs4} and~\ref{pro:1vs3}, we obtain the desired result.

\appendix
\section{Supplementary Material}

\subsection{Proof of Lemma~\ref{lem:psfgen}}\label{app:prooflempsfgen}
By Fourier Inversion Formula, the linearity of expectation and additivity of trace function, we have

\begin{align*}
\mathbb{E}_{\mathbf{x}\leftarrow C}\left[\prod_{i=1}^n f_i(x_i)\right]&=\mathbb{E}_{\mathbf{x}\leftarrow C}\left[\prod_{i=1}^n \sum_{\alpha_i\in \F_{p^{w }}} \hat{f}_i(\alpha_i)\cdot \overline{\chi_{\alpha_i}(x_i)}\right]\\
&= \mathbb{E}_{\mathbf{x}\leftarrow C}\left[\sum_{{\boldsymbol \alpha}\in \F_{p^{w }}^n}\prod_{i=1}^n \hat{f}_i(\alpha_i)\cdot \overline{\chi_{\alpha_i}(x_i)}\right]\\
&= \mathbb{E}_{\mathbf{x}\leftarrow C}\left[\sum_{{\boldsymbol \alpha}\in \F_{p^{w }}^n}\left(\prod_{i=1}^n \hat{f}_i(\alpha_i)\right)\cdot\left(\prod_{i=1}^n \omega_p^{-Tr(\alpha_i\cdot x_i)}\right)\right]\\
&=\sum_{{\boldsymbol \alpha}\in \F_{p^{w }}^n}\left(\prod_{i=1}^n \hat{f}_i(\alpha_i)\right)\cdot \mathbb{E}_{\mathbf{x}\leftarrow C}\left[\omega_p^{-Tr(\langle{\boldsymbol\alpha},\mathbf{x}\rangle)}\right].
\end{align*}

Next we consider $\mathbb{E}_{\mathbf{x}\leftarrow C}\left[\omega_p^{-Tr(\langle{\boldsymbol \alpha},\mathbf{x}\rangle)}\right]$ for various values of ${\boldsymbol\alpha}.$ Note that if ${\boldsymbol\alpha}\in C^\bot,$ we have $\langle {\boldsymbol\alpha},\mathbf{x}\rangle=0$ for any $\mathbf{x}\in C.$ Hence $\mathbb{E}_{\mathbf{x}\leftarrow C}\left[\omega_p^{-Tr(\langle{\boldsymbol\alpha},\mathbf{x}\rangle)}\right]=1$ for any ${\boldsymbol\alpha}\in C^\bot.$ So we focus the remainder of the proof for the case when ${\boldsymbol\alpha}\notin C^\bot.$ Consider $\varphi_{{\boldsymbol\alpha}}:C\rightarrow \F_p$ such that for any $\mathbf{x}\in C, \varphi_{{\boldsymbol\alpha}}(\mathbf{x})=Tr(\langle{\boldsymbol\alpha},\mathbf{x}\rangle).$ It is easy to see that $\varphi_{{\boldsymbol\alpha}}$ is $\F_p$-linear, that is, for any $\mathbf{x},\mathbf{y}\in C$ and $\lambda,\mu\in \F_p, \varphi_{{\boldsymbol\alpha}}(\lambda\mathbf{x}+\mu\mathbf{y})= \lambda\varphi_{{\boldsymbol\alpha}}(\mathbf{x})+\mu\varphi_{{\boldsymbol\alpha}}(\mathbf{y}).$ This shows that for any $z\neq z'\in \F_p$ such that $(\varphi_{{\boldsymbol\alpha}})^{-1}(z)$ and $(\varphi_{{\boldsymbol\alpha}})^{-1}(z')$ are both non-empty, we have $\left|(\varphi_{{\boldsymbol\alpha}})^{-1}(z)\right|=\left|(\varphi_{{\boldsymbol\alpha}})^{-1}(z)\right|.$ Next we prove that $\varphi_{\boldsymbol\alpha}$ is surjective. Since ${\boldsymbol\alpha}\notin C^\bot,$ there exists $\mathbf{x}'\in C$ such that $\langle {\boldsymbol\alpha},\mathbf{x}'\rangle\neq 0\in \F_{p^{w }}.$ Due to the linearity of inner product and $C$ along with the fact that $\F_{p^{w }}$ is a field, for any $y\in \F_{p^{w }},$ we can find an appropriate multiplier $\lambda\in \F_{p^{w }}$ such that $\lambda\mathbf{x}'\in C$ and $\langle {\boldsymbol\alpha},\lambda\mathbf{x}'\rangle=y.$ In particular, since the trace function is a surjective function from $\F_{p^{w }}$ to $\F_p,$ there exists $\mathbf{x}\in C$ such that $\varphi_{{\boldsymbol\alpha}}(\mathbf{x})=1.$ Then for any $z\in \F_p,$ it is easy to see that $z\mathbf{x}\in (\varphi_{{\boldsymbol\alpha}})^{-1}(z).$ This shows that for any $z\in \F_p, (\varphi_{{\boldsymbol\alpha}})^{-1}(z)$ is non-empty and they have the same size for all choices of $z,$ which is $p^{k-1}.$ Hence, if ${\boldsymbol\alpha}\notin C^{\bot},$
\begin{align*}
\mathbb{E}_{\mathbf{x}\leftarrow C}\left[\omega_p^{-Tr(\langle{\boldsymbol\alpha},\mathbf{x}\rangle)}\right]&=\frac{1}{|C|}\sum_{\mathbf{x}\in C} \omega_p^{-Tr(\langle{\boldsymbol\alpha},\mathbf{x}\rangle)}\\
&= \frac{1}{p^k}\cdot p^{k-1} \sum_{i=0}^{p-1} \omega_p^{-i}\\
&= 0
\end{align*}
where the last equality is based on the fact that $\omega_p$ is a root of the polynomial $1+x+\cdots+x^{p-1}.$

So we have
\begin{equation*}
\mathbb{E}_{\mathbf{x}\leftarrow C}\left[\omega_p^{-Tr(\langle{\boldsymbol\alpha},\mathbf{x}\rangle)}\right]=\left\{
\begin{array}{cc}
1,&\mathrm{~if~}{\boldsymbol\alpha}\in C^\bot\\
0,&\mathrm{~otherwise}
\end{array}
\right.
\end{equation*}
which completes our proof.
\subsection{Proof of Lemma~\ref{lem:newSDtoDualsum}}\label{app:prooflemnewsdtodualsum}
Recall that for any $S\subseteq\F_{p^{w }}^n$ and ${\boldsymbol\ell}\in \F_2^{\mu\times n},$
\begin{equation*}
Pr_{\mathbf{x}\leftarrow S}\left[{\boldsymbol\tau}(\mathbf{x})={\boldsymbol\ell}\right]=\mathbb{E}_{\mathbf{x}\leftarrow S}\left[\prod_{j}\mathds{1}_{\ell_j}(x_j)\right].
\end{equation*}
Hence,
\begin{align*}
SD({\boldsymbol\tau}(C),{\boldsymbol\tau}(\mathcal{U}_n))&=\frac{1}{2}\sum_{{\boldsymbol\ell}}\left|Pr_{\mathbf{x}\leftarrow C}\left({\boldsymbol\tau}(\mathbf{x})={\boldsymbol\ell}\right)-Pr_{\mathbf{x}\leftarrow \mathcal{U}_n}\left({\boldsymbol\tau}(\mathbf{x})={\boldsymbol\ell}\right)\right|\\
&=\frac{1}{2}\sum_{{\boldsymbol\ell}}\left|\mathbb{E}_{\mathbf{x}\leftarrow C}\left[\prod_j \mathds{1}_{\ell_j}(x_j)\right]-Pr_{\mathbf{x}\leftarrow \mathcal{U}_n}\left({\boldsymbol\tau}(\mathbf{x})={\boldsymbol\ell}\right)\right|\\
&=\frac{1}{2}\sum_{{\boldsymbol\ell}}\left|\sum_{{\boldsymbol\alpha}\in C^\bot}\prod_j\widehat{\mathds{1}_{\ell_j}}(\alpha_j)-Pr_{\mathbf{x}\leftarrow \mathcal{U}_n}\left({\boldsymbol\tau}(\mathbf{x})={\boldsymbol\ell}\right)\right|
\end{align*}
where the last equality is based on Lemma~\ref{lem:psfgen}.
Now note that when $\mathbf{x}\leftarrow \mathcal{U}_n,$ for each $j=1,\cdots, n, x_i$ is identically and uniformly distributed over $\F_{p^{w }}.$ Hence
\begin{align*}
Pr_{\mathbf{x}\leftarrow \mathcal{U}_n}\left({\boldsymbol\tau}(\mathbf{x})={\boldsymbol\ell}\right)&=\prod_{j=1}^n Pr_{x_j\in \F_{p^{w }}} \left(\tau^{(j)}(x_j)=\ell_j\right)\\
&=\prod_{j=1}^n\frac{\left|\left(\tau^{(j)}\right)^{-1}(\ell_j)\right|}{p^{w }}=\prod_{j=1}^n \widehat{\mathds{1}_{\ell_j}}(\mathbf{0}).
\end{align*}
Since $\mathbf{0}\in C^\bot,$
\begin{equation*}
SD({\boldsymbol\tau}(C),{\boldsymbol\tau}(\mathcal{U}_n))=\frac{1}{2}\sum_{{\boldsymbol\ell}}\left|\sum_{{\boldsymbol\alpha}\in C^\bot\setminus\left\{\mathbf{0}\right\}} \prod_{j=1}^n \widehat{\mathds{1}_{\ell_j}}(\alpha_j)\right|.
\end{equation*}

\subsection{Proof of Lemma~\ref{lem:cmpe}}\label{app:prooflemcmpe}
First, note that $\widehat{\mathds{1}_A}(0)=\mathbb{E}_x[\mathds{1}_A(x)\cdot \omega_p^0]=\frac{|A|}{p^{w }}.$ This proves the case when $\alpha=0.$ Now suppose that $\alpha\neq 0.$ Let $|A_i|=t_i.$ Note that since $\alpha\neq 0, \widehat{\mathds{1}_A}(\alpha) = \mathbb{E}_x[\mathds{1}_A(x)\cdot \omega_p^{Tr(\alpha x)}]=\frac{1}{p^{w }}\cdot \omega_p^{\alpha A}$ where $\alpha A$ has the same size as $A.$ So by Lemma~\ref{lem:newsumpthroot},
\begin{align*}
\sum_{i=1}^{2^\mu} \left|\widehat{\mathds{1}_{A_i}}(\alpha)\right|&=\frac{1}{p^{w }}\sum_{i=1}^{2^\mu}|\omega_p^{\alpha A_i}|\leq \frac{1}{p^{w }}\sum_{i=1}^{2^\mu} \frac{p^{w -1}}{\sin(\pi/p)}\cdot \sin(\pi t_i/p^{w })\\
&= \frac{1}{p\sin(\pi/p)}\cdot \sum_{i=1}^{2^\mu}\sin(\pi t_i/p^{w }).
\end{align*}
Note that since $\sin$ is a concave function between $[0,\pi],$ the sum is maximized if all $t_i=\frac{p^{w }}{2^\mu}.$ Hence
\begin{equation*}
\sum_{i=1}^{2^\mu} \left|\widehat{\mathds{1}_{A_i}}(\alpha)\right|\leq \frac{1}{p\sin(\pi/p)}\cdot \sum_{i=1}^{2^\mu}\sin(\pi t_i/p^{w })\leq \frac{2^\mu \sin (\pi/2^\mu)}{p\sin (\pi/p)}
\end{equation*}

\subsection{Proof of Lemma~\ref{lem:maxcmpe}}\label{app:prooflemmaxcmpe}
Note that for $\alpha\neq 0,$ we have $\widehat{\mathds{1}_{A_i}}(\alpha)=p^{-{w }} \omega_p^{\alpha A_i}$ which is well defined. Let $\alpha_i\in \F_{p^{w }}$ be an element that maximizes $|\widehat{\mathds{1}_{A_i}}(\alpha)|.$  Then $|\widehat{\mathds{1}_{A_i}}(\alpha_i)|=p^{-w }\omega_p^{\alpha_i A_i}=|\widehat{\mathds{1}_{\alpha_i A_i}}(1)|.$ For $i=1,\cdots, 2^\mu,$ let $B_i=\alpha_i A_i.$ Then we have $B_1,\cdots, B_{2^\mu}\subseteq \F_{p^{w }}$ such that $\sum_{i=1}^{2^\mu}|B_i|=p^{w }.$ So applying Lemma~\ref{lem:cmpe}, we have
\[\sum_{i=1}^{2^\mu} \max_{\alpha\neq 0} |\widehat{\mathds{1}_{A_i}}(\alpha)|=\sum_{i=1}^{2^\mu}|\widehat{\mathds{1}_{B_i}}(1)|\leq c_\mu.\]

\subsection{Proof of Lemma~\ref{lem:newSDboundpe}}\label{app:prooflemnewsdboundpe}
For simplicity of notation, let $SD=SD({\boldsymbol \tau}(C),{\boldsymbol\tau}(\mathcal{U}_n)).$ Recall that ${\boldsymbol\alpha}\in C^\bot\setminus\left\{\mathbf{0}\right\}$ if and only if there exists ${\boldsymbol\beta}\in \F_{p^{w }}^{n-k}\setminus\left\{\mathbf{0}\right\}$ such that ${\boldsymbol\alpha}={\boldsymbol\beta} H$ where for each $j, \alpha_j = \langle {\boldsymbol\beta},\mathbf{h}_j\rangle.$ By Lemma~\ref{lem:newSDtoDualsum} and Cauchy-Schwarz inequality, we have

\begin{eqnarray*}
SD&=&\frac{1}{2}\sum_{{\boldsymbol\ell}}\left|\sum_{{\boldsymbol\alpha}\in C^\bot\setminus\left\{\mathbf{0}\right\}} \prod_{j=1}^n \widehat{\mathds{1}_{\ell_j}}(\alpha_j)\right|=\frac{1}{2}\sum_{{\boldsymbol\ell}}\left|\sum_{{\boldsymbol\beta}\in \F_{p^{w }}^{n-k}\setminus\left\{\mathbf{0}\right\}} \prod_{j=1}^n \widehat{\mathds{1}_{\ell_j}}(\langle {\boldsymbol\beta},\mathbf{h}_j\rangle)\right|\\
&=&\frac{1}{2}\sum_{{\boldsymbol\ell}}\left|\sum_{{\boldsymbol\beta}\in \F_{p^{w }}^{n-k}\setminus\left\{\mathbf{0}\right\}} \left(\prod_{j\in I_1} \widehat{\mathds{1}_{\ell_j}}(\langle {\boldsymbol\beta},\mathbf{h}_j\rangle)\right)\cdot\left(\prod_{j\in I_2 \sqcup I_3} \widehat{\mathds{1}_{\ell_j}}(\langle {\boldsymbol\beta},\mathbf{h}_j\rangle)\right)\right|\\
&\leq& \frac{1}{2}\sum_{{\boldsymbol\ell}} \sqrt{\sum_{{\boldsymbol\beta}\in \F_{p^{w }}^{n-k}\setminus\left\{\mathbf{0}\right\}} \prod_{j\in I_1} \left|\widehat{\mathds{1}_{\ell_j}}(\langle {\boldsymbol\beta},\mathbf{h}_j\rangle)\right|^2}\\
&&\cdot \sqrt{\sum_{{\boldsymbol\beta}\in \F_{p^{w }}^{n-k}\setminus\left\{\mathbf{0}\right\}} \prod_{j\in I_2\sqcup I_3}\left| \widehat{\mathds{1}_{\ell_j}}(\langle {\boldsymbol\beta},\mathbf{h}_j\rangle)\right|^2}\\
&\leq& \frac{1}{2}\sum_{{\boldsymbol\ell}} \sqrt{\sum_{{\boldsymbol\beta}\in \F_{p^{w }}^{n-k}\setminus\left\{\mathbf{0}\right\}} \prod_{j\in I_1} \left|\widehat{\mathds{1}_{\ell_j}}(\langle {\boldsymbol\beta},\mathbf{h}_j\rangle)\right|^2}\\
&&\cdot \sqrt{\sum_{{\boldsymbol\beta}\in \F_{p^{w }}^{n-k}\setminus\left\{\mathbf{0}\right\}} \prod_{j\in I_2}\left| \widehat{\mathds{1}_{\ell_j}}(\langle {\boldsymbol\beta},\mathbf{h}_j\rangle)\right|^2\cdot \max_{{\boldsymbol\beta}\in \F_{p^{w }}^{n-k}\setminus\left\{\mathbf{0}\right\}}\prod_{j\in I_3}\left|\widehat{\mathds{1}_{\ell_j}}(\langle{\boldsymbol\beta},\mathbf{h}_j\rangle)\right|^2}
\end{eqnarray*}

Recall that since $C^\bot$ has minimum distance $d^\bot\leq k+1,$ any $n-d^\bot+1$ columns of $H$ has full rank. So in particular, since $|I_1|=|I_2|=n-d^\bot+1,$ for any $x=1,2,$ the function ${\boldsymbol\beta}\in \F_{p^{w }}^{n-k}     \mapsto\{\langle{\boldsymbol\beta},\mathbf{h}_j\rangle\}_{j\in I_x}\in \F_{p^{w }}^{n-d^\bot+1}$ is injective. So for any $x=1,2,$
\begin{eqnarray*}
\sum_{{\boldsymbol\beta}\in \F_{p^{w }}^{n-k}\setminus\left\{\mathbf{0}\right\}} \prod_{j\in I_x}\left|\widehat{\mathds{1}_{\ell_j}}\left(\left\langle{\boldsymbol\beta},\mathbf{h}_j\right\rangle\right)\right|^2&\leq& \sum_{\{{\boldsymbol\alpha}_j\}_{j\in I_x}\in \F_{p^{w }}^{n-d^\bot+1}}\prod_{j\in I_x}\left|\widehat{\mathds{1}_{\ell_j}}(\alpha_j)\right|^2\\
&=&\prod_{j\in I_x}\sum_{\alpha\in \F_p}\left|\widehat{\mathds{1}_{\ell_j}}(\alpha)\right|^2=\prod_{j\in I_x} \left\|\widehat{\mathds{1}_{\ell_j}}\right\|_2^2
\end{eqnarray*}
Applying this to the previous inequality, we have
\begin{align*}
SD&\leq \frac{1}{2}\sum_{{\boldsymbol\ell}}\left\|\prod_{j\in I_1}\widehat{\mathds{1}_{\ell_j}}\right\|_2 \cdot\left\|\prod_{j\in I_1}\widehat{\mathds{1}_{\ell_j}}\right\|_2\cdot\max_{{\boldsymbol\beta}\in \F_{p^{w }}^{n-k}\setminus\left\{\mathbf{0}\right\}}\prod_{j\in I_3}\left|\widehat{\mathds{1}_{\ell_j}}(\langle{\boldsymbol\beta},\mathbf{h}_j\rangle)\right|\\
&=\frac{1}{2}\cdot \left(\prod_{j\in I_1\cup I_2}\sum_{\ell_j}\left\|\widehat{\mathds{1}_{\ell_j}}\right\|_2\right)\cdot \sum_{\{\ell_j\}_{j\in I_3}}\max_{{\boldsymbol\beta}\in \F_{p^{w }}^{n-k}\setminus\left\{\mathbf{0}\right\}}\prod_{j\in I_3}\left|\widehat{\mathds{1}_{\ell_j}}(\langle{\boldsymbol\beta},\mathbf{h}_j\rangle)\right|
\end{align*}
We conclude the proof by proving that $\prod_{j\in I_1\sqcup I_2}\sum_{\ell_j}\left\|\widehat{\mathds{1}_{\ell_j}}\right\|_2\leq 2^{\mu(n-d^\bot+1)}.$ Note that to prove this, it is sufficient to prove the following proposition.

\begin{prop} \label{prop:fourier2normbound}
For any $j\in [n]$ and $\mu,$ we have $\sum_{\ell_j\in \F_2^\mu}\left\|\widehat{\mathds{1}_{\ell_j}}\right\|_2\leq 2^{\mu/2}.$
\end{prop}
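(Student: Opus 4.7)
\medskip

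\noindent\textbf{Proof plan for Proposition~\ref{prop:fourier2normbound}.}
The plan is to evaluate $\|\widehat{\mathds{1}_{\ell_j}}\|_2$ explicitly using Parseval's identity (Lemma~\ref{lem:PIFIF}), observe that the relevant preimages form a partition of $\F_{p^w}$, and then conclude with a single application of the Cauchy--Schwarz inequality.

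First, I would unpack the $2$-norm. By Parseval's identity, $\|\widehat{\mathds{1}_{\ell_j}}\|_2^2 = \|\mathds{1}_{\ell_j}\|_2^2 = \mathbb{E}_{x\leftarrow \F_{p^w}}\bigl[|\mathds{1}_{\ell_j}(x)|^2\bigr]$. Since $\mathds{1}_{\ell_j}$ is $\{0,1\}$-valued, this expectation equals $\mathbb{E}_{x\leftarrow \F_{p^w}}[\mathds{1}_{\ell_j}(x)]$, which in turn is $|A_{\ell_j}|/p^{w}$ with $A_{\ell_j} = (\tau^{(j)})^{-1}(\ell_j)$. Writing $a_{\ell_j} = |A_{\ell_j}|/p^{w}$, we have $\|\widehat{\mathds{1}_{\ell_j}}\|_2 = \sqrt{a_{\ell_j}}$, and since the sets $\{A_{\ell_j}\}_{\ell_j\in \F_2^\mu}$ partition $\F_{p^w}$, the coefficients satisfy $\sum_{\ell_j\in\F_2^\mu} a_{\ell_j} = 1$.

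Next I would invoke Cauchy--Schwarz applied to the vector $(\sqrt{a_{\ell_j}})_{\ell_j\in\F_2^\mu}$ against the all-ones vector of length $2^\mu$:
\begin{equation*}
\sum_{\ell_j\in \F_2^\mu}\sqrt{a_{\ell_j}} \;\leq\; \sqrt{2^\mu}\cdot \sqrt{\sum_{\ell_j\in \F_2^\mu} a_{\ell_j}} \;=\; \sqrt{2^\mu}\cdot 1 \;=\; 2^{\mu/2}.
\end{equation*}
Substituting $\sqrt{a_{\ell_j}} = \|\widehat{\mathds{1}_{\ell_j}}\|_2$ yields the claimed bound.

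There is no genuine obstacle here; the only subtle point is being careful that the partition property $\sum_{\ell_j} |A_{\ell_j}| = p^w$ holds because the leakage function $\tau^{(j)}$ is defined on all of $\F_{p^w}$, which is what makes both Parseval and Cauchy--Schwarz combine cleanly into the tight $2^{\mu/2}$ bound.
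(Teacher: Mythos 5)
Your proposal is correct and follows essentially the same route as the paper: Parseval's identity to get $\|\widehat{\mathds{1}_{\ell_j}}\|_2=\sqrt{|(\tau^{(j)})^{-1}(\ell_j)|/p^{w}}$, the partition property of the preimages, and then Cauchy--Schwarz (stated in the paper as the arithmetic-mean/quadratic-mean inequality, which is the same step). No gaps.
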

\begin{proof}
Recall that by Parseval's Identity in Lemma~\ref{lem:PIFIF}, we have $\|\widehat{\mathds{1}_{\ell_j}}\|_2 = \|\mathds{1}_{\ell_j}\|_2= \sqrt{\mathbb{E}_{x\leftarrow \F_{p^{w }}}[|\mathds{1}_{\ell_j}(x)|^2]}$ $=\sqrt{Pr_{x\in \F_{p^{w }}}[\mathds{1}_{\ell_j}(x)=1]}.$ For any $\ell\in \F_2^\mu,$ denote by $S_{\ell}=\{x\in \F_{p^{w }}:\mathds{1}_\ell(x)=1\}.$ It is easy to see that $\F_{p^{w }}=\bigsqcup_{\ell\in \F_2^\mu}S_\ell.$ Hence we have $\|\widehat{\mathds{1}_{\ell_j}}\|_2=\sqrt{\frac{|S_{\ell_j}|}{p^{w }}}, \sum_{\ell_j\in \F_2^\mu}\|\widehat{\mathds{1}_{\ell_j}}\|_2=\sum_{\ell_j\in \F_2^\mu}\sqrt{\frac{|S_{\ell_j}|}{p^{w }}}$ and $\sum_{\ell_j\in \F_2^\mu}|S_{\ell_j}|=p^{w }.$

Recall that for any non-negative real numbers $a_1,\cdots, a_{2^\mu},$ the relation between their arithmetic mean and quadratic mean tell us that
$\sum_{i=1}^{2^\mu} a_i$ is upper bounded by $\sqrt{2^\mu} \sqrt{\sum_{i=1}^{2^\mu} a_i^2}.$ So by setting the non-negative real numbers to be $\|\widehat{\mathds{1}_{\ell_j}}\|_2,$ we have
\[\sum_{\ell_j\in \F_2^\mu}\|\widehat{\mathds{1}_{\ell_j}}\|_2=\sum_{\ell_j\in \F_2^\mu}\sqrt{\frac{|S_{\ell_j}|}{p^{w }}}\leq 2^{\mu/2}\sqrt{\sum_{\ell_j\in \F_2^\mu} \frac{|S_{\ell_j}|}{p^{w }}}=2^{\mu/2}.\]
\end{proof}
Applying Proposition~\ref{prop:fourier2normbound}, we have the desired bound for $SD({\boldsymbol \tau}(C),{\boldsymbol \tau}(\mathcal{U}_n)).$

\subsection{Proof of Lemma~\ref{lem:newcmgenpe}}\label{app:prooflemnewcmgenpe}
Denote by $\eta=\sum_{{\boldsymbol\ell}\in \F_2^{\mu\times \kappa}} \max_{{\boldsymbol\alpha}\in D\setminus\{\mathbf{0}\}} \prod_{j=1}^{\kappa} \left|\widehat{\mathds{1}_{\ell_j}}(\alpha_j)\right|.$ Here we prove a bound for $\sum_{{\boldsymbol\ell}\in \F_2^{\mu\times \kappa}} \left|\widehat{\mathds{1}_{A_i}}(\alpha)\right|$ that we will use instead of the bounds  from Lemma~\ref{lem:cmpe}.
\begin{lem}\label{lem:newxi}
Let $\zeta_{p^{w }}:\{0,\cdots, p^{w }-1\}\rightarrow \mathbb{R}_{\geq 0}$ such that $\zeta_{p^{w }}(x) = \frac{p^{w } \cdot \sin(\pi x/p^{w })}{p\sin (\pi/p)}.$ We further let $\xi_{p^{w }}:\{0,\cdots,p^{w -1}\}\rightarrow \mathbb{R}_{\geq 0}$ be defined such that for $\xi_{p^{w }}(x)\triangleq \max(\zeta_{p^{w }}(x)/p^{w }, 2^{-(4\mu+1)}).$ Then the function $\xi_{p^{w }}$ has the following properties:
\begin{enumerate}
\item For every set $A$ of size $t,$
\begin{equation*}
\left|\widehat{\mathds{1}_A}(\alpha)\right|\leq\left\{
\begin{array}{cc}
\xi_{p^{w }}(t),&\mathrm{~if~} \alpha\neq 0\\
2^{4\mu+1}\cdot \xi_{p^{w }}(t), &\mathrm{~if~} \alpha=0.
\end{array}
\right.
\end{equation*}
\item Let $A_1,\cdots, A_{2^\mu}$ be any partition of $\F_{p^{w }}.$ Then,
\begin{equation*}
\sum_{i=1}^{2^\mu} \xi_{p^{w }}(|A_i|)\leq c_\mu^\prime.
\end{equation*}
\end{enumerate}
\end{lem}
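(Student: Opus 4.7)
The plan is to handle the two properties separately. Property $1$ will follow directly from Lemma~\ref{lem:newsumpthroot} together with a trivial bound at the zero character, whereas Property $2$ requires constructing a concave dominator of $\xi_{p^{w}}$ so that Jensen's inequality can be applied on the partition.

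For Property $1$: when $\alpha = 0$, $\widehat{\mathds{1}_A}(0) = \mathbb{E}_{x \in \F_{p^{w}}}[\mathds{1}_A(x)] = |A|/p^{w} \leq 1$, and since $\xi_{p^{w}}(|A|) \geq 2^{-(4\mu+1)}$ by definition, the desired bound $2^{4\mu+1}\xi_{p^{w}}(|A|) \geq 1$ is immediate. When $\alpha \neq 0$, write $\widehat{\mathds{1}_A}(\alpha) = p^{-w}\sum_{x \in A}\omega_p^{\mathrm{Tr}(\alpha x)} = p^{-w}\omega_p^{\alpha A}$, where $\alpha A$ has the same cardinality $|A|$; Lemma~\ref{lem:newsumpthroot} applied to $\alpha A$ gives $|\omega_p^{\alpha A}| \leq p^{w-1}\sin(\pi|A|/p^{w})/\sin(\pi/p)$, so dividing by $p^{w}$ yields $|\widehat{\mathds{1}_A}(\alpha)| \leq \zeta_{p^{w}}(|A|)/p^{w} \leq \xi_{p^{w}}(|A|)$.

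For Property $2$, the natural move is to introduce the shifted-sine function $\widehat{\xi}(x) := \sin(\pi x/p^{w} + \pi/2^{4\mu})/(p\sin(\pi/p))$, engineered so that $2^\mu \widehat{\xi}(p^{w}/2^\mu) = c_\mu^\prime$ by construction. One verifies $\xi_{p^{w}}(x) \leq \widehat{\xi}(x)$ on the relevant range: at $x = 0$, $\widehat{\xi}(0) = \sin(\pi/2^{4\mu})/(p\sin(\pi/p)) \geq 2^{1-4\mu}/(p\sin(\pi/p)) \geq 2^{1-4\mu}/\pi \geq 2^{-(4\mu+1)}$, using the estimate $\sin y \geq 2y/\pi$ on $[0, \pi/2]$, the bound $p\sin(\pi/p) \leq \pi$, and $4 \geq \pi$. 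On the region where $\xi_{p^{w}}(x) = \sin(\pi x/p^{w})/(p\sin(\pi/p))$ and the shifted argument $\pi x/p^{w} + \pi/2^{4\mu}$ still lies within $[0,\pi/2]$, the shift merely pushes the argument closer to the maximum of $\sin$, preserving the upper-bound property. Since $\widehat{\xi}$ is concave on the interval $[0, p^{w}(1 - 2^{-4\mu})]$ where its argument remains in $[0,\pi]$, Jensen's inequality on the partition would then give $\sum_i \xi_{p^{w}}(|A_i|) \leq \sum_i \widehat{\xi}(|A_i|) \leq 2^\mu \widehat{\xi}(p^{w}/2^\mu) = c_\mu^\prime$.

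The main obstacle is that $\widehat{\xi}$ genuinely undershoots $\xi_{p^{w}}$ near the peak $x = p^{w}/2$---where $\xi_{p^{w}}$ attains $1/(p\sin(\pi/p))$ but $\widehat{\xi}(p^{w}/2) = \cos(\pi/2^{4\mu})/(p\sin(\pi/p))$---and on the reflected tail $x > p^{w}/2$ where $\sin(\pi x/p^{w} + \pi/2^{4\mu})$ becomes smaller than $\sin(\pi x/p^{w})$. The resolution exploits two structural facts: (i) the constraint $\sum_i |A_i| = p^{w}$ forces at most one index to satisfy $|A_i| > p^{w}/2$, so at most a single term of the sum can violate the envelope; (ii) the discrepancy at such a term is of order $1 - \cos(\pi/2^{4\mu}) = O(2^{-8\mu})$, which is dominated by the built-in slack $c_\mu^\prime - c_\mu = 2^{\mu+1}\cos(\pi/2^\mu + \pi/2^{4\mu+1})\sin(\pi/2^{4\mu+1})/(p\sin(\pi/p))$ of order $\Theta(2^{-3\mu}/(p\sin(\pi/p)))$ baked into $c_\mu^\prime$ via the product-to-sum identity $\sin(\alpha+\beta) - \sin(\alpha) = 2\cos(\alpha + \beta/2)\sin(\beta/2)$. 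Handling the exceptional large index by the direct estimate $\xi_{p^{w}}(|A_i|) \leq 1/(p\sin(\pi/p))$ and restricting the Jensen application to the remaining indices (which all lie in $[0, p^{w}/2]$) absorbs this boundary discrepancy and closes the proof.
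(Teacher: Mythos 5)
Your treatment of Property 1 coincides with the paper's and is correct: for $\alpha\neq 0$ it is Lemma~\ref{lem:newsumpthroot} applied to $\alpha A$, and for $\alpha=0$ one only needs $|\widehat{\mathds{1}_A}(0)|=|A|/p^{w}\leq 1=2^{4\mu+1}\cdot 2^{-(4\mu+1)}\leq 2^{4\mu+1}\xi_{p^{w}}(t)$. For Property 2 your route genuinely differs from the paper's: you \emph{shift} the sine's argument, via $\widehat{\xi}(x)=\sin(\pi x/p^{w}+\pi/2^{4\mu})/(p\sin(\pi/p))$, whereas the paper \emph{clips} it from below, writing $\xi_{p^{w}}(t_i)\leq\frac{1}{p\sin(\pi/p)}\sin(b_i)$ with $b_i=\frac{\pi}{p^{w}}\max(t_i,p^{w}/2^{4\mu})$, applying concavity of $\sin$ on $[0,\pi]$ to the points $b_i$, and using $\sum_i b_i\leq\pi(1+2^{-3\mu})$ to land on $c_\mu^\prime$. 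The clipped majorant dominates $\xi_{p^{w}}$ on all of $[0,p^{w}/2]$, so the paper needs no case analysis at the peak; the entire difficulty you wrestle with is an artifact of the shifted envelope.

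The gap is in your patch for the region where $\widehat{\xi}$ fails to dominate. First, that region is $\left\{x\geq p^{w}\left(\frac{1}{2}-2^{-(4\mu+1)}\right)\right\}$, not $\{x>p^{w}/2\}$ (since $\sin a\geq\sin(a+s)$ exactly when $a\geq(\pi-s)/2$), and two blocks of the partition can lie in it simultaneously; so ``at most a single term can violate the envelope'' is false. Second, and decisively, the closing inequality --- crude bound $1/(p\sin(\pi/p))$ on the exceptional term(s) plus Jensen on the remaining ones is at most $c_\mu^\prime$ --- is asserted rather than proved, and it is exactly where the statement is tight. For $\mu=1$ it cannot be proved: taking $|A_1|=(p^{w}-1)/2$ and $|A_2|=(p^{w}+1)/2$ gives $\sum_i\xi_{p^{w}}(|A_i|)=2\cos(\pi/(2p^{w}))/(p\sin(\pi/p))$, which exceeds $c_1^\prime=2\cos(\pi/16)/(p\sin(\pi/p))$ as soon as $p^{w}>8$, so no absorption argument closes that case. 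For $\mu\geq2$ your plan can be completed, but only by allowing two exceptional indices and explicitly verifying $2/(p\sin(\pi/p))+(2^{\mu}-2)\,\widehat{\xi}(\cdot)\leq c_\mu^\prime$; that computation is the substance of the step and is missing. (Also, the $O(2^{-8\mu})$ discrepancy estimate holds only at $x=p^{w}/2$ exactly; on the reflected tail $\widehat{\xi}$ can even go negative and the discrepancy is $\Theta(2^{-4\mu})$, though your crude-bound fallback sidesteps this.)
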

First we prove Lemma~\ref{lem:newxi} before using it to complete the proof of Lemma~\ref{lem:newcmgenpe}.
\begin{proof}[Proof of Lemma~\ref{lem:newxi}]
\begin{enumerate}
\item Note that since $\xi_{p^{w }}(t)\geq \zeta_{p^{w }}(t),$ the inequality when $\alpha\neq 0$ follows from Lemma~\ref{lem:newsumpthroot}. Noting that $\left|\widehat{\mathds{1}_A}(0)\right|=\frac{|A|}{p^{w }}$ and $2^{4\mu+1}\cdot \xi_{p^{w }}(x)\geq 2^{4\mu+1}\cdot 2^{-(4\mu+1)}=1,$ the inequality for the case $\alpha=0$ directly follows.
\item First we note that $\zeta_{p^{w }}(p^{w }/2^{4\mu})/{p^{w }} = \frac{\sin (\pi/2^{4\mu})}{p\sin(\pi/p)}.$ Note that $x-\sin(x)$ is a non-decreasing function and it is zero only if $x=0.$ So $p\sin(\pi/p)\leq p\cdot \frac{\pi}{p}=\pi\leq 4.$ We further note that $\frac{\sin x}{x}$ is a decreasing function for $x\in(0,\pi/2]$ by the same reason as above. This means that for any $x<y, \frac{\sin x}{x} \geq \frac{\sin y}{y}$ or equivalently, $\frac{\sin x}{\sin y}\geq \frac{x}{y}.$ So setting $x=\frac{\pi}{2^{4\mu}}$ and $y=\frac{\pi}{2},$ we have $\sin(\pi/2^{4\mu})\geq \frac{2}{\pi}\cdot \frac{\pi}{2^{4\mu}}= 2^{-4\mu+1}.$ So $\frac{\zeta_{p^{w }}(p^{w }/2^{4\mu})}{p^{w }} \geq 2^{-(4\mu+1)}.$ Hence we have $\xi_{p^{w }}(x)=\max\left(\frac{\zeta_{p^{w }}(x)}{p^{w }},2^{-(4\mu+1)}\right)\leq\max\left(\frac{\zeta_{p^{w }}(x)}{p^{w }},\frac{\zeta_{p^{w }}(p^{w }/2^{4\mu})}{p^{w }}\right)$ and noting that $\zeta_{p^{w }}(x)$ is an increasing function for $x\in[0,\pi/2],$ we have $\xi_{p^{w }}(x)\leq \frac{1}{p^{w }} \zeta_{p^{w }}\left(\max\left(x,\frac{p^{w }}{2^{4\mu}}\right)\right).$ Now suppose that $|A_i|=t_i$ where $\sum_{i=1}^{2^\mu} t_i = p^{w },$ we have
\begin{align*}
\sum_{i=1}^{2^\mu} \xi_{p^{w }}(t_i)&\leq \frac{1}{p^{w }}\sum_{i=1}^{2^\mu}\zeta_{p^{w }}\left(\max\left(t_i,\frac{p^{w }}{2^{4\mu}}\right)\right)\\
&=  \frac{1}{p \sin(\pi/p)}\sum_{i=1}^{2^\mu}\sin\left(\frac{\pi}{p^{w }}\cdot\max\left(t_i,\frac{p^{w }}{2^{4\mu}}\right)\right)
\end{align*}
We note that due to the concavity of the sine function in the range $[0,\pi],$ for any $a_1,\cdots, a_{2^\mu}\in [0,\pi],\sum \sin(a_i)\leq 2^\mu \sin\left(\frac{\sum a_i}{2^\mu}\right).$ So we have
\begin{align*}
\sum_{i=1}^{2^\mu} \xi_{p^{w }}(t_i)&\leq \frac{2^\mu}{p \sin(\pi/p)}\sin\left(\frac{1}{2^\mu}\cdot \frac{\pi}{p^{w }}\cdot\left(\sum_{i=1}^{2^\mu}\max\left(t_i,\frac{p^{w }}{2^{4\mu}}\right)\right)\right)\\
&=\frac{2^\mu}{p\sin(\pi/p)}\sin\left(\frac{\pi}{2^\mu}\left(\max\left(1,\frac{2^\mu}{2^{4\mu}}\right)\right)\right)\\
&\leq\frac{2^\mu}{p\sin(\pi/p)}\sin\left(\frac{\pi}{2^\mu}\left(1+\frac{2^\mu}{2^{4\mu}}\right)\right)\\
&=\frac{2^\mu \sin(\pi/2^\mu+\pi/2^{4\mu})}{p\sin \pi/p}=c_\mu^\prime.
\end{align*}
\end{enumerate}
This completes the proof for Lemma~\ref{lem:newxi}.
\end{proof}
Now we continue the proof of Lemma~\ref{lem:newcmgenpe}. For any $j$ and $\ell_j\in \F_2^\mu,$ we denote $t_{\ell_j,j}=|\tau_j^{-1}(\ell_j)|.$ By the first claim in Lemma~\ref{lem:newxi}, noting that $wt_H({\boldsymbol\alpha})\geq d$ for any ${\boldsymbol\alpha}\in D\setminus\{\mathbf{0}\},$
\begin{align*}
\eta&\leq \sum_{{\boldsymbol\ell}\in \F_2^{\mu\times \kappa}} \max_{{\boldsymbol\alpha}\in D\setminus\{\mathbf{0}\}}\prod_{j=1}^{\kappa} \xi_{p^{w }}(t_{\ell_j,j})\cdot(2^{4\mu+1})^{\mathds{1}_0(\alpha_j)}\\
&= \sum_{{\boldsymbol\ell}\in \F_2^{\mu\times \kappa}} \prod_{j=1}^{\kappa} \xi_{p^{w }}(t_{\ell_j,j})\cdot \max_{{\boldsymbol\alpha}\in D\setminus\{\mathbf{0}\}}\prod_{j=1}^\kappa 2^{(4\mu+1)(\mathds{1}_0(\alpha_j)}\\
&=\sum_{{\boldsymbol\ell}\in \F_2^{\mu\times \kappa}} \prod_{j=1}^{\kappa} \xi_{p^{w }}(t_{\ell_j,j})\cdot \max_{{\boldsymbol\alpha}\in D\setminus\{\mathbf{0}\}} 2^{(4\mu+1)(\kappa-wt_H({\boldsymbol\alpha}))}\\
&\leq\sum_{{\boldsymbol\ell}\in \F_2^{\mu\times \kappa}} \prod_{j=1}^{\kappa} \xi_{p^{w }}(t_{\ell_j,j})\cdot \max_{{\boldsymbol\alpha}\in D\setminus\{\mathbf{0}\}} 2^{(4\mu+1)(\kappa-d)}\\
&= 2^{(4\mu+1)(\kappa-d)}\cdot \prod_{j=1}^\kappa \sum_{\ell_j\in\F_2^m}\xi_{p^{w }}(t_{\ell_j,j}).
\end{align*}
Note that for any $j, \sum_{\ell_j\in \F_{2^m}} t_{\ell_j,j}=p^{w }.$ So using $\{\tau_j^{-1}(\ell_j)\}_{\ell_j\in \F_2^m}$ as a partition of $\F_{p^{w }},$ by the second claim of Lemma~\ref{lem:newxi}, we have
\begin{equation*}
\prod_{j=1}^\kappa \sum_{\ell_j\in\F_2^m}\xi_{p^{w }}(t_{\ell_j,j})\leq \prod_{j=1}^\kappa c_\mu^\prime= (c_\mu^\prime)^\kappa.
\end{equation*}
Combining this inequality with the upper bound of $\eta$ above, we have the desired inequality.

\end{document}